\pgfplotsset{compat=newest}
\def\namedlabel#1#2{\begingroup
    #2%
    \def\@currentlabel{#2}%
    \phantomsection\label{#1}\endgroup
}
\def\QED{\mbox{\rule[0pt]{1.5ex}{1.5ex}}}
\newtheorem{theorem}{\bf{Theorem}}
\newtheorem{fact}{Fact}
\newcommand{\supp}{\operatorname{supp}}
\newcommand{\RNum}[1]{\uppercase\expandafter{\romannumeral #1\relax}}
\newcommand{\beq}{\begin{equation}}
\newcommand{\enq}{\end{equation}}
\newcommand{\bel}{\begin{lemma}}
\newcommand{\enl}{\end{lemma}}
\newcommand{\bet}{\begin{theorem}}
\newcommand{\ent}{\end{theorem}}
\newcommand{\tr}{\mathrm{Tr}}
\newcommand{\nn}{\nonumber}
\newcommand{\ketbra}[1]{|#1\rangle\langle#1|}
\newcommand{\customfootnotetext}[2]{{
  \renewcommand{\thefootnote}{#1}
  \footnotetext[0]{#2}}}
\newcommand{\eps}{\varepsilon}
\newcommand*{\bbR}{\mathbb{R}}
\newcommand*{\bbD}{\mathbb{D}}
\newcommand*{\bbI}{\mathbb{I}}
\newcommand*{\bbP}{\mathbb{P}}
\newcommand*{\bbQ}{\mathbb{Q}}
\newcommand*{\bbE}{\mathbb{E}}
\newcommand*{\cP}{\mathcal{P}}
\newcommand*{\cR}{\mathcal{R}}
\newcommand*{\cH}{\mathcal{H}}
\newcommand*{\cM}{\mathcal{M}}
\newcommand*{\cF}{\mathcal{F}}
\newcommand*{\cD}{\mathcal{D}}
\newcommand*{\cK}{\mathcal{K}}
\newcommand*{\cN}{\mathcal{N}}
\newcommand*{\cT}{\mathcal{T}}
\newcommand*{\cX}{\mathcal{X}}
\newcommand*{\cE}{\mathcal{E}}
\newcommand*{\cY}{\mathcal{Y}}
\newcommand{\ket}[1]{|#1 \rangle}
\mathchardef\mhyphen="2D
\newcommand{\norm}[2]{{\left\lVert #1 \right\rVert}_{#2}}
\newcommand{\abs}[1]{\left\vert#1\right\vert}
\newcommand*{\rom}[1]{\expandafter\@slowromancap\romannumeral #1@}
\mathchardef\mhyphen="2D
\newlist{steps}{enumerate}{1}
\setlist[steps, 1]{leftmargin = 1.1cm, label = Step \arabic*.}
\newtheorem{remark}{Remark}
\newtheorem{definition}{Definition}
\newtheorem{claim}{Claim}
\newtheorem{lemma}{Lemma}
\newtheorem{corollary}{Corollary}
\newtheorem{proposition}{Proposition}
\begin{document}

\title{Quantum Information Ordering and Differential Privacy}

\author{
  Naqueeb Ahmad Warsi\textsuperscript{$*$}, Ayanava Dasgupta\textsuperscript{$*$} and  Masahito Hayashi\textsuperscript{$\dagger$}

}

\customfootnotetext{$*$}{
Indian Statistical Institute,
Kolkata 700108, India.
Email: 
{\sf 
 naqueebwarsi@isical.ac.in, ayanavadasgupta\_r@isical.ac.in
}
}

\customfootnotetext{$\dagger$}{School of Data Science, The Chinese University of Hong Kong, Shenzhen, Longgang District, Shenzhen, 518172, China

International Quantum Academy, Futian District, Shenzhen 518048, China

Graduate School of Mathematics, Nagoya University, Nagoya, 464-8602, Japan.
Email: 
{\sf hmasahito@cuhk.edu.cn}}

\maketitle
\begin{abstract}
    We study quantum differential privacy (QDP) by defining a notion of the order of informativeness between pairs of quantum states. In particular, we show that if the hypothesis testing divergence of one pair dominates over that of the other pair, then this dominance holds for every $ f$-divergence. This approach completely characterizes $(\eps,\delta)$-QDP mechanisms by identifying the most informative $(\eps,\delta)$-DP quantum state pairs. We apply this to study precise limits for privatized hypothesis testing and privatized quantum parameter estimation, including tight upper-bounds on the quantum Fisher information under QDP. Finally, we establish near-optimal contraction bounds for differentially private quantum channels with respect to the hockey-stick divergence.
\end{abstract}

\begin{IEEEkeywords}
hypothesis testing, quantum differential privacy, Blackwell's ordering, SLD Fisher's information, hockey stick R\'enyi divergence.
\end{IEEEkeywords}

\section{Introduction}\label{sec_intro}

The trade-off between extracting useful insights by an investigator and preserving a respondent's privacy lies at the heart of modern quantum machine learning. While classical Differential Privacy (DP) \cite{dwork2006calibrating,dwork2006} provides a rigorous framework to limit adversarial inference via hypothesis testing, extending these guarantees to the quantum regime presents unique challenges. Recent advances in Quantum Differential Privacy (QDP) \cite{ZY2017} have made non-trivial attempts \cite{DHLTL21,HRF23,AK25,Nuradha24,Nuradha25} to understand differentially private mechanisms in the quantum setting. In this manuscript, we advance the understanding of quantum differential privacy by grounding QDP in the geometry of asymmetric hypothesis testing. We establish a universal information ordering for differentially private mechanisms, proving that the entire privacy regime can be analyzed by a single ``most-informative'' pair of quantum states. Leveraging this, we derive exact, mechanism-agnostic limits for different statistical tasks: the optimal error exponent in hypothesis testing, the maximum achievable Fisher information for parameter estimation, and the fundamental contraction of distinguishability under private quantum channels.

{\em Framework for Differential Privacy.---}
In the quantum setting, differential privacy constrains the \emph{output} which the quantum algorithms produce. Formally, consider a scenario where a respondent prepares a quantum system in state $\tau \in \{\rho, \sigma\}$. The pair $(\rho, \sigma)$ satisfies $(\eps, \delta)$-DP if, for any general quantum measurement (POVM) performed by an investigator, the probability distributions of the measurement outcomes are statistically close. Thus, as defined in \cite{ZY2017} for fixed $\eps \geq 0$ and $\delta \in [0,1]$, $(\rho, \sigma)$ is $(\eps, \delta)$-DP if they satisfy the following.
\begin{definition}\label{def_DP_state}
   For every POVM measurement $0 \preceq \Lambda \preceq \mathbb{I}$, the following holds:
    \begin{equation}
        \begin{split}
            \tr[\Lambda \rho] &\leq e^{\eps} \tr[\Lambda \sigma]+ \delta,\\
        \tr[\Lambda \sigma] &\leq e^{\eps} \tr[\Lambda \rho]+ \delta.
        \end{split}\label{def_DP_state_eq}
    \end{equation}
    Further, we denote $\mathfrak{D}_{(\eps,\delta)}$ to be the collection of all pairs of quantum states that satisfy \eqref{def_DP_state_eq}. Moreover, for $\delta = 0$, we denote $(\rho,\sigma)$ to be pure $\eps$-DP or just $\eps$-DP i.e. $(\rho,\sigma) \in \mathfrak{D}_{(\eps,0)}$.
\end{definition}

For the case when $\delta = 0$, any value of $\eps$ satisfying Eq. \eqref{def_DP_state_eq} must be at least $\max\{D_{\max}(\rho\|\sigma),\, D_{\max}(\sigma\|\rho)\},$ where $D_{\max}$ denotes the max-relative entropy. As a consequence, the required $\eps$ is typically very large, which severely weakens privacy as discussed in the section below. To mitigate this issue, one introduces a positive additive constant $\delta$, allowing the permissible value of $\eps$ to be significantly reduced. Further in Eq. \eqref{def_DP_state_eq}, if we consider only binary POVMs then it motivates us to analyze QDP from the point of view of binary quantum hypothesis testing.

\noindent {\em The Operational Scenario: The Respondent and the Investigator.---}
This manuscript is based on the strategic interaction between two distinct parties: the \textbf{Respondent} (the data owner) and the \textbf{Investigator} (the adversary).

The \textbf{Respondent} serves as the gatekeeper of a sensitive database containing potential datasets $S$ and $T$, which differ solely by the presence of a single individual $A$, such that $S \setminus (S \cap T) = \{A\}$. To make data accessible while maintaining confidentiality, the Respondent instantiates a statistical \textbf{oracle} (the mechanism). This oracle operates as a conditional distribution $\mathbb{P}^{(D)}(O\mid I)$---or in the quantum setting, a quantum channel---generating an output $O$ for a specific query $I$ based on the underlying dataset $D \in \{S,T\}$. The Respondent's primary objective is to configure this oracle to satisfy strict differential privacy constraints, effectively masking the specific identity of the dataset.

The \textbf{Investigator} operates as the \textbf{Adversary} in this setting. They interact with the Respondent's oracle by issuing queries and processing the resulting outputs. Their goal is to execute a binary hypothesis test to breach privacy: specifically, to distinguish whether the underlying dataset is $S$ or $T$, thereby revealing the presence of individual $A$. The central thesis of this work is that the ``informativeness'' of the Respondent's oracle dictates the success of the Investigator: the more useful the oracle is for legitimate statistical utility, the more vulnerable it is to this adversarial attack. We structure our contributions in three steps corresponding to this dynamic: the privacy limit (comparing the Respondent's oracles), the adversarial extraction limit (the Investigator's power), and the channel analysis (the properties of the Respondent's defense mechanism).

{\em First step: Privacy.}
The primary objective of differential privacy is to ensure that the outputs of the oracle corresponding to two neighboring inputs are difficult for the Investigator to distinguish. This implies a direct inverse relationship: the more informative the oracle is about the datasets, the less private it is. To formalize this, we require a rigorous method to compare the ``informativeness'' of different oracles. We consider a pair of oracles modeled as $\bbP^{(D)}(O\mid I)$ and $\bbQ^{(D)}(O\mid I)$. We say that $\bbP^{(D)}(O\mid I)$ is more informative than $\bbQ^{(D)}(O\mid I)$ (represented as the order $\bbP^{(D)}(O\mid I) \succeq \bbQ^{(D)}(O\mid I)$ or simply $\bbP^{(D)}(O\mid I)$ dominates $\bbQ^{(D)}(O\mid I)$) if the pair $\{\bbP^{(S)}(O\mid I),\bbP^{(T)}(O\mid I)\}$ is statistically easier to  distinguish as compared to the pair $\{\bbQ^{(S)}(O\mid I),\bbQ^{(T)}(O\mid I)\}$.

In $1951$, Blackwell \cite{Blackwell51,BG54} formalized this notion of comparing ``informativeness'' for classical statistical experiments. He showed that $\bbP^{(D)}(O\mid I) \succeq \bbQ^{(D)}(O\mid I)$ if and only if there exists a Markov kernel (stochastic map) $\cM$ such that $\bbQ^{(S)}(O\mid I) = \cM(\bbP^{(S)}(O\mid I))$ and $\bbQ^{(T)}(O\mid I) = \cM(\bbP^{(T)}(O\mid I))$. Further, using the data-processing inequality (monotonicity property) we have $D_{f}(\bbP^{(S)}(O\mid I)\| \bbP^{(T)}(O\mid I)) \geq D_{f}(\bbQ^{(S)}(O\mid I)\| \bbQ^{(T)}(O\mid I))$, where $D_f$ is the $f$-divergence \cite{Sason_2016}. For the case when these oracles are differentially private (with the same privacy parameters), it is obvious that the above monotonicity would help us obtain security bounds of the less informative in terms of the one that is more informative.

Likewise, in the quantum domain, we consider two quantum oracles whose outputs are $\{\rho^{S},\rho^{T}\}$ and $\{\sigma^{S},\sigma^{T}\}$. We say that $\{\rho^{S},\rho^{T}\}$ is more informative than $\{\sigma^{S},\sigma^{T}\}$ if the former is statistically easier for the Investigator to distinguish. However, a significant theoretical hurdle arises here: unlike the classical case, it follows from \cite{GLW24} and \cite{Matsumoto2014} that the statistical ordering $\{\rho^{S},\rho^{T}\} \succeq \{\sigma^{S},\sigma^{T}\}$ does \emph{not} imply the existence of a completely positive trace-preserving (CP-TP) map---or even a positive trace-preserving map---that transforms the former into the latter. As a consequence, we cannot rely on the standard data-processing inequalities of quantum $f$-divergences \cite{Hirche_2024} to transfer bounds. We overcome this in Theorem \ref{theo_QLDP_approx_F_div_bound} by proving that an explicit channel (a CP-TP map) construction is unnecessary. We show that if the quantum hockey stick divergence \cite{Sharma09} between the oracle outputs $\{\rho^{S},\rho^{T}\}$ dominates $\{\sigma^{S},\sigma^{T}\}$ for every parameter, then the inequality $D_{f}(\rho^{S}\|\rho^{T}) \geq D_{f}(\sigma^{S}\|\sigma^{T})$ holds universally. This result allows us to define a ``worst-case'' (most informative) pair of states that characterizes the upper bound on the Investigator's adversarial capability.

{\em Second step: Information extraction under privacy constraints.}
We now analyze, under the $(\eps,\delta)$-DP constraints, the fundamental limits of the \textbf{Investigator's} ability to extract information from \textbf{Respondent's} data. Here, we treat the Investigator as an adversary whose goal is to infer the hidden parameter or state. We investigate this statistical inference using our quantum $f$-divergence dominance result (Theorem \ref{theo_QLDP_approx_F_div_bound}). The Investigator's success in guessing the hidden parameter is strictly limited by the privacy constraints imposed by the mechanism. We examine this limit through two adversarial viewpoints: binary hypothesis testing and parameter estimation.

In the former scenario, we begin with the one-shot setting, where the Investigator attempts to distinguish between two nearby sources $\rho_{\theta_0}$ and $\rho_{\theta_1}$ (parametrized by two private parameters $\theta_1$ and $\theta_2$) after their encoding has been privatized through an $(\eps,\delta)$-DP mechanism. This scenario exposes the operational limit of privacy breaches. We show that, among all $(\eps,\delta)$-DP mechanisms, the Investigator's optimal performance—quantified by the \emph{hypothesis testing divergence}—is maximized by the ``weakest'' (most informative) privatized mechanism permitted by the definition. This optimality carries over to KL, hockey stick R\'enyi (see Definition \ref{def_quant_renyi_hockey}), and measured R\'enyi divergences. We then extend the analysis to the asymptotic regime and establish that this same privatized mechanism determines the optimal error exponent. This yields a unified characterization: the privacy requirement compresses the geometry of the state space, imposing a fundamental ceiling on the Investigator's probability of successfully distinguishing hypotheses.

In the regime of parameter estimation, the limit of the Investigator’s ability to pinpoint the private parameter is measured by the Fisher information. In the quantum case, we use the Symmetric Logarithmic Derivative (SLD) Fisher information \cite{Hiai1991}. By leveraging the idea of our ``weakest'' (most informative) private method, we determine the exact maximum SLD Fisher information the Investigator can obtain while the mechanism respects the $(\eps, \delta)$-DP rule. This result improves on past studies that only looked at the simpler case where $\delta=0$ \cite{YH20}, and sets a hard physical limit on how accurately an adversarial Investigator can estimate sensitive parameters under the privacy condition.

{\em Third step: The Quantum Channel.}
Finally, we analyze the mechanism itself by studying quantum channels which are $(\eps,\delta)$-DP. From the adversarial perspective, the channel acts as a defense mechanism that suppresses the distinguishability of inputs. We quantify this suppression via the \emph{contraction of divergence}---comparing the divergence of the output distributions induced by $(\eps,\delta)$-DP channels to the divergence of the input states; see, for example, \cite{Asoodeh24}, \cite{ZA23}, and the references therein.
This contraction is studied using an integral representation involving the hockey stick divergence \cite{Sason_2016}. We leverage our framework to first establish an almost tight upper bound on the contraction coefficient of any $(\eps,\delta)$-DP CP-TP map with respect to the hockey stick divergence.

A crucial technical contribution in this step addresses the difficulty of the $\delta > 0$ regime. The additive $\delta$ term typically complicates the analysis of entropic quantities. We observe that, similar to the classical case, there exists a \emph{pure} $(\eps + \log(1/1-\delta))$-DP pair of distributions that can be obtained by \emph{truncating} the original $(\eps,\delta)$-DP pair. We formally extend this to the quantum domain, showing that this truncated pair is close ($O(\delta)$ in $L_1$ distance) to the original pair. We use this truncated pair to obtain a meaningful upper bound on the relative entropy of the original $(\eps,\delta)$-DP pair using the continuity of relative entropy. This upper bound recovers the known result for $\delta=0$ \cite{KOV15,DRS22} as a special case.
Moreover, these contraction-based analyses serve to connect the channel-level action with both privacy guarantees and the limits on adversarial extraction. By bounding how much input distinctions are suppressed at the output, we obtain direct, worst-case control of output distinguishability (useful for bounding the success of hypothesis-testing attacks) and of output information quantities (limiting the precision of Fisher-information-based estimation).

\begin{table}[h]
\caption{Relationship between results obtained in this work and related studies}\label{tab}
\begin{center}
\begin{tabular}{|c|c|c|c|c|}
\hline
 \multirow{2}{*}{Results} & \multicolumn{2}{|c|}{Classical Domain} & \multicolumn{2}{|c|}{Quantum Domain}\\
\cline{2-5} 
&{$\eps$-DP} & {$(\eps,\delta)$-DP} & {$\eps$-DP} & {$(\eps,\delta)$-DP}\\
 \hline
\makecell{Existence of Weakest \\ DP Mechanism} & \makecell{\cite[Theorem 5]{KOV15}} & \makecell{\cite[Theorem 18]{KOV15}} & \makecell{\cite[Theorem 2]{YH20}}  & \makecell{Lemma \ref{lemma_diff_priv_quant} \\ of this manuscript.}\\
\hline
\makecell{ Data Processing based\\ upper-bounds using Weakest\\ DP Mechanism} & {\makecell{\cite[Lemma D.8]{DRS22}}}  & & \makecell{ Corollary \ref{theo_QLDP_F_div_pure_bound} \\ of this manuscript.} & \makecell{Theorem \ref{theo_QLDP_approx_F_div_bound} \\ of this manuscript.} \\
\hline
\makecell{Privatized Hypothesis \\ Testing} 
&
\makecell{\cite[Theorem 18]{KOV15}}
 & 
\cite[Theorem 18]{KOV15}
& \makecell{\cite[Section \RNum{2}B]{YH20}} & \makecell{Theorem \ref{THJ} \\ and Corollary \ref{Cor12}\\ of this manuscript.}\\
\hline
\makecell{Privatized Parameter \\ Estimation} 
&
\makecell{\cite[Theorem 18]{KOV15}}
 & 
\cite[Theorem 18]{KOV15}
& \makecell{\cite[Section  \RNum{2}A]{YH20}} & \makecell{Theorem \ref{theo_fisher_info_bound} \\ of this manuscript.}\\
\hline
\makecell{Privatized Contraction\\  Coefficient of Trace Distance} & \cite[Corollary 11]{KOV15} & & \makecell{\cite[Theorem $2$]{Nuradha25}} & \makecell{\cite[Theorem $5$]{Nuradha25}}\\
\hline
\makecell{Contraction Coefficient\\ of Hockey-stick Divergence} & \makecell{\cite[Theorem 1]{ZA23}} & \makecell{Corollary \ref{corr_contrac_classical_hockey_stick} \\ of this manuscript.} & \makecell{\cite[Theorem 1]{Nuradha25}}  & \makecell{Lemma \ref{lemma_contrac_hockey_stick} \\ of this manuscript.} \\
\hline
\makecell{Contraction Coefficient based \\ Upper-bounds on Relative entropy\\ of LDP channels } & & \makecell{Theorem \ref{theorem_classical_kl_bound} \\ of this manuscript.} & \cite[Proposition 3]{Nuradha25} & \\
\hline  
\end{tabular}
\end{center}
\end{table}

\subsection{Organization of this Manuscript and our Contributions}
The rest of this manuscript is organized as follows:
\begin{itemize}
    \item In Section~\ref{sec_def_not}, we list the essential notations, definitions, and facts which will be used throughout this manuscript. 

   \item In Section~\ref{sec_char_region}, we provide a geometric characterization of the privacy region of $(\eps,\delta)$-QDP and provide a construction of the weakest (most informative) $(\eps,\delta)$-DP pair of quantum states.
    \item In Section~\ref{sec_diff_priv_quant}, we first define a notion of order of informativeness between two pairs of quantum states via quantum hypothesis testing divergence. Using this ordering, in Theorem \ref{lemma_t_hyp_F_Div}, we give an $f$-divergence dominance of a more informative pair of quantum states over a lesser informative pair of quantum states. Further, in Lemma~\ref{lemma_DP_t_hyp_rel}, we show that a pair of quantum states is $(\eps,\delta)$-differentially private if and only if  $D^{\alpha}_{H} {(\rho\|\sigma)} \leq -\log \max\left\{1 - \delta - e^{\eps}\alpha, e^{-\eps}(1 -\delta - \alpha),0\right\}$ for any $\alpha \in [0,1],$ where $D^{\alpha}_{H} {(\rho\|\sigma)}$ is the hypothesis testing divergence between $\rho$ and $\sigma.$ Using this in Lemma~\ref{lemma_diff_priv_quant},  we identify an explicit ``weakest'' (most-informative) pair of $(\eps,\delta)$-DP quantum states that dominates all others. Finally, in Theorem \ref{theo_QLDP_approx_F_div_bound}, we show that the $f$-divergence of any pair of $(\eps,\delta)$-DP quantum states can be upper-bounded by that of the ``weakest'' (most-informative) pair of $(\eps,\delta)$-DP quantum states.
 
    \item In Section~\ref{sec:fisher}, we apply our framework to the regime of the quantum hypothesis testing and the quantum parameter estimation under privacy constraints. In particular, in Theorem \ref{THJ}, we show that among all $(\eps,\delta)$-DP mechanisms, the hypothesis testing divergence is maximized by the weakest (most informative) privatized mechanism. Further, in Theorem~\ref{theo_fisher_info_bound}, we derive the exact maximum SLD Fisher information achievable by any $(\eps,\delta)$-DP mechanism.

    \item In Section~\ref{sec_LDP_channel}, we analyze the contraction of divergences under differentially private channels. In Lemma~\ref{lemma_contrac_hockey_stick}, we obtain nearly tight upper and lower bounds on the contraction coefficient of $(\eps,\delta)$-LDP channels with respect to the quantum hockey-stick divergence. We then use this to derive a novel upper-bound on the relative entropy for classical $(\eps,\delta)$-LDP channels in Theorem~\ref{theorem_classical_kl_bound} via its integral representation \cite{Sason_2016}.
\end{itemize}

Table \ref{tab} above summarizes the relation between the results obtained in our manuscript and the existing results.

\section{Notations, Definitions and Facts}\label{sec_def_not}

We use $\cH$ to denote a finite-dimensional Hilbert space and we denote its dimension with $\abs{\cH}$, $\mathcal{D}(\cH)$ to represent the set of all density matrices acting on $\cH$. For any quantum state $\rho \in \cD(\cH)$, we define $\mbox{supp}(\rho):=\mbox{Span}\{\ket{i}\mid \lambda_{i}> 0\},$ where $\left\{\lambda_i\right\}$ represent non-zero eigenvalues of $\rho$. For any finite and non-empty set $\cX$, we denote $\cP(\cX)$ as the set of all probability distributions over $\cX$. Similarly, for any distribution $P \in \cP(\cX)$, we denote $\mbox{supp}(P):=\{x\in \mathcal{X}\mid P(x)>0\}.$ 

\begin{definition}[Classical minimum type-\RNum{2} error]\label{def_min_type2_err}
    Given a pair of probability distributions $P_1,P_2$ over a finite set $\cX$, the minimum type-\RNum{2} error of a fixed order $\alpha \in [0,1]$ is defined as follows,
    \begin{equation}
        \beta(\alpha|P_1\|P_2) := \min_{\substack{\phi: \cX \to [0,1]\\ \sum_{x \in \cX} P_1(x)\phi(x) \leq \alpha}} 1 - \sum_{x \in \cX} P_2(x)\phi(x).\label{def_min_type2_err_eq}
    \end{equation}
\end{definition}

\begin{definition}[Classical hypothesis testing divergence]\label{def_hyp_div}
    For any pair of probability distributions $P,Q$ over a finite set $\cX$, and for any $\alpha \in [0,1]$, the classical hypothesis testing divergence of order $\alpha$ is defined as
    \begin{equation}
        D^{\alpha}_{H}(P\|Q) := \max_{\substack{\phi: \cX \to [0,1]\\ \sum_{x \in \cX} P(x)\phi(x) \leq \alpha}} -\ln \left( 1 - \sum_{x \in \cX} Q(x)\phi(x) \right).
    \end{equation}

    Further, note that for any $\alpha \in [0,1]$, $D^{\alpha}_{H}(P\|Q) = -\ln \beta(\alpha,P,Q)$ where $\beta(\alpha,P,Q)$ is as defined in Definition \ref{def_min_type2_err}.
\end{definition}

\begin{definition}[Classical hockey-stick divergence{\cite{Hull2003}}]\label{def_class_hockey_stick}
Let $P$ and $Q$ be probability distributions on a finite (or measurable) space $\mathcal{X}$, and let $\gamma \geq 1$.  
Then, the hockey-stick divergence between $P$ and $Q$ is defined as
\[
  E_{\gamma}(P \| Q)
  := \sup_{A \subseteq \mathcal{X}} \Big( P(A) - \gamma \, Q(A) \Big)
  = \sum_{x \in \mathcal{X}} \big[ P(x) - \gamma Q(x) \big]_+ ,
\]
where $[t]_+ := \max\{t,0\}$ denotes the positive part of $t$.
\end{definition}

\begin{definition}[Classical alpha divergence]\label{def_c-alpha_div}
    For any pair of probability distributions $P,Q$ over a finite set $\cX$, and for any $\alpha \in [0,1]$, the classical $\alpha$ divergence of order $\alpha$ is defined as
    \begin{equation}
        D_{\alpha}(P\|Q) := 
        \frac{1}{\alpha-1} \log \sum_{x \in \cX} P(x)^\alpha Q(x)^{1-\alpha}.
    \end{equation}
\end{definition}

\begin{definition}\label{def_data_proc_div_class}
    A function $\bbD : \cP(\cX) \times \cP(\cX) \to \bbR^{+}$ (where $\cX$ is an arbitrary set of finite cardinality) is called a divergence if for  
    any classical channel $\cK: \cX \to \cY$, (where $\cY$ is another arbitrary set of finite cardinality), the following holds,
    \begin{equation}
       \bbD(P_1\|P_2) \geq  \bbD(\cK(P_1)\|
       \cK(P_2)), \forall P_1,P_2 \in \cP(\cX),\label{def_data_proc_div_class_eq}
    \end{equation}
    where for each $i = 1,2$, $\cK(P_i)(y) := \bbE_{X \sim P_i} \left[\cK(y\mid X)\right]$.
\end{definition}

\begin{definition}[Quantum minimum type-\RNum{2} error]\label{Q-def_min_type2_err}
    For any $\rho,\sigma \in \cD(\cH_A)$, 
    the minimum type-\RNum{2} error of a fixed order $\alpha \in [0,1]$ is defined as follows,
    \begin{equation}
        \beta(\alpha|\rho\|\sigma) := 
        \min_{\substack{0 \preceq \Lambda \preceq \bbI:\\
        \tr[\Lambda \rho] \leq \alpha}} 
        \tr[(\bbI - \Lambda)\sigma] .\label{Qdef_min_type2_err_eq}
    \end{equation}
\end{definition}

\begin{definition}[Quantum hypothesis testing divergence {\cite{Wang_2012}}]\label{d_hyp_func}
    For any $\rho,\sigma \in \cD(\cH_A)$, for any $\alpha \in [0,1]$, the quantum hypothesis testing divergence $D^{\alpha}_{H}{(\rho\|\sigma)}$ of order $\alpha$ is defined as follows,
    \begin{equation}
        D^{\alpha}_{H}{(\rho\|\sigma)} := \max_{\substack{0 \preceq \Lambda \preceq \bbI:\\
        \tr[\Lambda \rho] \leq \alpha}} -\ln \tr[(\bbI - \Lambda)\sigma] .\label{d_hyp_func_eq}
    \end{equation}
\end{definition}

\begin{definition}[Quantum hockey stick divergence \cite{SW12}]\label{f_sym_hyp_func}
    For any $\rho,\sigma \in \cD(\cH_A)$, we define the quantum hockey stick divergence $E_{\gamma}(\rho\|\sigma)$ of order $\gamma\geq 0$ as follows,
    \begin{equation}
        E_{\gamma}(\rho\|\sigma) := \max_{\substack{0 \preceq \Lambda \preceq \bbI}
        }\tr[\Lambda (\rho - \gamma\sigma)].\label{f_sym_hyp_func_eq}
    \end{equation}
\end{definition}

\begin{definition}[\cite{Hirche_2024}]\label{def_f_div_integral_rep}
    Consider $\cF$ be the set of functions $f : (0,\infty) : \to \bbR$ that are convex and twice differentiable with $f(1) = 0$. Then, for any $f \in \cF$ and for any $\rho,\sigma \in \cD(\cH_A)$, we define the quantum $f$-divergence $D_f(\rho\|\sigma)$ as follows,
    \begin{equation}
        D_f(\rho\|\sigma) :=  \int_{1}^{\infty}\left( f''(\gamma) E_{\gamma}(\rho\|\sigma) + \gamma^{-3} f''(\gamma^{-1}) E_{\gamma}(\sigma\|\rho)\right) d\gamma.\label{def_f_div_integral_rep_eq}
    \end{equation}
    whenever, the integral is finite and $D_{f}(\rho\|\sigma) := +\infty$ otherwise.
\end{definition}

\begin{definition}\label{def_quant_KL_div}
    Consider $\rho,\sigma \in \cD(\cH_A)$, then, we define the quantum Divergence between $\rho$ and $\sigma$ as follows
    \begin{equation*}
        D(\rho \| \sigma) := \begin{cases}
            \tr[\rho(\log\rho -\log\sigma)], &\text{  if } \rho \ll \sigma, \\
             +\infty, &\text{ else}.
        \end{cases}
    \end{equation*}
\end{definition}

\begin{definition}[Hockey Stick R\'enyi Divergence {\cite{Hirche_2024}}]\label{def_quant_renyi_hockey}
    Consider $\rho,\sigma \in \cD(\cH_A)$ and $\alpha \in [0,1) \cup (1,+\infty)$. Then, hockey stick R\'enyi Divergence of order $\alpha$ between $\rho$ and $\sigma$ is defined as follows,
        \begin{align}
    D_\alpha(\rho \| \sigma) &:= \frac{1}{\alpha - 1} \log \big( 1 + (\alpha - 1) H_\alpha(\rho \| \sigma) \big),
\label{def_P_div_integral_rep_eq} \\
H_\alpha(\rho \| \sigma) &:= \alpha \int_{1}^{\infty} \gamma^{\alpha - 2} E_\gamma(\rho \| \sigma) + \gamma^{-\alpha - 1} E_\gamma(\sigma \| \rho) \, \mathrm{d}\gamma,
    \end{align}
when a parameter $\alpha \in [0,1) \cup (1,+\infty)$
satisfies the condition
$\alpha < 1 \cap \rho \not\perp \sigma$ or the condition $\rho \ll \sigma$.
\end{definition}

\begin{definition}[Measured R\'enyi Divergence {\cite{H2017QIT,Hirche_2024}}]\label{def_sand}
    Consider $\rho,\sigma \in \cD(\cH_A)$ and $\alpha \in [0,1) \cup (1,+\infty)$. Then, measured R\'enyi Divergence of order $\alpha$ between $\rho$ and $\sigma$ is defined as follows,
    \begin{align*}
        \check{D}_{\alpha} (\rho \| \sigma) 
:= \sup_{M: {\rm POVM}}D_\alpha(P^M_\rho\|P^M_\sigma),
    \end{align*}
    where $P^M_\rho$ is the distribution given as
    $P^M_\rho(i):= \tr \rho M_i$ for a POVM $M=\{M_i\}_i$.    
\end{definition}

\begin{definition}\label{def_data_proc_func}
    A function $\varmathbb{D} : \cD(\cH_A) \times \cD(\cH_A) \to \bbR^{+}$ (where $\cH_A$ is any arbitrary Hilbert space of finite dimension) is called a divergence if for  
    any completely positive trace preserving (CP-TP) map $\cE: \cH_A \to \cH_B$, (where $\cH_B$ is another arbitrary Hilbert space of finite dimension) the following holds,
    \begin{equation}
       \varmathbb{D}(\rho_1\|\rho_2) \geq  \varmathbb{D}(\cE(\rho_1)\|\cE(\rho_2)), \forall \rho_1,\rho_2 \in \cD(\cH_A).\label{def_data_proc_quantum_div_eq}
    \end{equation}
\end{definition}

\begin{fact}
\label{fact_hockey_approx}
 For any two  pair  $(P,Q)$ and $(P',Q')$ probability distributions, the following holds for any $\gamma \geq 1$,
 \begin{align}
     E_{\gamma}(P'\|Q') &\leq E_{\gamma}(P\|Q) + \frac{1}{2}\norm{P'-P}{1} + \frac{\gamma}{2}\norm{Q'-Q}{1}.\label{fact_hockey_approx_eq1}
 \end{align}
\end{fact}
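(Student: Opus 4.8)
The plan is to reduce the claim to two elementary ingredients: the variational (supremum-over-events) form of the hockey-stick divergence from Definition~\ref{def_class_hockey_stick}, and the identity $\sup_{A}\lvert\mu(A)-\nu(A)\rvert = \tfrac12\norm{\mu-\nu}{1}$ for probability distributions $\mu,\nu$.

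First I would fix $\gamma \geq 1$ and pick an event $A^\star \subseteq \cX$ attaining the supremum defining $E_\gamma(P'\|Q')$, i.e.\ $E_\gamma(P'\|Q') = P'(A^\star) - \gamma\, Q'(A^\star)$; in the discrete case one may take $A^\star = \{x : P'(x) \geq \gamma\, Q'(x)\}$, and in a general measurable space one uses the analogous set defined through Radon--Nikodym derivatives (or an $\eta$-approximate maximizer, letting $\eta\to 0$ at the end). Then I would use the add-and-subtract decomposition
\[
E_\gamma(P'\|Q') = \bigl(P(A^\star) - \gamma\, Q(A^\star)\bigr) + \bigl(P'(A^\star) - P(A^\star)\bigr) - \gamma\bigl(Q'(A^\star) - Q(A^\star)\bigr),
\]
bounding the first bracket by $E_\gamma(P\|Q)$ (it has the form $P(A)-\gamma Q(A)$, hence is at most $\sup_A(P(A)-\gamma Q(A))$) and the remaining two terms by $\lvert P'(A^\star)-P(A^\star)\rvert$ and $\gamma\,\lvert Q'(A^\star)-Q(A^\star)\rvert$ via the triangle inequality and $\gamma > 0$.

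Next I would invoke the total-variation identity: for probability distributions $\mu,\nu$ on $\cX$, setting $A^+ := \{x : \mu(x) > \nu(x)\}$ gives $\mu(A^+) - \nu(A^+) = \tfrac12\norm{\mu-\nu}{1}$, since $\sum_x(\mu(x)-\nu(x)) = 0$ forces the positive and negative parts of $\mu-\nu$ to carry equal mass $\tfrac12\norm{\mu-\nu}{1}$; no event can exceed this, so $\sup_A\lvert\mu(A)-\nu(A)\rvert = \tfrac12\norm{\mu-\nu}{1}$. Applying this with $(\mu,\nu)=(P',P)$ and $(\mu,\nu)=(Q',Q)$ yields $\lvert P'(A^\star)-P(A^\star)\rvert \leq \tfrac12\norm{P'-P}{1}$ and $\lvert Q'(A^\star)-Q(A^\star)\rvert \leq \tfrac12\norm{Q'-Q}{1}$; substituting into the decomposition gives exactly \eqref{fact_hockey_approx_eq1}.

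The argument is fully elementary and I do not anticipate a real obstacle. The only point that needs a little care is the existence of the maximizing event $A^\star$ in a general measurable space, which is handled either by the explicit Radon--Nikodym set above or by an $\eta$-approximate maximizer; everything else is a one-line triangle-inequality estimate.
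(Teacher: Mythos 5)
Your proposal is correct and follows essentially the same route as the paper's proof: pick the maximizing event for $E_\gamma(P'\|Q')$, add and subtract $P(A^\star)-\gamma Q(A^\star)$, and bound the two remainder terms by $\tfrac12\norm{P'-P}{1}$ and $\tfrac{\gamma}{2}\norm{Q'-Q}{1}$ via the total-variation identity. Your write-up is in fact slightly cleaner on the signs (the paper's displayed decomposition has a sign slip in the $Q$-term that is harmless once one passes to the supremum), so no changes are needed.
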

\begin{proof}
    From Definition \ref{def_class_hockey_stick}, we have
    \begin{align*}
        E_{\gamma}(P'\|Q') &\overset{(a)}{=}  P'(A) - \gamma Q'(A)\\
        &\leq  P(A) - \gamma Q(A) +  P'(A) - P(A) +\gamma(Q'(A) - Q(A))\\
        &\leq \max_{B \subseteq\cX}(P(B) - \gamma Q(B)) + \max_{C \subseteq\cX} (P'(C) - P(C)) +\gamma\max_{D \subseteq\cX}( Q'(D) - Q(D))\nn\\
        &=E_{\gamma}(P\|Q) +  \frac{1}{2}\norm{P'-P}{1} +\frac{\gamma}{2}\norm{Q'-Q}{1}.
    \end{align*}
    where in $(a)$, let $A$ be the maximal set for  $E_{\gamma}(P'\|Q')$ according to Definition \ref{def_class_hockey_stick}. This proves Fact \ref{fact_hockey_approx}.
\end{proof}

\begin{fact}[H\"older's inequality]\label{holder_classic}
    For any two vectors $x=(x_1, \dots, x_n)$ and $y=(y_1, \dots, y_n)$ in $\mathbb{R}^n$ and for any real numbers $p,q \in [1,\infty)$ such that $\frac{1}{p}+\frac{1}{q} = 1$, we have,
    \begin{equation}
        \sum_{i=1}^n \abs{x_i y_i} \leq \left(\sum_{i=1}^n \abs{x_i}^p\right)^{1/p}\left(\sum_{i=1}^n \abs{y_i}^q\right)^{1/q}.\label{Holder_fact_classical}
    \end{equation}
    Further, for the case when $p =1$ and $q = \infty$, the right-hand side of \eqref{Holder_fact_classical} can be interpreted as follows,
    \begin{equation}
        \left(\sum_{i=1}^n \abs{x_i}^p\right)^{1/p}\left(\sum_{i=1}^n \abs{y_i}^q\right)^{1/q} = \left(\sum_{i=1}^n \abs{x_i}\right) \max_{i \in [n]} \abs{y_i}.\label{Holder_fact_classical_p1_qinf}
    \end{equation}
\end{fact}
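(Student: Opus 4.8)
The plan is to reduce Hölder's inequality to the elementary pointwise \emph{Young inequality} $ab \le \tfrac{a^p}{p} + \tfrac{b^q}{q}$ (valid for $a,b\ge 0$ and conjugate exponents $p,q\in(1,\infty)$) after a normalization, and to dispatch the endpoint case $p=1$, $q=\infty$ by hand.

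First I would handle the degenerate cases: if $\sum_{i}|x_i|^p = 0$ then $x=0$, so both sides of \eqref{Holder_fact_classical} vanish, and similarly if $\sum_i |y_i|^q=0$. Hence I may assume $A:=\bigl(\sum_i |x_i|^p\bigr)^{1/p}>0$ and $B:=\bigl(\sum_i |y_i|^q\bigr)^{1/q}>0$ and set $u_i := |x_i|/A$, $v_i := |y_i|/B$, so that $\sum_i u_i^{\,p} = \sum_i v_i^{\,q} = 1$. Applying Young's inequality to each pair $(u_i,v_i)$ and summing over $i$ gives $\sum_i u_i v_i \le \tfrac1p\sum_i u_i^{\,p} + \tfrac1q\sum_i v_i^{\,q} = \tfrac1p+\tfrac1q = 1$; multiplying through by $AB$ yields $\sum_i |x_i y_i| \le AB$, which is exactly \eqref{Holder_fact_classical}.

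It then remains to justify Young's inequality, which I would obtain from concavity of $\ln$ on $(0,\infty)$: for $a,b>0$ one writes $ab = \exp\!\bigl(\tfrac1p\ln(a^p) + \tfrac1q\ln(b^q)\bigr)$ and, since $\tfrac1p+\tfrac1q=1$, Jensen's inequality for the concave function $\ln$ gives $\ln(ab) \le \ln\!\bigl(\tfrac{a^p}{p}+\tfrac{b^q}{q}\bigr)$, i.e.\ $ab \le \tfrac{a^p}{p}+\tfrac{b^q}{q}$; if $a=0$ or $b=0$ the bound is trivial. Finally, for the endpoint $p=1$, $q=\infty$, the claim \eqref{Holder_fact_classical_p1_qinf} is just the termwise estimate $\sum_i |x_i y_i| \le \bigl(\max_{j\in[n]} |y_j|\bigr)\sum_i |x_i|$. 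There is no genuine obstacle in this argument; the only points needing a little care are the bookkeeping of the degenerate cases and phrasing the convexity step for Young's inequality cleanly.
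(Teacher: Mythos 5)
Your proof is correct and complete: the normalization to $\sum_i u_i^p=\sum_i v_i^q=1$ followed by the pointwise Young inequality (itself obtained from concavity of $\ln$) is the standard derivation of \eqref{Holder_fact_classical}, the degenerate cases $x=0$ or $y=0$ are handled properly, and the endpoint $p=1$, $q=\infty$ reduces, as you say, to the termwise bound $\sum_i|x_iy_i|\le\bigl(\max_j|y_j|\bigr)\sum_i|x_i|$, which is all that \eqref{Holder_fact_classical_p1_qinf} is used for in the paper. The paper itself states this as a standard fact without proof, so there is no in-paper argument to compare against; your write-up supplies a valid self-contained justification. The only cosmetic remark is that the hypothesis $p,q\in[1,\infty)$ with $\frac1p+\frac1q=1$ already forces $p,q\in(1,\infty)$, so your Young-inequality step covers the entire stated range and the $p=1,q=\infty$ display is genuinely a separate (limiting) case, exactly as you treated it.
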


\begin{fact}[\cite{Sason_2016}]\label{fact_classical_KL_div_integral_rep}
Let $P$ and $Q$ be probability distributions on a finite (or measurable) space $\mathcal{X}$. Then, the classical relative entropy (Kullback-Leibler divergence) admits the following integral representation in terms of the classical hockey-stick divergence:
\begin{equation}
    D(P\|Q) = \int_{1}^{\infty} \left( \frac{1}{\gamma} E_{\gamma}(P\|Q) + \frac{1}{\gamma^2} E_{\gamma}(Q\|P) \right) d\gamma,
\end{equation}
where $E_{\gamma}(P\|Q)$ is the classical hockey-stick divergence as defined in Definition~\ref{def_class_hockey_stick}.
\end{fact}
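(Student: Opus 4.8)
The plan is to evaluate the right-hand side by interchanging the finite sum over $\cX$ with the integral in $\gamma$, which reduces the identity to a single elementary one-variable computation. First I would dispose of the degenerate case: if $P \not\ll Q$ then $D(P\|Q) = +\infty$ by Definition~\ref{def_quant_KL_div}, and choosing $x_0$ with $Q(x_0) = 0 < P(x_0)$ gives $E_\gamma(P\|Q) \ge [\,P(x_0) - \gamma Q(x_0)\,]_+ = P(x_0)$ for every $\gamma \ge 1$ from Definition~\ref{def_class_hockey_stick}, so $\int_1^\infty \gamma^{-1} E_\gamma(P\|Q)\,d\gamma \ge P(x_0)\int_1^\infty \gamma^{-1}\,d\gamma = +\infty$; hence both sides equal $+\infty$ and there is nothing to prove. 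So assume $P \ll Q$ and set $r_x := P(x)/Q(x)$ for $x \in \supp(Q)$.

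Next, Definition~\ref{def_class_hockey_stick} gives the pointwise representations $E_\gamma(P\|Q) = \sum_{x \in \supp(Q)} Q(x)\,[\,r_x - \gamma\,]_+$ and $E_\gamma(Q\|P) = \sum_{x \in \supp(Q)} Q(x)\,[\,1 - \gamma r_x\,]_+$. Since the sum is finite (each term's inner integral being finite, as the integrands have compact support in $\gamma$), I may exchange $\sum_x$ with $\int_1^\infty$ term by term, so the claim reduces to evaluating the two elementary integrals
\[
\int_1^\infty \frac{[\,r - \gamma\,]_+}{\gamma}\,d\gamma
\quad\text{and}\quad
\int_1^\infty \frac{[\,1 - \gamma r\,]_+}{\gamma^2}\,d\gamma
\]
at $r = r_x$. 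A direct computation shows the first vanishes for $r \le 1$ and equals $r\log r - r + 1$ for $r > 1$, while the second vanishes for $r \ge 1$ and also equals $r\log r - r + 1$ for $r < 1$ (with the convention $0\log 0 = 0$). Substituting these and noting that the $r_x = 1$ atoms contribute $0$ on both sides, the right-hand side collapses to $\sum_{x \in \supp(Q)} Q(x)\big(r_x\log r_x - r_x + 1\big)$; using $Q(x) r_x = P(x)$ together with $\sum_x P(x) = \sum_x Q(x) = 1$, this equals $\sum_x P(x)\log\frac{P(x)}{Q(x)} = D(P\|Q)$.

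The argument is entirely routine; the only points needing a little care are the $P \not\ll Q$ reduction above, the legitimacy of the sum–integral interchange (immediate since the sum is finite and all integrands are non-negative), and the $0\log 0$ convention for atoms with $P(x) = 0$. As an alternative to the direct calculation, one can invoke the classical counterpart of the $f$-divergence integral representation of Definition~\ref{def_f_div_integral_rep} with $f(t) = t\log t$: this $f$ is convex and twice differentiable on $(0,\infty)$ with $f(1) = 0$ and $f''(t) = 1/t$, so $f''(\gamma) = 1/\gamma$ and $\gamma^{-3} f''(\gamma^{-1}) = \gamma^{-3}\cdot\gamma = 1/\gamma^2$, which reproduces the stated formula at once.
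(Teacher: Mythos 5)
Your proof is correct. Note that the paper itself offers no proof of this statement: it is recorded as a Fact with a citation to the literature (Sason--Verd\'u), so there is no internal argument to compare against. Your derivation is the standard one underlying that cited representation, and all the computational steps check out: the reduction to the per-symbol integrals $\int_1^\infty \gamma^{-1}[r-\gamma]_+\,d\gamma$ and $\int_1^\infty \gamma^{-2}[1-\gamma r]_+\,d\gamma$, each evaluating to $r\log r - r + 1$ on its active range, the $0\log 0$ convention for atoms with $P(x)=0$ (where the second integral contributes $Q(x)\cdot 1$, matching $r\log r - r +1$ at $r=0$), the telescoping $\sum_x Q(x)(r_x\log r_x - r_x + 1) = D(P\|Q)$, and the $P\not\ll Q$ case where both sides are $+\infty$. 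Two cosmetic remarks: the parenthetical justification ``the integrands have compact support in $\gamma$'' fails for terms with $r_x=0$ (the integrand $\gamma^{-2}$ is supported on all of $[1,\infty)$), but this is harmless since for finite $\cX$ the sum--integral interchange is just linearity and each integral is finite anyway; and your closing alternative---plugging $f(t)=t\log t$, $f''(\gamma)=1/\gamma$, $\gamma^{-3}f''(\gamma^{-1})=1/\gamma^2$ into the classical analogue of the $f$-divergence representation of Definition~\ref{def_f_div_integral_rep}---is exactly the route by which the quantum counterpart (Fact~\ref{fact_KL_div_integral_rep}) is obtained in the paper, so it aligns your argument with how these facts are organized here. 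Either route is acceptable; the direct computation has the advantage of being fully self-contained.
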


\begin{fact}\label{fact_d_hyp_func}
    For any $\rho,\sigma \in \cD(\cH_A)$, for any $\alpha \in [0,1]$, the quantum hypothesis testing divergence $D^{\alpha}_{H}{(\rho\|\sigma)}$ of order $\alpha$ satisfies the following,
    \begin{equation}
        D^{\alpha}_{H}{(\rho\|\sigma)} = - \log \beta(\alpha|\rho\|\sigma),\label{fact_d_hyp_func_eq}
    \end{equation}
    where for any $\alpha \in [0,1]$, $\beta(\alpha|\rho\|\sigma)$ is defined in Definition \ref{Q-def_min_type2_err}.
\end{fact}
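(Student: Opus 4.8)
The plan is to observe that the quantum hypothesis testing divergence in Definition~\ref{d_hyp_func} and the minimum type-\RNum{2} error in Definition~\ref{Q-def_min_type2_err} are optimizations over \emph{the same} feasible set of test operators, and that their objective functions are related by the strictly decreasing bijection $t \mapsto -\ln t$ on $(0,\infty)$ (with the convention $-\ln 0 = +\infty$). Passing a scalar optimization through a decreasing map interchanges $\min$ and $\max$, which yields the identity \eqref{fact_d_hyp_func_eq} almost immediately; the only thing to be careful about is attainment of the extremum and the degenerate $+\infty$ case.

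Concretely, first I would record that the feasible set $\mathcal{S}_\alpha := \{\Lambda : 0 \preceq \Lambda \preceq \bbI,\ \tr[\Lambda\rho] \leq \alpha\}$ is nonempty (it contains $\Lambda = 0$ for every $\alpha \in [0,1]$, since $\tr[0\cdot\rho]=0\leq\alpha$) and compact (it is a closed and bounded subset of the finite-dimensional real vector space of Hermitian operators on $\cH_A$), and that $\Lambda \mapsto \tr[(\bbI - \Lambda)\sigma]$ is continuous with values in $[0,1]$. Hence the minimum defining $\beta(\alpha|\rho\|\sigma)$ is attained at some $\Lambda^\star \in \mathcal{S}_\alpha$, and $\beta(\alpha|\rho\|\sigma) = \tr[(\bbI-\Lambda^\star)\sigma] \in [0,1]$.

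Next I would establish the two inequalities. For the upper bound: for every $\Lambda \in \mathcal{S}_\alpha$ we have $\tr[(\bbI-\Lambda)\sigma] \geq \beta(\alpha|\rho\|\sigma)$ by definition of the minimum, and since $-\ln$ is decreasing this gives $-\ln\tr[(\bbI-\Lambda)\sigma] \leq -\ln\beta(\alpha|\rho\|\sigma)$; taking the supremum over $\Lambda \in \mathcal{S}_\alpha$ yields $D^{\alpha}_{H}(\rho\|\sigma) \leq -\ln\beta(\alpha|\rho\|\sigma)$. For the matching lower bound, evaluate the objective of Definition~\ref{d_hyp_func} at the feasible point $\Lambda^\star$ to get $D^{\alpha}_{H}(\rho\|\sigma) \geq -\ln\tr[(\bbI-\Lambda^\star)\sigma] = -\ln\beta(\alpha|\rho\|\sigma)$. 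Combining the two gives \eqref{fact_d_hyp_func_eq}.

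There is essentially no obstacle here: the statement is little more than the remark that running a scalar optimization through a strictly decreasing function converts a $\min$ into a $\max$ of the transformed objective over the identical constraint set. The single point deserving a sentence of care is the degenerate case $\beta(\alpha|\rho\|\sigma) = 0$ (e.g.\ when the supports allow a perfect test), in which both sides of \eqref{fact_d_hyp_func_eq} equal $+\infty$ under the convention $-\ln 0 = +\infty$, and the argument above still goes through verbatim. If one prefers to avoid invoking compactness, an alternative is to argue directly with $\eps$-optimal test operators for each side, which is routine.
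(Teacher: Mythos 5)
Your proof is correct; the paper states Fact~\ref{fact_d_hyp_func} without any proof, treating it as an immediate consequence of Definitions~\ref{Q-def_min_type2_err} and~\ref{d_hyp_func}, and your argument (identical feasible set of tests, the strictly decreasing map $t\mapsto -\ln t$ interchanging $\min$ and $\max$, attainment by compactness, and the convention $-\ln 0=+\infty$ for the degenerate case) is exactly the standard justification the paper implicitly relies on. The only cosmetic point is the paper's own mixing of $\log$ and $\ln$ between Definition~\ref{d_hyp_func} and the statement of the fact, which is not an issue with your argument.
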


\begin{fact}[\cite{Hirche_2024}]\label{fact_f_div_integral_rep}
    For any $f \in \cF$ (where $\cF$ is defined in Definition \ref{def_f_div_integral_rep}) and for any $\rho,\sigma \in \cD(\cH_A)$, we have,
    \begin{equation}
        D_f(\rho\|\sigma) :=  \int_{0}^{\infty} f''(\gamma) E_{\gamma}(\rho\|\sigma) d\gamma,\label{fact_f_div_integral_rep_eq}
    \end{equation}
    where $D_f(\rho\|\sigma)$ is defined in Definition \ref{def_f_div_integral_rep}.
\end{fact}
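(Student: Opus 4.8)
The plan is to derive the single-integral formula \eqref{fact_f_div_integral_rep_eq} directly from the two-sided representation that \emph{defines} $D_f$ in Definition \ref{def_f_div_integral_rep}, by folding the ``reverse'' contribution $\int_{1}^{\infty}\gamma^{-3}f''(\gamma^{-1})\,E_{\gamma}(\sigma\|\rho)\,d\gamma$ back onto the interval $(0,1)$ and recognizing it there, after a change of variables, as $\int_{0}^{1}f''(\gamma)\,E_{\gamma}(\rho\|\sigma)\,d\gamma$.

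First I would record two elementary facts. (i) For the conjugate generator $\tilde f(x):=x\,f(1/x)$, a direct differentiation gives $\tilde f''(\gamma)=\gamma^{-3}f''(\gamma^{-1})$ (and $\tilde f$ is again convex with $\tilde f(1)=0$), so the reverse term in Definition \ref{def_f_div_integral_rep} equals $\int_{1}^{\infty}\tilde f''(\gamma)\,E_{\gamma}(\sigma\|\rho)\,d\gamma$; substituting $\gamma\mapsto 1/\gamma$ and using $\tilde f''(1/\gamma)=\gamma^{3}f''(\gamma)$ rewrites it as $\int_{0}^{1}\gamma\,f''(\gamma)\,E_{1/\gamma}(\sigma\|\rho)\,d\gamma$. (ii) Replacing the optimizing operator $\Lambda$ by $\bbI-\Lambda$ in Definition \ref{f_sym_hyp_func} gives the hockey-stick reflection identity $E_{\gamma}(\rho\|\sigma)=(1-\gamma)+\gamma\,E_{1/\gamma}(\sigma\|\rho)$ for every $\gamma>0$, i.e. $\gamma\,E_{1/\gamma}(\sigma\|\rho)=E_{\gamma}(\rho\|\sigma)-(1-\gamma)$. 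Plugging (ii) into the integral from (i) turns the reverse term into $\int_{0}^{1}f''(\gamma)\,E_{\gamma}(\rho\|\sigma)\,d\gamma+\int_{0}^{1}f''(\gamma)(\gamma-1)\,d\gamma$; adding the ``forward'' term $\int_{1}^{\infty}f''(\gamma)\,E_{\gamma}(\rho\|\sigma)\,d\gamma$ already present in Definition \ref{def_f_div_integral_rep} then yields $\int_{0}^{\infty}f''(\gamma)\,E_{\gamma}(\rho\|\sigma)\,d\gamma$ together with the residual $\int_{0}^{1}f''(\gamma)(\gamma-1)\,d\gamma$, and \eqref{fact_f_div_integral_rep_eq} is precisely the assertion that this residual drops out.

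I expect the main obstacle to be exactly this last step: showing that $\int_{0}^{1}f''(\gamma)(\gamma-1)\,d\gamma$ contributes nothing. Integration by parts expresses it through the boundary behaviour of $f$ near $0$, so one has to invoke the freedom that $D_f$ and $f''$ are unchanged under $f(x)\mapsto f(x)+c(x-1)$, together with the normalization adopted for $\cF$, to conclude that it vanishes — or, failing that, read the identity under the convention (already built into Definition \ref{def_f_div_integral_rep}) that both sides equal $+\infty$ whenever $f''$ fails to be integrable near $0$. The rest is routine bookkeeping: one splits $\int_{0}^{\infty}$ at $\gamma=1$ and commutes the substitution $\gamma\mapsto 1/\gamma$ with integration, using $0\le E_{\gamma}(\rho\|\sigma)\le 1$ for all $\gamma\ge 0$ to handle $\gamma\to 0$ and the fact that $E_{\gamma}(\rho\|\sigma)=0$ for all sufficiently large $\gamma$ when $\rho\ll\sigma$ (with both sides $=+\infty$ otherwise) to handle $\gamma\to\infty$, so that every manipulation above is between genuinely convergent integrals or is consistently $+\infty$ on each side.
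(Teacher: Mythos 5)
Your two preparatory steps are correct: the substitution $\gamma\mapsto 1/\gamma$ does turn the reverse term of Definition \ref{def_f_div_integral_rep} into $\int_{0}^{1}\gamma f''(\gamma)E_{1/\gamma}(\sigma\|\rho)\,d\gamma$, and the reflection identity $E_{\gamma}(\rho\|\sigma)=(1-\gamma)+\gamma E_{1/\gamma}(\sigma\|\rho)$ is valid for the unnormalized hockey-stick divergence of Definition \ref{f_sym_hyp_func}. You have therefore correctly reduced \eqref{fact_f_div_integral_rep_eq} to the vanishing of the residual $R:=\int_{0}^{1}f''(\gamma)(\gamma-1)\,d\gamma$, and this is exactly where the argument breaks. $R$ depends only on $f''$, so the freedom $f(x)\mapsto f(x)+c(x-1)$ leaves it completely unchanged, and the only normalization in $\cF$ is $f(1)=0$; since $f''\ge 0$ and $\gamma-1\le 0$ on $(0,1)$, one has $R\le 0$ with equality only when $f$ is affine on $(0,1)$. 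A concrete counterexample kills both of your proposed escape routes at once: take $f(x)=(x-1)^2\in\cF$, so $f''\equiv 2$ is integrable near $0$ and every integral involved is finite, yet $R=-1$; already at $\rho=\sigma$ Definition \ref{def_f_div_integral_rep} gives $D_f(\rho\|\rho)=0$ while $\int_{0}^{\infty}2E_{\gamma}(\rho\|\rho)\,d\gamma=\int_{0}^{1}2(1-\gamma)\,d\gamma=1$. So the residual genuinely does not drop out, and no convention about $+\infty$ is in play.

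What this shows is that \eqref{fact_f_div_integral_rep_eq} is not provable from Definitions \ref{def_f_div_integral_rep} and \ref{f_sym_hyp_func} as literally written, because with the unnormalized $E_\gamma$ on $\gamma\in[0,1)$ it is false; the single-integral form is correct only when, for $\gamma\in[0,1)$, $E_\gamma(\rho\|\sigma)$ is understood as the normalized quantity $\tr[(\rho-\gamma\sigma)_{+}]-(1-\gamma)$, i.e.\ $\gamma E_{1/\gamma}(\sigma\|\rho)$, which vanishes at $\rho=\sigma$. Under that reading your calculation closes immediately with no residual, and that is the convention under which the cited statement of \cite{Hirche_2024} should be interpreted. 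Note also that the paper supplies no proof of this Fact (it is imported by citation), so there is no internal argument to compare against; and for the way the Fact is actually used (the proof of Theorem \ref{lemma_t_hyp_F_Div}) the two-sided form of Definition \ref{def_f_div_integral_rep} already suffices, since the dominance $E_{\gamma}(\rho_1\|\rho_2)\ge E_{\gamma}(\sigma_1\|\sigma_2)$ for all $\gamma\ge 0$ transfers to the reversed arguments through the same reflection identity you used.
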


\begin{fact}[\cite{Hirche_2024}]\label{fact_KL_div_integral_rep}
    For any $\rho,\sigma \in \cD(\cH_A)$, and setting $f =x\log x$ in Definition \ref{def_f_div_integral_rep}, we have the following integral representation of the quantum relative entropy $D(\rho\|\sigma)$,
    \begin{equation}
        D(\rho\|\sigma) =  \int_{0}^{\infty}\left( \frac{1}{\gamma} E_{\gamma}(\rho\|\sigma) + \frac{1}{\gamma^2} E_{\gamma}(\sigma\|\rho)\right) d\gamma.\label{fact_KL_div_integral_rep_eq}
    \end{equation}
\end{fact}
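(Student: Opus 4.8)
The plan is to derive the statement as the specialization of the $f$-divergence integral representation (Definition~\ref{def_f_div_integral_rep}) to the generator of the relative entropy, $f(x) = x\log x$. I would proceed in three steps: (i) verify $f \in \cF$; (ii) compute the two kernels appearing in \eqref{def_f_div_integral_rep_eq} for this $f$ and simplify; and (iii) identify the resulting quantity $D_f(\rho\|\sigma)$ with the Umegaki relative entropy $D(\rho\|\sigma)$ of Definition~\ref{def_quant_KL_div}.

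Steps (i)--(ii) are routine bookkeeping. The function $f(x) = x\log x$ is twice differentiable on $(0,\infty)$ with $f'(x) = \log x + 1$ and $f''(x) = 1/x > 0$, hence convex, and $f(1) = 0$; so $f \in \cF$. Substituting into \eqref{def_f_div_integral_rep_eq}, the first kernel is $f''(\gamma) = \gamma^{-1}$, while the second kernel collapses neatly: $\gamma^{-3} f''(\gamma^{-1}) = \gamma^{-3}\cdot\gamma = \gamma^{-2}$. This already gives
\[
  D_f(\rho\|\sigma) \;=\; \int_{1}^{\infty}\!\left( \frac{1}{\gamma}\, E_{\gamma}(\rho\|\sigma) \;+\; \frac{1}{\gamma^{2}}\, E_{\gamma}(\sigma\|\rho)\right) d\gamma ,
\]
the exact quantum counterpart of the classical identity recorded in Fact~\ref{fact_classical_KL_div_integral_rep}.

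All the substance is in step (iii). The divergent case is quick and I would treat it first: if $\rho \not\ll \sigma$, let $\Pi$ be the projector onto $(\supp\sigma)^{\perp}$. Then $\Pi$ is feasible in \eqref{f_sym_hyp_func_eq} and $\tr[\Pi(\rho-\gamma\sigma)] = \tr[\Pi\rho] =: c > 0$ independently of $\gamma$, so $E_\gamma(\rho\|\sigma)\geq c$ for all $\gamma\geq 1$ and the integral above diverges, matching $D(\rho\|\sigma) = +\infty$. In the absolutely continuous case $\rho\ll\sigma$, the natural route is the elementary scalar identity $t\log t - t + 1 = \int_{1}^{\infty}\!\big(\gamma^{-1}[t-\gamma]_+ + \gamma^{-2}[1-\gamma t]_+\big)\,d\gamma$ (verified by treating $t\geq 1$ and $0<t<1$ separately), together with the observation that the affine correction $-(t-1)$ disappears upon taking ``expectations'' over states since $\tr\rho = \tr\sigma = 1$; one then has to promote this scalar identity to the operator level.

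That promotion is the main obstacle: since $\rho$ and $\sigma$ need not commute, one cannot diagonalise $\rho-\gamma\sigma$ in a fixed basis and integrate eigenvalue by eigenvalue, so the classical per-coordinate verification does not transfer verbatim. This is exactly the content of the integral-representation theorem of \cite{Hirche_2024}, which I would invoke rather than reprove: it determines, for each convex $f$, the quantum $f$-divergence produced by \eqref{def_f_div_integral_rep_eq}, and for $f(x)=x\log x$ the answer is the Umegaki relative entropy $\tr[\rho(\log\rho-\log\sigma)]$ (its proof uses the variational form $E_\gamma(\rho\|\sigma)=\tr(\rho-\gamma\sigma)_+$, Fubini to exchange the $\gamma$-integral with the spectral integral of $\rho-\gamma\sigma$, and operator-integral identities for $\rho\log\rho$ and $\rho\log\sigma$). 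Combining this identification with step~(ii) yields the claimed representation.
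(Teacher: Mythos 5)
Your proposal is correct: the paper gives no proof of this fact at all, importing it directly from \cite{Hirche_2024}, and your argument---verifying $f(x)=x\log x\in\cF$, computing the kernels $f''(\gamma)=\gamma^{-1}$ and $\gamma^{-3}f''(\gamma^{-1})=\gamma^{-2}$ in Definition~\ref{def_f_div_integral_rep}, and deferring the operator-level identification with the Umegaki relative entropy to the same reference---is exactly the intended route (your correct observation that the integral should start at $\gamma=1$, as in Definition~\ref{def_f_div_integral_rep} and Fact~\ref{fact_classical_KL_div_integral_rep}, also flags a harmless typo in the stated lower limit). Nothing further is needed.
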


\begin{fact}[\cite{Hirche_2024}]\label{fact_P_div_integral_rep}
    For any $\rho,\sigma \in \cD(\cH_A)$, 
    we have the following relation between
        hockey-stick R\'enyi Divergence $D_\alpha(\rho\|\sigma)$ and measured R\'enyi Divergence $\check{D}_\alpha(\rho\|\sigma)$,
    \begin{align}
    \check{D}_\alpha(\rho \| \sigma) \le D_\alpha(\rho \| \sigma) ,
    \label{fact_P_div_integral_rep_eq} 
    \end{align}
for a parameter $\alpha \in [0,1) \cup (1,+\infty)$.
\end{fact}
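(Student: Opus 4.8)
The plan is to bound $\check D_\alpha$ one measurement at a time. For a fixed POVM I will express the classical R\'enyi divergence of the outcome distributions through the classical analogue of the hockey-stick integral representation of Definition~\ref{def_quant_renyi_hockey}, and then invoke the elementary fact that any classical test performed on measurement outcomes is a special quantum test, so that each classical hockey-stick divergence between outcome distributions is dominated by the corresponding quantum one. Monotonicity of the relevant ``$\log$-of-$H$'' transform together with a supremum over POVMs then completes the argument.

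\emph{Step 1 (classical representation).} For classical $P,Q$, rewriting Definition~\ref{def_c-alpha_div} as $D_\alpha(P\|Q)=\tfrac{1}{\alpha-1}\log\!\big(1+(\alpha-1)H^{\mathrm{cl}}_\alpha(P\|Q)\big)$ identifies $H^{\mathrm{cl}}_\alpha(P\|Q):=\tfrac{1}{\alpha-1}\big(\sum_x P(x)^\alpha Q(x)^{1-\alpha}-1\big)=\sum_x Q(x)\,f_\alpha\!\big(P(x)/Q(x)\big)$, the classical $f$-divergence for $f_\alpha(t)=\tfrac{t^\alpha-1}{\alpha-1}\in\cF$ (Definition~\ref{def_f_div_integral_rep}; here $f_\alpha''(t)=\alpha t^{\alpha-2}$). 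The classical counterpart of the integral representation in Definition~\ref{def_f_div_integral_rep} \cite{Sason_2016} specializes, for $f=f_\alpha$, to $H^{\mathrm{cl}}_\alpha(P\|Q)=\alpha\int_1^\infty\!\big(\gamma^{\alpha-2}E_\gamma(P\|Q)+\gamma^{-\alpha-1}E_\gamma(Q\|P)\big)\,d\gamma$, which is the classical version of the defining formula for $H_\alpha$ in Definition~\ref{def_quant_renyi_hockey}.

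\emph{Step 2 (domination and supremum).} Fix a POVM $M=\{M_i\}_i$ and $\gamma\ge 1$. For any index subset $A$, the operator $\Lambda_A:=\sum_{i\in A}M_i$ satisfies $0\preceq\Lambda_A\preceq\bbI$, so by Definitions~\ref{def_class_hockey_stick} and~\ref{f_sym_hyp_func},
\[
E_\gamma(P^M_\rho\|P^M_\sigma)=\max_A\tr[\Lambda_A(\rho-\gamma\sigma)]\leq\max_{0\preceq\Lambda\preceq\bbI}\tr[\Lambda(\rho-\gamma\sigma)]=E_\gamma(\rho\|\sigma),
\]
and likewise $E_\gamma(P^M_\sigma\|P^M_\rho)\leq E_\gamma(\sigma\|\rho)$. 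Integrating these two inequalities against the nonnegative weights $\alpha\gamma^{\alpha-2}$ and $\alpha\gamma^{-\alpha-1}$ over $[1,\infty)$ and using Step~1 yields $H^{\mathrm{cl}}_\alpha(P^M_\rho\|P^M_\sigma)\leq H_\alpha(\rho\|\sigma)$. Since the map $x\mapsto\tfrac{1}{\alpha-1}\log(1+(\alpha-1)x)$ is increasing wherever $1+(\alpha-1)x>0$ (its derivative there equals $\tfrac{1}{1+(\alpha-1)x}>0$), this gives $D_\alpha(P^M_\rho\|P^M_\sigma)\leq D_\alpha(\rho\|\sigma)$; taking the supremum over $M$ in Definition~\ref{def_sand} yields $\check D_\alpha(\rho\|\sigma)\leq D_\alpha(\rho\|\sigma)$, which is the claim.

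The step I expect to be the main obstacle is Step~1: establishing the classical hockey-stick integral representation for $f_\alpha$ with the correct normalization and two-sided ($\gamma<1$ versus $\gamma>1$) bookkeeping, so that the contribution of the $\gamma<1$ range of the underlying $f$-divergence identity exactly accounts for the ``$-1$'' in $H^{\mathrm{cl}}_\alpha$; checking convergence of all the integrals, which is precisely where the hypothesis ``$\alpha\in[0,1)$ with $\rho\not\perp\sigma$, or else $\rho\ll\sigma$'' enters and is needed; and verifying that this hypothesis is inherited by the outcome pair $(P^M_\rho,P^M_\sigma)$, so that $1+(\alpha-1)H^{\mathrm{cl}}_\alpha(P^M_\rho\|P^M_\sigma)>0$ and the monotonicity invoked in Step~2 is legitimate. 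Step~2 itself is routine.
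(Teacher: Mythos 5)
Your proposal is correct. Note that the paper does not prove this statement at all: it is recorded as Fact~\ref{fact_P_div_integral_rep} with a citation to \cite{Hirche_2024}, so there is no internal proof to compare against. Your argument — measurement data-processing of $E_\gamma$ (any subset of outcomes gives a test $0\preceq\Lambda_A\preceq\bbI$), the classical integral representation identifying $H^{\mathrm{cl}}_\alpha$ with the $f_\alpha$-divergence so that $D_\alpha$ of the outcome distributions is the ordinary classical R\'enyi divergence, monotonicity of $x\mapsto\tfrac{1}{\alpha-1}\log(1+(\alpha-1)x)$, and a supremum over POVMs — is exactly the standard route taken in the cited reference, and the caveats you flag are easily discharged (e.g.\ for $\alpha<1$, $\rho\not\perp\sigma$ gives $E_1(P^M_\rho\|P^M_\sigma)\le E_1(\rho\|\sigma)<1$, so the outcome distributions are not perpendicular and $1+(\alpha-1)H^{\mathrm{cl}}_\alpha>0$; for $\alpha>1$ the inequality is trivial whenever the right-hand side is $+\infty$).
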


\begin{fact}
\label{fact:commuting_renyi_full}
Let $\rho$ and $\sigma$ be two commuting quantum states on a Hilbert space $\mathcal{H}$ (i.e., $[\rho, \sigma] = 0$). For any order $\alpha \in (0, \infty) \setminus \{1\}$, the Hockey Stick R\'enyi divergence coincides with the Measured R\'enyi divergence:
\begin{equation}
    D_{\alpha}(\rho \| \sigma) = \check{D}_{\alpha}(\rho \| \sigma) = \frac{1}{\alpha-1} \log \left( \sum_i p_i^\alpha q_i^{1-\alpha} \right),
\end{equation}
where $\{p_i\}$ and $\{q_i\}$ are the eigenvalues of $\rho$ and $\sigma$ in their common eigen-basis.
\end{fact}

\begin{fact}\label{T_hyp_func_equi}
    For any $\rho,\sigma \in \cD(\cH_A)$, for any $\alpha \in [0,1]$, we have the following equality,
    \begin{equation}\beta(\alpha|\rho\|\sigma) =
        \min_{\substack{0 \preceq \Lambda \preceq \bbI:\\
        \tr[\Lambda \rho] \leq \alpha}} \tr[(\bbI - \Lambda)\sigma] = \min_{\substack{0 \preceq \Lambda \preceq \bbI:\\
        \tr[\Lambda \rho] = \alpha}} \tr[(\bbI - \Lambda)\sigma].
        \label{T_hyp_func_equi_eq}
    \end{equation}
\end{fact}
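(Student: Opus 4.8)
The plan is to note that the first equality is simply Definition~\ref{Q-def_min_type2_err}, so the only content is the second one: restricting the feasible set from $\{0\preceq\Lambda\preceq\bbI:\tr[\Lambda\rho]\le\alpha\}$ to $\{0\preceq\Lambda\preceq\bbI:\tr[\Lambda\rho]=\alpha\}$ leaves the minimum of $\Lambda\mapsto\tr[(\bbI-\Lambda)\sigma]$ unchanged. One inequality is immediate: the equality-constrained set is contained in the inequality-constrained one, so minimizing the same objective over the smaller set can only increase the value, giving ``middle $\le$ right''. Both feasible sets are nonempty (e.g.\ $\Lambda=\alpha\bbI$ belongs to the equality-constrained one, since $\tr[\alpha\bbI\,\rho]=\alpha$) and compact in finite dimension, with continuous objective, so all minima are attained and I may freely work with minimizers.

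For the reverse inequality I would take a minimizer $\Lambda^{\star}$ of the inequality-constrained problem. If $\tr[\Lambda^{\star}\rho]=\alpha$ there is nothing to do, so assume $\tr[\Lambda^{\star}\rho]=\alpha'<\alpha$, and push $\Lambda^{\star}$ toward the identity along the segment $\Lambda_t:=(1-t)\Lambda^{\star}+t\,\bbI$ for $t\in[0,1]$. Each $\Lambda_t$ is a convex combination of two operators in the interval $[0,\bbI]$, hence still satisfies $0\preceq\Lambda_t\preceq\bbI$. The function $t\mapsto\tr[\Lambda_t\rho]=(1-t)\alpha'+t$ is continuous, equals $\alpha'$ at $t=0$ and $1$ at $t=1$, and since $\alpha'<\alpha\le 1$ the intermediate value theorem yields a $t^{\star}\in(0,1]$ (explicitly $t^{\star}=(\alpha-\alpha')/(1-\alpha')$) with $\tr[\Lambda_{t^{\star}}\rho]=\alpha$, so $\Lambda_{t^{\star}}$ is feasible for the equality-constrained problem.

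Finally I would check that this move does not increase the objective: from $\bbI-\Lambda_{t^{\star}}=(1-t^{\star})(\bbI-\Lambda^{\star})$ and $\tr[(\bbI-\Lambda^{\star})\sigma]\ge 0$ (because $\bbI-\Lambda^{\star}\succeq 0$ and $\sigma\succeq 0$), one gets $\tr[(\bbI-\Lambda_{t^{\star}})\sigma]=(1-t^{\star})\tr[(\bbI-\Lambda^{\star})\sigma]\le\tr[(\bbI-\Lambda^{\star})\sigma]=\beta(\alpha|\rho\|\sigma)$. Hence the equality-constrained minimum is $\le$ the inequality-constrained one, and together with the easy inclusion the two coincide. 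The argument is essentially routine; the only points that need care are the nonemptiness of the equality-constrained feasible set and the fact that the interpolants remain between $0$ and $\bbI$, both handled by the explicit choices above, so I do not anticipate a genuine obstacle.
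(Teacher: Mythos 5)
Your proposal is correct and is essentially the paper's own argument: your interpolant $\Lambda_{t^\star}$ with $t^\star=(\alpha-\alpha')/(1-\alpha')$ is exactly the paper's operator $\hat\Lambda=\bbI-\delta(\bbI-\tilde\Lambda)$ with $\delta=(1-\alpha)/(1-\alpha')$, and the objective comparison $\tr[(\bbI-\Lambda_{t^\star})\sigma]=(1-t^\star)\tr[(\bbI-\Lambda^\star)\sigma]$ is the same scaling step. The only cosmetic difference is your framing via convex interpolation and the intermediate value theorem (plus explicitly noting the trivial inclusion and attainment), which is fine.
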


\begin{proof}
    Given any $\alpha \in [0,1]$, we consider a POVM $\{\Tilde{\Lambda},\bbI - \Tilde{\Lambda}\}$, (where $0 \preceq \Tilde{\Lambda} \preceq \bbI$) such that $\tr[\Tilde{\Lambda} \rho] = \alpha' < \alpha$ for some $\alpha' \in (0,1)$. We denote $\beta' := \tr[(\bbI - \Tilde{\Lambda})\sigma]$ and $\delta := \frac{1 -\alpha}{1-\alpha'} \in (0,1)$. We now consider an operator $\hat{\Lambda} := \bbI -\delta\left(\bbI -\Tilde{\Lambda}\right)$. The facts $\delta \in (0,1)$ and $\left(\bbI - \Tilde{\Lambda}\right) \preceq \bbI$ yield that $0\preceq \hat{\Lambda} \preceq \bbI$. Thus, $\{\hat{\Lambda},\bbI - \hat{\Lambda}\}$ forms a valid POVM, for which $\tr[\hat{\Lambda} \rho] = 1 - \delta(1-\alpha') = \alpha$ and
    \begin{align*}
        \tr[(\bbI - \hat{\Lambda})\sigma] = \delta \tr[(\bbI - \Tilde{\Lambda})\sigma]
      \overset{(a)}{<} \beta',
    \end{align*}
    where $(a)$ follows from the fact that $\delta < 1$ and $\beta' = \tr[(\bbI - \Tilde{\Lambda})\sigma]$. Thus, for any POVM $\{\Tilde{\Lambda},\bbI - \Tilde{\Lambda}\}$,  such that $\tr[\Tilde{\Lambda} \rho] < \alpha$, there exists a POVM $\{\hat{\Lambda},\bbI - \hat{\Lambda}\}$ such that $\tr[\Tilde{\Lambda} \rho] = \alpha$ and $\tr[(\bbI - \hat{\Lambda})\sigma] < \tr[(\bbI - \Tilde{\Lambda})\sigma]$. Therefore, the optimum $\Lambda$ in the LHS of \eqref{T_hyp_func_equi_eq} satisfies $\tr[{\Lambda} \rho] = \alpha$. This proves Fact \ref{T_hyp_func_equi}.
\end{proof}

\begin{fact}[\cite{SW12}]\label{Fact_hockey_equiv_form}
    For any $\rho,\sigma \in \cD(\cH_A)$, the quantum hockey stick divergence $E_{\gamma}(\rho\|\sigma)$ of order $\gamma \geq 0$ (see Definition \ref{f_sym_hyp_func}) has the following equivalent form,
    \begin{equation}
        E_{\gamma}(\rho\|\sigma) = \tr|(\rho - \gamma\sigma)|_{+},\label{Fact_hockey_equiv_form_eq}
    \end{equation}
    where $|\cdot|_{+}$ denotes the positive part of the operator, i.e., $|O|_{+} := \sum_{i} \max\{0,\lambda_i\}\ketbra{\psi_i}$, where $\lambda_i$ and $\ket{\psi_i}$ are the eigenvalues and eigenvectors of $O$, respectively.
\end{fact}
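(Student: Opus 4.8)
The plan is to work directly with the Hermitian operator $X := \rho - \gamma\sigma$ and its Jordan decomposition into positive and negative parts. Write $X = X_+ - X_-$, where $X_+ := |X|_+$ and $X_- := |X|_+ - X$ are both positive semidefinite and have mutually orthogonal supports, and let $\Pi_+$ denote the orthogonal projector onto $\supp(X_+)$. The claim to prove is then that the optimization $\max_{0 \preceq \Lambda \preceq \bbI}\tr[\Lambda X]$ defining $E_{\gamma}(\rho\|\sigma)$ is attained at $\Lambda = \Pi_+$ with value $\tr X_+ = \tr|X|_+$.

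First I would establish the upper bound $E_{\gamma}(\rho\|\sigma) \le \tr|X|_+$. For any feasible $\Lambda$ with $0 \preceq \Lambda \preceq \bbI$, split $\tr[\Lambda X] = \tr[\Lambda X_+] - \tr[\Lambda X_-]$. Since $\Lambda \succeq 0$ and $X_- \succeq 0$, the trace of their product is nonnegative (for example $\tr[\Lambda X_-] = \tr[\sqrt{X_-}\,\Lambda\,\sqrt{X_-}] \ge 0$ by cyclicity), so $\tr[\Lambda X] \le \tr[\Lambda X_+]$; and since $\bbI - \Lambda \succeq 0$ and $X_+ \succeq 0$, the same reasoning gives $\tr[(\bbI - \Lambda)X_+] \ge 0$, i.e.\ $\tr[\Lambda X_+] \le \tr X_+$. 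Chaining these two inequalities yields $\tr[\Lambda X] \le \tr X_+$ for every feasible $\Lambda$.

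Next comes achievability. The projector $\Pi_+$ satisfies $0 \preceq \Pi_+ \preceq \bbI$, so it is an admissible choice, and $\tr[\Pi_+ X] = \tr[\Pi_+ X_+] - \tr[\Pi_+ X_-] = \tr X_+ - 0 = \tr X_+$, using $\Pi_+ X_+ = X_+$ (since $\Pi_+$ is the support projector of $X_+$) and $\Pi_+ X_- = 0$ (orthogonality of the supports). This matches the upper bound, so the maximum equals $\tr X_+ = \tr|(\rho - \gamma\sigma)|_+$, which is Fact~\ref{Fact_hockey_equiv_form}.

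There is no genuine obstacle here; the only point requiring care is the existence and elementary properties of the Jordan decomposition of $X$ — namely that $\supp(X_+)$ and $\supp(X_-)$ are orthogonal and that $\Pi_+$ fixes $X_+$ and annihilates $X_-$ — all immediate from the spectral theorem applied to $X = \rho - \gamma\sigma$. Equivalently, one may phrase the whole argument in an eigenbasis $\{\ket{\psi_i}\}$ of $X$ with eigenvalues $\{\lambda_i\}$: then $\tr[\Lambda X] = \sum_i \lambda_i \bra{\psi_i}\Lambda\ket{\psi_i}$ with weights $\bra{\psi_i}\Lambda\ket{\psi_i} \in [0,1]$, a sum maximized by putting weight $1$ on the indices with $\lambda_i > 0$ and weight $0$ elsewhere, giving $\sum_{i:\lambda_i > 0}\lambda_i = \tr|X|_+$. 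The degenerate cases $\gamma = 0$ (where $X = \rho \succeq 0$ and both sides equal $1$) and $X \preceq 0$ (where both sides equal $0$, with $\Lambda = 0$ optimal) are covered automatically.
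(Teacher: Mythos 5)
Your proof is correct. The paper itself gives no proof of this fact---it is simply imported from the cited reference---and your argument is precisely the standard one underlying that reference: the variational characterization $\max_{0\preceq\Lambda\preceq\bbI}\tr[\Lambda X]=\tr|X|_{+}$ via the Jordan decomposition of $X=\rho-\gamma\sigma$, with the upper bound from $\tr[\Lambda X_{-}]\ge 0$ and $\tr[(\bbI-\Lambda)X_{+}]\ge 0$ and achievability at the projector onto $\supp(X_{+})$; the eigenbasis reformulation and the degenerate cases are handled correctly as well.
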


\begin{fact}[{\cite[Lemma $3.1$]{BST2019}}]\label{fact_state_trans}
Let $\rho_1, \sigma_1 \in \mathcal{S}(\mathbb{C}^2)$ be commuting qubit quantum states and 
let $\rho_2, \sigma_2 \in \mathcal{S}(\mathbb{C}^d)$.  
Then the following two conditions are equivalent:
\begin{enumerate}
    \item There exists a CP-TP map $\mathcal{E} : \mathcal{S}(\mathbb{C}^2) \to \mathcal{S}(\mathbb{C}^d)$ such that 
    $\mathcal{E}(\rho_1) = \rho_2$ and $\mathcal{E}(\sigma_1) = \sigma_2$;
    \item $D_{\max}(\rho_1 \Vert \sigma_1) \ge D_{\max}(\rho_2 \Vert \sigma_2)$ 
    and $D_{\max}(\sigma_1 \Vert \rho_1) \ge D_{\max}(\sigma_2 \Vert \rho_2)$.
\end{enumerate}
\end{fact}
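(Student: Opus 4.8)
The implication $(1)\Rightarrow(2)$ is the easy direction: I would simply invoke the data-processing inequality for the max-relative entropy (it is a divergence in the sense of Definition~\ref{def_data_proc_func}). Applying the channel $\mathcal{E}$ of~(1) to the pair $(\rho_1,\sigma_1)$ gives $D_{\max}(\rho_1\|\sigma_1)\ge D_{\max}(\rho_2\|\sigma_2)$, and applying it to the swapped pair $(\sigma_1,\rho_1)$ gives $D_{\max}(\sigma_1\|\rho_1)\ge D_{\max}(\sigma_2\|\rho_2)$, which is~(2).

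For $(2)\Rightarrow(1)$ the plan is to reduce the qubit input to a two-point classical experiment and then build an explicit measure-and-prepare channel. Since $\rho_1,\sigma_1$ commute they are simultaneously diagonal in some orthonormal basis $\{\ket0,\ket1\}$ of $\mathbb{C}^2$; write $\rho_1=\mathrm{diag}(p,1-p)$ and $\sigma_1=\mathrm{diag}(q,1-q)$. If $p=q$ then $\rho_1=\sigma_1$, so~(2) forces $D_{\max}(\rho_2\|\sigma_2)=D_{\max}(\sigma_2\|\rho_2)=0$ and hence $\rho_2=\sigma_2$ (a positive semidefinite operator of trace zero vanishes), and the constant channel $X\mapsto(\mathrm{Tr}\,X)\rho_2$ does the job. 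Otherwise, after possibly relabelling $\ket0\leftrightarrow\ket1$ (which swaps $p\leftrightarrow1-p$, $q\leftrightarrow1-q$ and leaves $\rho_2,\sigma_2$ unchanged) I may assume $p>q$, so that $1-p<1-q$, and a one-line computation of the largest likelihood ratio gives
\begin{equation*}
D_{\max}(\rho_1\|\sigma_1)=\log\frac{p}{q},\qquad D_{\max}(\sigma_1\|\rho_1)=\log\frac{1-q}{1-p},
\end{equation*}
with the obvious convention ($+\infty$) when $q=0$ or $p=1$.

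Next I would search for $\mathcal{E}$ in the entanglement-breaking form $\mathcal{E}(X)=\bra{0}X\ket{0}\,\omega_0+\bra{1}X\ket{1}\,\omega_1$, which is automatically CPTP once $\omega_0,\omega_1\in\mathcal{S}(\mathbb{C}^d)$ and satisfies $\mathcal{E}(\rho_1)=p\,\omega_0+(1-p)\,\omega_1$, $\mathcal{E}(\sigma_1)=q\,\omega_0+(1-q)\,\omega_1$. Requiring these to equal $\rho_2,\sigma_2$ is a $2\times2$ linear system with the unique solution
\begin{equation*}
\omega_0=\frac{(1-q)\rho_2-(1-p)\sigma_2}{p-q},\qquad \omega_1=\frac{p\,\sigma_2-q\,\rho_2}{p-q},
\end{equation*}
and one checks that $\mathrm{Tr}\,\omega_0=\mathrm{Tr}\,\omega_1=1$ automatically. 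The remaining conditions $\omega_0\succeq0$ and $\omega_1\succeq0$ are, using $p>q$, equivalent to $\sigma_2\preceq\frac{1-q}{1-p}\rho_2$ and $\rho_2\preceq\frac{p}{q}\sigma_2$, i.e.\ to $D_{\max}(\sigma_2\|\rho_2)\le\log\frac{1-q}{1-p}=D_{\max}(\sigma_1\|\rho_1)$ and $D_{\max}(\rho_2\|\sigma_2)\le\log\frac{p}{q}=D_{\max}(\rho_1\|\sigma_1)$ --- which is exactly hypothesis~(2). Hence $\omega_0,\omega_1$ are legitimate states and $\mathcal{E}$ is the desired CPTP map.

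The structural content is thus short; the part that needs the most care is the bookkeeping of the degenerate configurations --- $p=q$, $q=0$, $p=1$, and the associated infinite values of $D_{\max}$ or support mismatches between $\rho_2$ and $\sigma_2$ --- where one of the $\omega_i$ collapses to $\rho_2$ or $\sigma_2$ and only the other positivity constraint survives. One must also pair each operator inequality with the correct $D_{\max}$ term, since it is the normalisation $p>q$ that makes $p/q$ the ratio governing $D_{\max}(\rho_1\|\sigma_1)$ and $\tfrac{1-q}{1-p}$ the one governing $D_{\max}(\sigma_1\|\rho_1)$. Beyond this case analysis I expect no essential obstacle; the argument above is (a reformulation of) \cite[Lemma~3.1]{BST2019}.
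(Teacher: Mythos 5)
Your proposal is correct. Note that the paper itself does not prove this statement; it is imported as a Fact with a citation to \cite[Lemma 3.1]{BST2019}, so there is no in-paper proof to compare against. Your argument — data processing of $D_{\max}$ for $(1)\Rightarrow(2)$, and for $(2)\Rightarrow(1)$ diagonalizing the commuting pair, solving the $2\times 2$ linear system for the measure-and-prepare outputs $\omega_0,\omega_1$, and observing that their positivity is exactly the two $D_{\max}$ inequalities (with the degenerate cases $p=q$, $q=0$, $p=1$ handled as you indicate) — is the standard proof of that lemma and is sound as written.
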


\begin{fact}\label{T_hyp_func_equi2}
    For any $\rho,\sigma \in \cD(\cH_A)$, for any $\alpha \in [0,1]$ the following equality holds,
    \begin{equation}
        \max_{\substack{0 \preceq \Lambda \preceq \bbI:\\
        \tr[\Lambda \rho] \geq \alpha}} \tr[(\bbI - \Lambda)\sigma] = \max_{\substack{0 \preceq \Lambda \preceq \bbI:\\
        \tr[\Lambda \rho] = \alpha}} \tr[(\bbI - \Lambda)\sigma].\label{T_hyp_func_equi2_eq}
    \end{equation}
\end{fact}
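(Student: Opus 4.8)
The plan is to mimic the scaling argument used for Fact~\ref{T_hyp_func_equi}, but run it in the opposite direction, exploiting that the objective $\tr[(\bbI-\Lambda)\sigma]$ decreases as $\Lambda$ grows. First I would dispose of the easy inequality ``$\geq$'' in \eqref{T_hyp_func_equi2_eq}: every $\Lambda$ with $0\preceq\Lambda\preceq\bbI$ and $\tr[\Lambda\rho]=\alpha$ also satisfies $\tr[\Lambda\rho]\geq\alpha$, so the right-hand side is a maximization of the same objective over a subset of the feasible region of the left-hand side.

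For the reverse inequality the key step is to show that any $\Lambda$ feasible for the left-hand side, say with $\alpha':=\tr[\Lambda\rho]\geq\alpha$, can be traded for a feasible operator of the right-hand side with no smaller value of $\tr[(\bbI-\Lambda)\sigma]$. If $\alpha'=\alpha$ there is nothing to do. If $\alpha'>\alpha$ then $\alpha'>0$, and I would set $t:=\alpha/\alpha'\in[0,1)$ and $\hat{\Lambda}:=t\Lambda$. Since $0\preceq\hat{\Lambda}\preceq\Lambda\preceq\bbI$, the pair $\{\hat{\Lambda},\bbI-\hat{\Lambda}\}$ is a valid POVM, and $\tr[\hat{\Lambda}\rho]=t\alpha'=\alpha$, so $\hat{\Lambda}$ is feasible for the right-hand side. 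The change in objective is
\[
\tr[(\bbI-\hat{\Lambda})\sigma]-\tr[(\bbI-\Lambda)\sigma]=\tr[(\Lambda-\hat{\Lambda})\sigma]=(1-t)\,\tr[\Lambda\sigma]\geq 0,
\]
because $t\leq 1$ and $\tr[\Lambda\sigma]\geq 0$ (as $\Lambda,\sigma\succeq 0$). Hence the objective does not decrease under this replacement.

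Taking the maximum over all $\Lambda$ feasible for the left-hand side and applying this trade-off gives the left-hand side $\leq$ the right-hand side, which together with the easy direction yields \eqref{T_hyp_func_equi2_eq}. I do not expect any genuine obstacle here; the only point needing a moment's care is the degenerate case $\alpha'=0$, which can occur only when $\alpha=0$, in which case $\Lambda$ is already feasible for the right-hand side and no rescaling is needed. One could additionally remark that both maxima are attained, since $\{\Lambda:0\preceq\Lambda\preceq\bbI\}$ is compact and the constraints and objective are continuous, but this is not essential to the argument.
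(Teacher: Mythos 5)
Your proof is correct and is essentially the approach the paper intends: the paper simply defers to the rescaling argument of Fact~\ref{T_hyp_func_equi}, and your construction $\hat{\Lambda}=t\Lambda$ with $t=\alpha/\alpha'$ is exactly the mirrored version of that argument (shrinking $\Lambda$ rather than shrinking $\bbI-\Lambda$), with the easy inclusion and the degenerate case $\alpha'=0$ handled properly.
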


\begin{proof}
    Proof of Fact \ref{T_hyp_func_equi2} follows directly from the proof of Fact \ref{T_hyp_func_equi}.
\end{proof}

\begin{fact}\label{fact_DP_hockey_rel}
    Consider $\rho,\sigma \in \cD(\cH_A)$. Then, for any $\eps \geq 0,\delta \in [0,1)$, the following statements are equivalent, 

    \textbf{(1)} $\rho$ and $\sigma$ satisfies,
    \begin{equation}
        \tr[\Lambda \rho] \leq e^{\eps} \tr[\Lambda \sigma]+ \delta, \mbox{~for any } 0 \preceq \Lambda \preceq \bbI. \label{fact_DP_hockey_rel_eq}
    \end{equation}

    \textbf{(2)} $ E_{e^\eps}(\rho\|\sigma) \leq \delta$, where $E_{e^\eps}(\cdot\|\cdot)$ is the quantum hockey stick divergence (see Definition \ref{f_sym_hyp_func}) of order $e^\eps$.
\end{fact}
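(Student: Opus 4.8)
The plan is to observe that statements \textbf{(1)} and \textbf{(2)} are merely two ways of writing the same inequality once the definition of the quantum hockey stick divergence is unfolded. By Definition \ref{f_sym_hyp_func} and linearity of the trace, $E_{e^\eps}(\rho\|\sigma) = \max_{0 \preceq \Lambda \preceq \bbI} \tr[\Lambda(\rho - e^\eps\sigma)] = \max_{0 \preceq \Lambda \preceq \bbI}\left(\tr[\Lambda\rho] - e^\eps\tr[\Lambda\sigma]\right)$. So the whole argument reduces to passing between a ``for all $\Lambda$'' statement and the maximum over $\Lambda$.

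For the direction $\textbf{(1)} \Rightarrow \textbf{(2)}$, I would start from the hypothesis \eqref{fact_DP_hockey_rel_eq}, rearrange it to $\tr[\Lambda\rho] - e^\eps\tr[\Lambda\sigma] \leq \delta$ valid for every $\Lambda$ with $0 \preceq \Lambda \preceq \bbI$, and then take the maximum of the left-hand side over all such $\Lambda$; by the displayed form of $E_{e^\eps}(\rho\|\sigma)$ above, this maximum equals $E_{e^\eps}(\rho\|\sigma)$, yielding $E_{e^\eps}(\rho\|\sigma) \leq \delta$. For the converse $\textbf{(2)} \Rightarrow \textbf{(1)}$, I would fix an arbitrary $\Lambda$ with $0 \preceq \Lambda \preceq \bbI$ and note that $\tr[\Lambda\rho] - e^\eps\tr[\Lambda\sigma] \leq \max_{0 \preceq \Lambda' \preceq \bbI}\left(\tr[\Lambda'\rho] - e^\eps\tr[\Lambda'\sigma]\right) = E_{e^\eps}(\rho\|\sigma) \leq \delta$, which rearranges to exactly \eqref{fact_DP_hockey_rel_eq}.

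There is no real obstacle here; the statement is essentially a restatement of Definition \ref{f_sym_hyp_func}. The only point worth a word of care is that the extremum in that definition is a genuine maximum rather than a mere supremum, but this is automatic since $\Lambda \mapsto \tr[\Lambda(\rho-e^\eps\sigma)]$ is continuous (indeed linear) on the compact set $\{\Lambda : 0\preceq \Lambda\preceq\bbI\}$, and in any case both directions go through verbatim with ``$\sup$'' in place of ``$\max$''. If desired, one may additionally remark via Fact \ref{Fact_hockey_equiv_form} that condition \textbf{(2)} reads $\tr|(\rho - e^\eps\sigma)|_{+} \leq \delta$, which makes transparent the interpretation of $\delta$ as the excess mass of $\rho$ over $e^\eps\sigma$.
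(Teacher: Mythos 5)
Your proposal is correct and follows essentially the same route as the paper's own proof: unfold Definition \ref{f_sym_hyp_func}, take the maximum over $\Lambda$ for the forward direction, and bound any fixed $\Lambda$ by that maximum for the converse. The extra remarks on max versus sup and on Fact \ref{Fact_hockey_equiv_form} are fine but not needed.
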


\section{Characteristic Region of $(\eps,\delta)$-Differentially Private Quantum Mechanisms}\label{sec_char_region}
In this section, we introduce the concept of the characteristic region of a quantum mechanism that satisfies $(\eps,\delta)$-quantum differential privacy (QDP) as defined in \cite[Definition $2$]{ZY2017}. The characteristic region provides a geometric representation of the trade-offs between the privacy parameters $\eps$ and $\delta$ for a given private quantum mechanism. To illustrate this concept, consider a hypothesis test between two quantum states $\rho$ and $\sigma$ over a finite dimensional Hilbert space $H$ as follows,
\begin{align}
    H &:= \begin{cases}
        H_0: \mbox{Given quantum state is} ~\rho,  \\
        H_1: \mbox{Given quantum state is} ~\sigma. 
    \end{cases}
\end{align}

Under the choice of a rejection POVM $0\preceq \Lambda \preceq \bbI$ for $H_0$, the type-I error is given by $\alpha_{\Lambda} := \tr\left[\Lambda \rho\right]$ and the type-II error is given by $\beta_{\Lambda} := 1 - \tr\left[\Lambda \sigma\right]$.
Now, for any pair  $(\rho,\sigma)$ of quantum states, the characteristic region is defined as follows,

\begin{definition}\label{def_char_region}
    For any pair of quantum states $(\rho,\sigma)$, the characteristic region $\cR(\rho,\sigma)$ is defined as follows,
    \begin{equation}
        \cR(\rho,\sigma) := \left\{(\alpha_{\Lambda},\beta_{\Lambda}) : 0 \preceq \Lambda \preceq \bbI\right\}.\label{def_char_region_eq}
    \end{equation}
\end{definition}

Further, if $(\rho,\sigma)$ is $(\eps,\delta)$-QDP, then for any rejection POVM $\Lambda$, from \eqref{def_DP_state_eq} the following constraints hold,
\begin{align}
    \beta_{\Lambda} &\geq e^{-\eps}(1-\delta -\alpha_{\Lambda}),\label{char_region_eq1}\\
    \beta_{\Lambda} &\geq 1 -\delta - e^{\eps}\alpha_{\Lambda},\label{char_region_eq2}\\
    \beta_{\Lambda} &\leq 1 - e^{-\eps}(\alpha_{\Lambda} - \delta),\label{char_region_eq3}\\
    \beta_{\Lambda} &\leq e^{\eps}(1- \alpha_{\Lambda}) + \delta.\label{char_region_eq4}
\end{align} 

\begin{figure}[h]
    \centering
    \begin{tikzpicture}

\coordinate (A) at (1.75,1.75);
\coordinate (B) at (0,5);
\coordinate (C) at (0,5.5);
\coordinate (D) at (0.5,5.5);
\coordinate (E) at (3.75,3.75);
\coordinate (F) at (5.5,0.5);
\coordinate (G) at (5.5,0);
\coordinate (H) at (5,0);

\draw[fill=red] (A) circle (0.04); 
\draw[fill=red] (B) circle (0.04); 
\draw[fill=red] (C) circle (0.04); 
\draw[fill=red] (D) circle (0.04); 
\draw[fill=red] (E) circle (0.04); 
\draw[fill=red] (F) circle (0.04); 
\draw[fill=red] (G) circle (0.04); 
\draw[fill=red] (H) circle (0.04); 
\draw[fill=red] (0,0) circle (0.04); 
\draw[fill=red] (5.5,5.5) circle (0.04);

\draw[->] (0,0) -- (6,0) node[below,xshift = -4cm,yshift = 0.1cm] {\makecell{$\alpha \longrightarrow$ \\ (Type-I error)}};
\draw[->] (0,0) -- (0,6) node[left, rotate = 90, xshift = -1.75cm, yshift = 0.5cm] {\makecell{$\beta \longrightarrow$ \\ (Type-II error)}};
\draw[dotted,thick] (0,0) -- (5.5,5.5) node[anchor = north east, rotate = 45,xshift = -0.75cm] {$\alpha = \beta$};
\draw[dotted,thick] (0,5.5) -- (5.5,0 ) node[anchor = north east, rotate = -45,xshift = -2cm] {$\alpha + \beta = 1$};
\draw[-] (0,5.5) -- (5.5,5.5) node[above, xshift = -2.5cm] {$\beta = 1$};
\draw[-] (5.5,0) -- (5.5,5.5) node[right, xshift = 0.25cm, yshift = -3cm, rotate = 90] {$\alpha = 1$};
\node[anchor = north east] at (0,0) {$(0,0)$};
\node[anchor =  east] at (0,5.5) {$(0,1)$};
\node[anchor = north west] at (5.5,0) {$(1,0)$};
\node[anchor = south west] at (5.5,5.5) {$(1,1)$};

\filldraw[fill=gray, pattern=north east lines, draw=black, very thick, opacity=0.5] (1.75,1.75) -- (0,5) -- (0,5.5)-- (0.5,5.5) -- (3.75,3.75) -- (5.5, 0.5)-- (5.5,0) -- (5,0)-- cycle;

\node[anchor =  east] at (0, 5) {$(0, 1-\delta)$};
\node[anchor =  north, xshift = -0.25 cm] at (5, 0) {$(1-\delta,0)$};
\node[anchor =  south] at (D) {$(\delta, 1)$};
\node[anchor =  west, xshift = 0 cm] at (F) {$(1,\delta)$};

\node[anchor =  east, xshift = -3 cm, yshift = -0.25 cm] (WFP) at (1.75,1.75) {\makecell{$\left(\frac{1-\delta}{1+e^\eps}, \frac{1-\delta}{1+e^\eps}\right)$\\(Worst fixed-point)}};
\node[anchor =  west,xshift=  3cm, yshift = 0.25 cm] (BFP) at (3.75,3.75) {\makecell{$\left(\frac{e^\eps + \delta}{1+e^\eps}, \frac{e^\eps + \delta}{1+e^\eps}\right)$\\(Best fixed-point)}};
\draw[-{Triangle[width=4pt,length=4pt]}] (WFP) -- (A);
\draw[-{Triangle[width=4pt,length=4pt]}] (BFP) -- (E);

\node[anchor = east,xshift = -0.1cm, rotate = -60] at (1,2.8) {\textcolor{red}{(a)}};
\node[anchor = east,xshift = 0.8cm, yshift = -0.5cm, rotate = -30] at (2.8,1) {\textcolor{blue}{(b)}};
\node[anchor = west,xshift = -0.1cm, rotate = -30] at (2,5) {\textcolor{green}{(c)}};

\node[anchor = south,yshift = 0.6cm] at (2.75,2.75) {$\cR(\eps,\delta)$};

\node[anchor = east,xshift = -0cm, rotate = -60] at (5,2) {\textcolor{purple}{(d)}};

\end{tikzpicture}
    \caption{Graphical representation of  $\cR(\eps,\delta)$ : characteristic region of $(\eps,\delta)$-QDP (in shaded region), where we define the boundaries  \textcolor{red}{(a)}, \textcolor{blue}{(b)}, \textcolor{green}{(c)} and \textcolor{purple}{(d)} to be $\beta = 1- \delta - e^\eps \alpha$, $\beta = e^{-\eps} (1-\delta-\alpha)$, $\beta = 1 - e^{-\eps}(\alpha - \delta)$ and $\beta = e^\eps(1-\alpha) + \delta$ respectively and $\cR(\eps,\delta)$ has two fixed points $(\frac{1-\delta}{1+e^\eps},\frac{1-\delta}{1+e^\eps})$ and $(\frac{e^\eps + \delta}{1+e^\eps}, \frac{e^\eps + \delta}{1+e^\eps})$ as its extremal points, where the former and the latter points are known to be the worst and the best fixed points respectively from the perspective of privacy.}
    \label{fig:privacy_region}
\end{figure}
In Figure \ref{fig:privacy_region}, we illustrate a graphical representation of the characteristic region $\cR(\eps,\delta)$ of $(\eps,\delta)$-QDP.

Therefore, under the constraints mentioned in \cref{char_region_eq1,char_region_eq2,char_region_eq3,char_region_eq4}, we define the characteristic (or operating) region of quantum $(\eps,\delta)$-QDP as follows,

\begin{definition}[Characteristic region of $(\eps,\delta)$-QDP]\label{def_DP_char}
    for some fixed $\eps \geq 0$ and $\delta \in [0,1]$, we define the characteristic region of $(\eps,\delta)$-DP as follows,

    \begin{equation}
        \cR(\eps,\delta) := \left\{(\alpha,\beta) \in [0,1]^2 : ~~\begin{split}
            \beta &\geq e^{-\eps}(1-\delta -\alpha),\\
        \beta &\geq 1 -\delta - e^{\eps}\alpha,\\
        \beta &\leq 1 - e^{-\eps}(\alpha - \delta),\\
        \beta &\leq e^{\eps}(1- \alpha) + \delta.
        \end{split}\right\}.
    \end{equation}
\end{definition}

In Figure \ref{fig:privacy_region} above, we illustrate a graphical representation of the characteristic region of $(\eps,\delta)$-DP. In \cite{KOV15}, the authors studied a characteristic region for classical $(\eps,\delta)$-DP. However, they only considered the region below $\alpha + \beta=1$. This is because, in Figure \ref{fig:privacy_region}, the characteristic region  $\cR(\eps,\delta)$ is symmetric over the straight line $\alpha + \beta=1$. Hirche et al. in \cite{HRF23} plotted $ \cR(\eps,\delta)$ for certain values of $\eps, \delta$.

From \cref{char_region_eq1,char_region_eq2,char_region_eq3,char_region_eq4}, it is trivial to see that for any pair of quantum states $(\rho,\sigma) \in \mathfrak{D}_{(\eps,\delta)}$ (for some fixed $\eps \geq 0$ and $\delta \in [0,1]$), we have $\cR(\rho,\sigma) \subseteq \cR(\eps,\delta)$. 
Moreover, the characteristic region $\cR(\eps,\delta)$ is a convex set. This is because for any two positive operators $0 \preceq \Lambda_1, \Lambda_2 \preceq \bbI$, for any $\theta \in [0,1]$, the operator $\Lambda = \theta \Lambda_1 + (1-\theta) \Lambda_2$ is also a positive operator satisfying $0 \preceq \Lambda \preceq \bbI$. Therefore, it can be verified that $\alpha _\Lambda = \theta \alpha_{\Lambda_1} + (1-\theta) \alpha_{\Lambda_2}$ and $\beta _{\Lambda} = \theta \beta_{\Lambda_1} + (1-\theta) \beta_{\Lambda_2}$, where $(\alpha_{\Lambda_1},\beta_{\Lambda_1})$ and  $(\alpha_{\Lambda_2},\beta_{\Lambda_2})$ belongs to $\cR(\eps,\delta)$. Thus, as $(\alpha_{\Lambda},\beta_{\Lambda})$ also belongs to $\cR(\eps,\delta)$, this implies that $\cR(\eps,\delta)$ is convex. Further, $\cR(\eps,\delta)$ is also a closed set. We will prove this by constructing a pair $(\rho_{(\eps,\delta)},\sigma_{(\eps,\delta)}) \in \mathfrak{D_{(\eps,\delta)}}$, which achieves all the corner points of $\cR(\eps,\delta)$, as mentioned in Figure \ref{fig:privacy_region} above.

Further observe that, in Figure \ref{fig:privacy_region}, $\cR(\eps,\delta)$ has two fixed points (points where $\alpha = \beta$) at $(\frac{1-\delta}{e^\eps+1}, \frac{1-\delta}{e^\eps+1})$ and $(\frac{e^\eps + 
\delta}{e^\eps+1}, \frac{e^\eps +  \delta}{e^\eps+1})$. The former is the worst fixed point and the latter is the best fixed point of $\cR(\eps,\delta)$ from the perspective of privacy. This is because the worst fixed point represents the scenario where the adversary has the most amount of information about the underlying quantum state, i.e., it's least private, while the best fixed point represents the scenario where the attacker has the least information about the underlying quantum state, i.e., it's most private.

The characteristic region $\cR(\eps,\delta)$ also has six more corner/extremal points at $(0,1-\delta)$, $(1-\delta,0)$, $(\delta,1)$, $(1,\delta)$, $(1,\delta)$ and $(\delta,1)$ along with the two fixed points.

In the discussion below, we will give a constructive proof for the existence of a pair of quantum states $(\rho_{(\eps,\delta)},\sigma_{(\eps,\delta)}) \in \mathfrak{D}_{(\eps,\delta)}$ such that $\cR(\rho_{(\eps,\delta)},\sigma_{(\eps,\delta)}) = \cR(\eps,\delta)$ for any given $\eps \geq 0$ and $\delta \in [0,1]$ i.e. we show that for  $(\rho_{(\eps,\delta)},\sigma_{(\eps,\delta)})$, under certain choices of measurements, the pair of type-I ($\alpha$) and type-II errors ($\beta$) achieves the extremal/corner points of $\cR(\eps,\delta)$.

Consider a pair $\rho_{(\eps,\delta)}$ and $\sigma_{(\eps,\delta)}$ over a finite-dimensional Hilbert space $\cH$ such that for a one-dimensional projector $\ketbra{v}$, 

\begin{align}
    \tr[\ketbra{v}\rho_{(\eps,\delta)}] &\triangleq \delta, \label{vketrho}\\
    \tr[\ketbra{v}\sigma_{(\eps,\delta)}] &\triangleq 0.\label{vketsig}
\end{align}

Then, the pair $(\rho_{(\eps,\delta)},\sigma_{(\eps,\delta)})$ achieves the corner point $(\delta,1)$ of $\cR(\eps,\delta)$. 
Similarly, for the one-dimensional projector $\bbI - \ketbra{v}$, the pair $(\rho_{(\eps,\delta)},\sigma_{(\eps,\delta)})$ achieves the corner point $(1-\delta,0)$ of $\cR(\eps,\delta)$.

Further, consider another one-dimensional projector $\ketbra{x}$ which is perpendicular to $\ketbra{v}$, such that,

\begin{align}
    \tr[\ketbra{x}\rho_{(\eps,\delta)}] \triangleq 0, \label{xketrho}\\
    \tr[\ketbra{x}\sigma_{(\eps,\delta)}] \triangleq \delta.\label{xketsig}
\end{align}

Then, the pair $(\rho_{(\eps,\delta)},\sigma_{(\eps,\delta)})$ achieves the corner point $(0,1-\delta)$ of $\cR(\eps,\delta)$.
Similarly, for the one-dimensional projector $\bbI - \ketbra{x}$, the pair $(\rho_{(\eps,\delta)},\sigma_{(\eps,\delta)})$ achieves the corner point $(1,\delta)$ of $\cR(\eps,\delta)$.

Now, we notice that the cardinality of the support of the state pair $(\rho_{(\eps,\delta)},\sigma_{(\eps,\delta)})$ must be at least $3$, since we need two more corner points to be achieved.

Thus, we consider another one-dimensional projector $\bbI - \ketbra{v} - \ketbra{x}$ which is perpendicular to both $\ketbra{v}$ and $\ketbra{x}$. Therefore, from the above discussions, we have,

\begin{align}
    \tr[\left(\bbI - \ketbra{v} - \ketbra{x}\right)\rho_{(\eps,\delta)}] &= 1-\delta, \label{xvcketrho}\\
    \tr[\left(\bbI - \ketbra{v} - \ketbra{x}\right)\sigma_{(\eps,\delta)}] &= 1 - \delta.\label{xvcketsig}
\end{align}

Thus, the pair $(\rho_{(\eps,\delta)},\sigma_{(\eps,\delta)})$ achieves the point $(1-\delta,\delta)$ of $\cR(\eps,\delta)$. However, this is an interior point of $\cR(\eps,\delta)$ and not a corner point. Thus, it can be easily verified that combining the above one-dimensional projectors, we can never achieve the two fixed points of $\cR(\eps,\delta)$. 
Therefore, we consider the support of the state pair $(\rho_{(\eps,\delta)},\sigma_{(\eps,\delta)})$ to be at least $4$.

We now consider another one-dimensional projector $\ketbra{y}$ which is perpendicular to both $\ketbra{v}$ and $\ketbra{x}$, and we consider the two-dimensional projector $(\ketbra{v} + \ketbra{y})$. For this projector, we have two choices: we can either choose it to achieve the worst fixed point, i.e., $(\frac{1-\delta}{1+e^\eps},\frac{1-\delta}{1+e^\eps})$ or the best fixed point, i.e., $(\frac{e^\eps + \delta}{1+e^\eps}, \frac{e^\eps + \delta}{1+e^\eps})$. We will see that the former choice will lead us to some construction of the pair of quantum states, which achieves all the corner points of $\cR(\eps,\delta)$ but fails to satisfy $(\eps,\delta)$-DP condition. See Remark \ref{remark_worst_choice} below for more details. Thus, we choose the two-dimensional projector $(\ketbra{v} + \ketbra{y})$ to achieve the best fixed point. Towards this, we assume that

\begin{align}
    \tr[(\ketbra{v} + \ketbra{y})\rho_{(\eps,\delta)}] &\triangleq \frac{e^\eps+\delta}{1+e^\eps}, \label{vyketrho}\\
    \tr[\left(\bbI - (\ketbra{v} + \ketbra{y})\right)\sigma_{(\eps,\delta)}] &\triangleq \frac{e^\eps+\delta}{1+e^\eps}.\label{vyketsig}
\end{align} 

Thus, the pair $(\rho_{(\eps,\delta)},\sigma_{(\eps,\delta)})$ achieves the best fixed point $(\frac{e^\eps + \delta}{1+e^\eps}, \frac{e^\eps + \delta}{1+e^\eps})$ of $\cR(\eps,\delta)$ with the help of the two-dimensional projector $\ketbra{v} + \ketbra{y}$.

Now \cref{vketrho,vketsig,vyketrho,vyketsig} gives us,

\begin{align}
    \tr[\ketbra{y}\rho_{(\eps,\delta)}] &= \frac{e^\eps(1-\delta)}{1+e^\eps}, \label{yketrho}\\
\tr[\ketbra{y}\sigma_{(\eps,\delta)}] &= \frac{1-\delta}{1+e^\eps}.\label{yketsig}
\end{align}

Finally, consider the one-dimensional projector $\ketbra{z} = \bbI - \ketbra{v} - \ketbra{x} - \ketbra{y}$ which is perpendicular to $\ketbra{v}$, $\ketbra{x}$ and $\ketbra{y}$. Then, from \cref{vketrho,vketsig,xketrho,xketsig,yketrho,yketsig} we have,

\begin{align}
    \tr[\ketbra{z}\rho_{(\eps,\delta)}] &= \frac{1-\delta}{1+e^\eps}, \label{zketrho}\\
\tr[\ketbra{z}\sigma_{(\eps,\delta)}] &= \frac{e^\eps(1-\delta)}{1+e^\eps}.\label{zketsig}
\end{align}

Thus, one can verify that the pair $(\rho_{(\eps,\delta)},\sigma_{(\eps,\delta)})$ achieves the point $(\frac{1-\delta}{1+e^\eps},\frac{1-\delta}{1+e^\eps})$ of $\cR(\eps,\delta)$ with the help of the two-dimensional projector $\ketbra{x} + \ketbra{z}$.

Therefore, we have constructed a pair of quantum states $(\rho_{(\eps,\delta)},\sigma_{(\eps,\delta)})$ which can be written as follows,
\begin{align}
    \rho_{(\eps,\delta)} & = \delta \ketbra{v} + \frac{e^\eps(1-\delta)}{1+e^\eps} \ketbra{y} + \frac{1-\delta}{1+e^\eps} \ketbra{z} + 0 \ketbra{x},\label{eq_rho} \\
    \sigma_{(\eps,\delta)} & = 0 \ketbra{v}+ \frac{1-\delta}{1+e^\eps} \ketbra{y} + \frac{e^\eps(1-\delta)}{1+e^\eps} \ketbra{z} + \delta \ketbra{x}.\label{eq_sigma}
\end{align}

Further, one can verify that the pair of quantum states $(\rho_{(\eps,\delta)},\sigma_{(\eps,\delta)}) \in \mathfrak{D}_{(\eps,\delta)}$ where $\rho_{(\eps,\delta)}$ and $\sigma_{(\eps,\delta)}$ are defined in \eqref{eq_rho} and \eqref{eq_sigma} and satisfies all the corner points of $\cR(\eps,\delta)$. 
Hence, we have $\cR(\rho_{(\eps,\delta)},\sigma_{(\eps,\delta)}) = \cR(\eps,\delta)$. Thus, from the existence of $(\rho_{(\eps,\delta)},\sigma_{(\eps,\delta)})$ and the convexity, we observe that $\cR(\eps,\delta)$ is a closed convex set.

To simplify the definition of $\rho_{(\eps,\delta)}$ and $\sigma_{(\eps,\delta)}$, we consider the Hilbert space $\cH$ to be four-dimensional with the orthonormal basis $\{\ket{00},\ket{01},\ket{10},\ket{11}\}$. Thus, we can rewrite $\rho_{(\eps,\delta)}$ and $\sigma_{(\eps,\delta)}$ as follows,
\begin{align}
    \rho_{(\eps,\delta)} & = \delta \ketbra{00} + (1-\delta)\left(\frac{e^\eps}{1+e^\eps} \ketbra{01} + \frac{1}{1+e^\eps} \ketbra{10}\right),\label{eq_rho_simp}\\
    \sigma_{(\eps,\delta)} & =  (1-\delta)\left(\frac{1}{1+e^\eps} \ketbra{01} + \frac{e^\eps}{1+e^\eps} \ketbra{10}\right) + \delta \ketbra{11}.\label{eq_sigma_simp}
\end{align}

\begin{figure}[h]
    \centering
    \begin{tikzpicture}

\coordinate (A) at (1.75,1.75);
\coordinate (B) at (0,5);
\coordinate (C) at (0,5.5);
\coordinate (D) at (0.5,5.5);
\coordinate (E) at (3.75,3.75);
\coordinate (F) at (5.5,0.5);
\coordinate (G) at (5.5,0);
\coordinate (H) at (5,0);
\coordinate (I) at (1.25,1.25);
\coordinate (J) at (4.25,4.25);

\draw[fill=red] (A) circle (0.04); 
\draw[fill=red] (B) circle (0.04); 
\draw[fill=red] (C) circle (0.04); 
\draw[fill=red] (D) circle (0.04); 
\draw[fill=red] (E) circle (0.04); 
\draw[fill=red] (F) circle (0.04); 
\draw[fill=red] (G) circle (0.04); 
\draw[fill=red] (H) circle (0.04); 
\draw[fill=red] (I) circle (0.04);
\draw[fill=red] (J) circle (0.04);
\draw[fill=red] (0,0) circle (0.04); 
\draw[fill=red] (5.5,5.5) circle (0.04);

\draw[->] (0,0) -- (6,0) node[below,xshift = -4cm,yshift = 0.1cm] {\makecell{$\alpha \longrightarrow$ \\ (Type-I error)}};
\draw[->] (0,0) -- (0,6) node[left, rotate = 90, xshift = -1.75cm, yshift = 0.5cm] {\makecell{$\beta \longrightarrow$ \\ (Type-II error)}};
\draw[dotted,thick] (0,0) -- (5.5,5.5) node[anchor = north east, rotate = 45,xshift = -0.75cm] {$\alpha = \beta$};
\draw[dotted,thick] (0,5.5) -- (5.5,0 ) node[anchor = north east, rotate = -45,xshift = -2cm] {$\alpha + \beta = 1$};
\draw[-] (0,5.5) -- (5.5,5.5) node[above, xshift = -2.5cm] {$\beta = 1$};
\draw[-] (5.5,0) -- (5.5,5.5) node[right, xshift = 0.25cm, yshift = -3cm, rotate = 90] {$\alpha = 1$};
\node[anchor = north east] at (0,0) {$(0,0)$};
\node[anchor =  east] at (0,5.5) {$(0,1)$};
\node[anchor = north west] at (5.5,0) {$(1,0)$};
\node[anchor = south west] at (5.5,5.5) {$(1,1)$};

\filldraw[fill=red, pattern=north east lines, draw=black, very thick, opacity=0.5] (1.75,1.75) -- (0,5) -- (0,5.5)-- (0.5,5.5) -- (3.75,3.75) -- (5.5, 0.5)-- (5.5,0) -- (5,0)-- cycle;
\filldraw[fill= blue, pattern=north west lines, draw=black, very thick, opacity=0.5] (I) -- (0,5) -- (0,5.5)-- (0.5,5.5) -- (J) -- (5.5, 0.5)-- (5.5,0) -- (5,0)-- cycle;

\node[anchor =  east] at (0, 5) {$(0, 1-\delta)$};
\node[anchor =  north, xshift = -0.25 cm] at (5, 0) {$(1-\delta,0)$};
\node[anchor =  south] at (D) {$(\delta, 1)$};
\node[anchor =  west, xshift = 0 cm] at (F) {$(1,\delta)$};

\node[anchor =  east, xshift = -3 cm, yshift = -0.25 cm] (WFP) at (I) {\makecell{$\left(\frac{1 -\delta}{1+e^\eps} - \delta,\frac{1 -\delta}{1+e^\eps} - \delta\right)$ \\ (Point outside $\cR(\eps,\delta)$)}};
\node[anchor =  west,xshift=  3cm, yshift = 0.25 cm] (BFP) at (J) {\makecell{$\left(\frac{e^\eps+\delta}{1 + e^\eps} +\delta, \frac{e^\eps+\delta}{1 + e^\eps} +\delta\right)$\\(Point outside $\cR(\eps,\delta)$)}};
\draw[-{Triangle[width=4pt,length=4pt]}] (WFP) -- (I);
\draw[-{Triangle[width=4pt,length=4pt]}] (BFP) -- (J);

\end{tikzpicture}
    \caption{Graphical representation of  $\cR(\rho'_{(\eps,\delta)},\sigma'_{(\eps,\delta)} )$ (in the whole shaded region) which is strictly larger than $\cR(\eps,\delta)$ (in inner shaded region), where the two extremal points of $\cR(\rho'_{(\eps,\delta)},\sigma'_{(\eps,\delta)} )$ outside $\cR(\eps,\delta)$ are $\left(\frac{1 -\delta}{1+e^\eps} - \delta,\frac{1 -\delta}{1+e^\eps} - \delta\right)$ and $\left(\frac{e^\eps+\delta}{1 + e^\eps} +\delta, \frac{e^\eps+\delta}{1 + e^\eps} +\delta\right)$.}
    \label{fig:privacy_region_invalid}
\end{figure}

\begin{remark}\label{remark_worst_choice}
\begin{itemize}
    \item[] ~
    \item It is important to note that in \eqref{vyketrho} and \eqref{vyketsig}, if we had chosen the RHS to be $\frac{1-\delta}{1+e^\eps}$ and $\frac{1-\delta}{1+e^\eps}$ respectively, we would get another pair of quantum states $(\rho'_{(\eps,\delta)},\sigma'_{(\eps,\delta)})$ which also achieves the extremal points of the characteristic region $\cR(\eps,\delta)$ for any given $\eps \geq 0$ and $\delta \in [0,1]$. This pair is given as follows,
    \begin{align}
        \rho'_{(\eps,\delta)} & = \delta \ketbra{00} + \frac{1-2\delta-e^\eps\delta}{1+e^\eps} \ketbra{01} + \frac{e^\eps + \delta}{1+e^\eps} \ketbra{10},\label{eq_rho'_simp}\\
        \sigma'_{(\eps,\delta)} & =  \frac{e^\eps + \delta}{1+e^\eps} \ketbra{01} + \frac{1-2\delta-e^\eps\delta}{1+e^\eps} \ketbra{10} + \delta \ketbra{11}.\label{eq_sigma'_simp}
    \end{align}
    However, the pair  $(\rho'_{(\eps,\delta)},\sigma'_{(\eps,\delta)})$ also achieves two points $\left(\frac{1 -\delta}{1+e^\eps} - \delta,\frac{1 -\delta}{1+e^\eps} - \delta\right)$ and $\left(\frac{e^\eps+\delta}{1 + e^\eps} +\delta, \frac{e^\eps+\delta}{1 + e^\eps} +\delta\right)$, which are outside the characteristic region $\cR(\eps,\delta)$. From this, we can conclude that the pair $(\rho'_{(\eps,\delta)},\sigma'_{(\eps,\delta)}) \not\in \mathfrak{D}_{(\eps,\delta)}$.
    \item Further, one can verify that the above pair satisfies $(\log (\frac{e^\eps}{1 - \delta(2 + e^\eps)}),\delta)$-DP.

\end{itemize}    
\end{remark}

In Figure \ref{fig:privacy_region_invalid} above, we illustrate a graphical representation of the characteristic region $\cR(\rho'_{(\eps,\delta)},\sigma'_{(\eps,\delta)} )$ along with the extremal points achieved by the pair $(\rho'_{(\eps,\delta)},\sigma'_{(\eps,\delta)})$.

\begin{remark}
    In \cite{BG17}, the authors obtained a figure similar to Figure \ref{fig:privacy_region} by plotting type-1 and type-2 errors for a pair of states $\{\rho_1,\rho_2\}$. In contrast, Figure \ref{fig:privacy_region} is with respect to all pairs $\{\rho_1,\rho_2\}$ which are $(\eps,\delta)$-DP and therefore it helps to characterize the worst-case (most informative) $(\eps,\delta)$-DP pair of quantum states.
\end{remark}

\section{Quantum Information Ordering and Differential Privacy}\label{sec_diff_priv_quant}

\subsection{Quantum ordering of informativeness}

In \cite{Blackwell51,BG54}, Blackwell introduced a notion of an ordering of classical informativeness between two pairs of distributions $\{P_1,P_2\}$ and $\{Q_1,Q_2\}$. In particular, he proved the following proposition,

\begin{proposition}[Blackwell's Theorem of Informativeness Ordering {\cite[Theorems $12.2.2$ \& $12.4.2$]{BG54}}]\label{prop_blackwell}
    Given two pairs of probability distributions $\left(P_1,P_2\right)$ and $\left(Q_1,Q_2\right)$ over two finite sets $\cX$ and $\cY$, respectively, the following statements are equivalent,
    \begin{enumerate}
        \item $\left\{P_1,P_2\right\} \succeq_{B} \left\{Q_1,Q_2\right\}$.
        \item $D^{\alpha}_{H}{(P_1\|P_2)} \geq D^{\alpha}_{H}{(Q_1\|Q_2)} $ for all $\alpha \in [0,1]$, where $D^{\alpha}_{H}{(P_1\|P_2)}$ and $D^{\alpha}_{H}{(Q_1\|Q_2)}$ are the classical hypothesis testing divergence of order $\alpha$ (see Definition \ref{def_hyp_div}) between the pair of distributions $\left\{P_1,P_2\right\}$ and $\left\{Q_1,Q_2\right\}$ respectively.
        \item There exists a stochastic map $\cT : \cX \to \cY$ such that $Q_i = \cT(P_i)$ for each $i \in \{1,2\}$.\label{map_exist}
    \end{enumerate}
\end{proposition}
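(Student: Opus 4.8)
I would establish the cycle $(3)\Rightarrow(2)\Rightarrow(1)\Rightarrow(3)$. The implication $(3)\Rightarrow(2)$ is the data-processing inequality for the hypothesis testing divergence: if $Q_i=\cT(P_i)$ for a stochastic map $\cT:\cX\to\cY$, then any test $\phi:\cY\to[0,1]$ pulls back to $\psi(x):=\sum_{y}\cT(y\mid x)\phi(y)$ on $\cX$, with $0\le\psi\le1$ and $\sum_{x}P_i(x)\psi(x)=\sum_{y}Q_i(y)\phi(y)$ for $i=1,2$. Hence every error pair $(\alpha,\beta)$ attainable against $\{Q_1,Q_2\}$ is attainable against $\{P_1,P_2\}$, so $\beta(\alpha\mid P_1\|P_2)\le\beta(\alpha\mid Q_1\|Q_2)$ for every $\alpha$, which is $(2)$ by Definition \ref{def_hyp_div}.

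\emph{$(2)\Rightarrow(1)$.} I would first record the geometry of the classical testing region $\cR(P,Q):=\{(\sum_{x}P(x)\phi(x),\,1-\sum_{x}Q(x)\phi(x)):0\le\phi\le1\}$ (the analogue of Definition \ref{def_char_region}): it is a compact convex subset of $[0,1]^2$ whose projection onto the first coordinate is all of $[0,1]$; it is invariant under the point reflection $(\alpha,\beta)\mapsto(1-\alpha,1-\beta)$ about $(\tfrac12,\tfrac12)$ (replace $\phi$ by $1-\phi$); and, since the minimum in Definition \ref{def_min_type2_err} is attained at a test with $\sum_{x}P(x)\phi(x)=\alpha$, its lower boundary is exactly the curve $\alpha\mapsto\beta(\alpha\mid P\|Q)$. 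Condition $(2)$ says this curve for $\{P_1,P_2\}$ lies weakly below that for $\{Q_1,Q_2\}$; applying the central symmetry to both regions flips the inequality for the upper boundaries, and since each region is, slice by slice in $\alpha$, the interval between its two boundaries, we get $\cR(Q_1,Q_2)\subseteq\cR(P_1,P_2)$. For binary experiments this containment is equivalent to the Blackwell order: an arbitrary decision problem reduces, through its loss region $\mathrm{conv}\{(\ell(1,a),\ell(2,a)):a\}$, to the universal two-action testing problem, so a larger testing region forces a uniformly better Bayes risk in every problem. This is $(1)$.

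\emph{$(1)\Rightarrow(3)$.} This is the crux (the Blackwell--Sherman--Stein randomization theorem) and the step I expect to be hardest. I would argue by separation. The set $\cG:=\{(\cT(P_1),\cT(P_2)):\cT\text{ a stochastic map }\cX\to\cY\}$ is a compact convex subset of $\cP(\cY)\times\cP(\cY)$. If $(Q_1,Q_2)\notin\cG$, then Hahn--Banach yields weights $w_1,w_2:\cY\to\bbR$ with
\[
\sum_{y}\left(w_1(y)Q_1(y)+w_2(y)Q_2(y)\right)\;>\;\sup_{\cT}\sum_{y}\left(w_1(y)(\cT P_1)(y)+w_2(y)(\cT P_2)(y)\right).
\]
Pushing $\cT$ onto the weights, the right-hand side equals $\sum_{x}\max_{y}\left(P_1(x)w_1(y)+P_2(x)w_2(y)\right)$, i.e. the best state-summed payoff $\{P_1,P_2\}$ can achieve in the decision problem with action set $\cY$ and payoffs $w_i(y)$, whereas the left-hand side is what $\{Q_1,Q_2\}$ obtains via the trivial ``report the observation'' rule. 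Thus $\{Q_1,Q_2\}$ strictly outperforms $\{P_1,P_2\}$ in some decision problem, contradicting $(1)$; hence $(Q_1,Q_2)\in\cG$ and the required $\cT$ exists. An alternative, more constructive route is to pass by sufficiency to the likelihood-ratio standard form of each experiment, a distribution on $[0,\infty]$, whereupon nestedness of the ROC curves becomes a dilation (``mean-preserving spread'') relation between the two standard distributions, and a Strassen-type coupling of them furnishes $\cT$ explicitly. The delicate points are the compactness of $\cG$ needed for the separation and checking that the separating functional corresponds to a bona fide bounded-payoff decision problem; the rest is routine.
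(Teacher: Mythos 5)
The paper offers no proof of this proposition at all: it is imported verbatim from Blackwell--Girshick \cite{BG54} (Theorems 12.2.2 and 12.4.2), so there is nothing in-paper to compare your argument against, and your task was effectively to reprove a classical theorem. Your sketch is the standard Blackwell--Sherman--Stein argument for dichotomies and is essentially sound. The step $(3)\Rightarrow(2)$ is the correct pull-back computation ($\psi(x)=\sum_y \cT(y\mid x)\phi(y)$ preserves both error probabilities), and $(1)\Rightarrow(3)$ is the standard finite-dimensional separation proof: $\cG$ is compact and convex as the linear image of the row-stochastic matrices, the support functional indeed evaluates to $\sum_x \max_y\bigl(P_1(x)w_1(y)+P_2(x)w_2(y)\bigr)$, and the contradiction with Bayes-payoff dominance for the decision problem with action set $\cY$ and payoffs $w_i(y)$ is correct (in finite dimensions the boundedness worry you flag is vacuous). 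The one genuinely load-bearing claim you assert rather than prove is in $(2)\Rightarrow(1)$: that containment of the two-dimensional testing regions $\cR(Q_1,Q_2)\subseteq\cR(P_1,P_2)$ implies dominance in \emph{every} decision problem because "an arbitrary decision problem reduces to the universal two-action testing problem." That reduction is true for two-state experiments, but it is itself a nontrivial classical fact (it is essentially the content of Blackwell's dichotomy theorem, provable e.g.\ by writing Bayes risks for arbitrary finite action sets as envelopes of two-action risks, or via the $E_\gamma$ integral representation of $f$-divergences/Bayes risks), so as written this step is a citation in disguise rather than an argument. A second, minor point: the paper never defines $\succeq_B$, and you implicitly adopt the Bayes-risk-dominance definition; since you use the same definition in both $(2)\Rightarrow(1)$ and $(1)\Rightarrow(3)$, the cycle is consistent, but it would be worth saying so explicitly.
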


Analogous to Blackwell's order of classical informativeness (as mentioned in Proposition \ref{prop_blackwell}), we discuss a quantum version of it, which provides an ordering of quantum statistical experiments based on their distinguishability. Using the quantum hypothesis testing divergence as the underlying measure, we formalize when one pair of quantum states can be considered more informative than another. We further establish that this dominance relation implies a corresponding ordering across all quantum $f$-divergences, thereby linking hypothesis testing quantities to a broad class of information-theoretic measures.

We now make the following definition.

\begin{definition}[Quantum information  order]\label{def_quant_blackwell}
    Consider two pairs of quantum states $(\rho_1,\rho_2)$ and  $(\sigma_1,\sigma_2)$. We say that the pair $\left(\rho_1,\rho_2\right)$ dominates (or, is more informative than) the  pair $\left(\sigma_1,\sigma_2\right)$ (denoted as $\left\{\rho_1,\rho_2\right\} \succeq_{Q} \left\{\sigma_1,\sigma_2\right\}$), if for all $\alpha \in [0,1]$, the following holds,
    \begin{equation}
        D^{\alpha}_{H}{(\rho_1\|\rho_2)} \geq D^{\alpha}_{H}{(\sigma_1\|\sigma_2)}.\label{def_quant_blackwell_eq}
    \end{equation}
\end{definition}

 Intuitively, the order $\left\{\rho_1,\rho_2\right\} \succeq_{Q} \left\{\sigma_1,\sigma_2\right\}$ indicates that the states $\rho_1$ and $\rho_2$ are more distinguishable as compared to the states $\sigma_1$ and $\sigma_2$. 

Unlike the classical case, in the quantum setting, from the results of \cite{Matsumoto2014} and  of \cite[Lemma $18$ \& Corollary $19$]{GLW24}, it is known that \eqref{def_quant_blackwell_eq} \textit{does not} always imply the existence of a CP-TP map $\cE$ such that $\sigma_i = \cE(\rho_i)$ for each $i \in \{1,2\}$. This yields that we \textit{can not} invoke the monotonicity property of quantum divergences (see Definition \ref{def_data_proc_func}) to conclude that $\varmathbb{D}(\rho_1\| \rho_2) \geq \varmathbb{D}(\sigma_1\| \sigma_2)$.

However, we can circumvent this issue by directly showing that if for any two pairs of quantum states  $(\rho_1,\rho_2)$ and $(\sigma_1,\sigma_2)$, \eqref{def_quant_blackwell_eq} holds, then, every quantum $f$-divergence (see Definition \ref{def_f_div_integral_rep}) between the former pair dominates that of the latter pair. We formalize this claim in the Theorem below.

 \begin{theorem}\label{lemma_t_hyp_F_Div}
    Consider two pairs of quantum states  $(\rho_1,\rho_2)$ and $(\sigma_1,\sigma_2)$, for which the following holds,

\begin{equation}
    D^{\alpha}_{H}{(\rho_1\|\rho_2)} \geq D^{\alpha}_{H}{(\sigma_1\|\sigma_2)}, \forall \alpha \in [0,1].\label{lemma_t_hyp_F_div_eq}
\end{equation}

Then, for any $f \in \cF$ (see Definition \ref{def_f_div_integral_rep}) we have, 
$$D_f(\rho_1\|\rho_2) \geq  D_f(\sigma_1\|\sigma_2),$$ whenever $D_f$ is well-defined (see Definition \ref{def_f_div_integral_rep}) for the pairs $(\rho_1,\rho_2)$ and $(\sigma_1,\sigma_2)$.
\end{theorem}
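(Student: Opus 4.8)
The plan is to avoid working with $f$-divergences directly: instead I would pass to the one-parameter family of hockey-stick divergences $E_{\gamma}$, for which the hypothesis \eqref{lemma_t_hyp_F_div_eq} is transparent, and then reassemble $D_f$ from the $E_{\gamma}$'s through the integral representation of Definition \ref{def_f_div_integral_rep}. So the first step is to reduce the theorem to the two pointwise estimates
\[
E_{\gamma}(\rho_1\|\rho_2) \ge E_{\gamma}(\sigma_1\|\sigma_2), \qquad E_{\gamma}(\rho_2\|\rho_1) \ge E_{\gamma}(\sigma_2\|\sigma_1), \qquad \text{for all } \gamma \ge 1 .
\]
This reduction is immediate: since $f \in \cF$ is convex, $f'' \ge 0$, so in the integrand of Definition \ref{def_f_div_integral_rep} both weights $f''(\gamma)$ and $\gamma^{-3} f''(\gamma^{-1})$ are nonnegative on $[1,\infty)$; granting the two estimates above, the integrand associated with $(\rho_1,\rho_2)$ dominates the (nonnegative) integrand associated with $(\sigma_1,\sigma_2)$ pointwise, and monotonicity of the integral then gives $D_f(\rho_1\|\rho_2) \ge D_f(\sigma_1\|\sigma_2)$ in $[0,\infty]$, hence in particular whenever both sides are finite.

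The core of the proof is therefore the passage from hypothesis-testing dominance to hockey-stick dominance. I would base this on a Legendre-type identity linking $E_{\gamma}$ to the testing trade-off function $\alpha \mapsto \beta(\alpha|\rho\|\sigma) = e^{-D^{\alpha}_{H}(\rho\|\sigma)}$ (Fact \ref{fact_d_hyp_func}). Starting from $E_{\gamma}(\rho\|\sigma) = \max_{0 \preceq \Lambda \preceq \bbI} \tr[\Lambda(\rho - \gamma\sigma)]$, I would group the feasible operators by the value $\alpha = \tr[\Lambda\rho]$, minimize $\tr[\Lambda\sigma]$ within each group, and re-express this inner optimum via the substitution $\Lambda \mapsto \bbI - \Lambda$ together with Fact \ref{T_hyp_func_equi} (which allows the constraint $\tr[\Lambda\rho] \le \alpha$ to be replaced by $\tr[\Lambda\rho] = \alpha$). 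This should produce, for every $\gamma \ge 0$,
\[
E_{\gamma}(\rho\|\sigma) = 1 - \min_{\alpha \in [0,1]} \left( \alpha + \gamma\, \beta(\alpha|\rho\|\sigma) \right) = 1 - \min_{\alpha \in [0,1]} \left( \alpha + \gamma\, e^{-D^{\alpha}_{H}(\rho\|\sigma)} \right).
\]
Given \eqref{lemma_t_hyp_F_div_eq} we have $e^{-D^{\alpha}_{H}(\rho_1\|\rho_2)} \le e^{-D^{\alpha}_{H}(\sigma_1\|\sigma_2)}$ for every $\alpha$, so (using $\gamma \ge 0$) the quantity minimized for $(\rho_1,\rho_2)$ lies pointwise below that for $(\sigma_1,\sigma_2)$, whence $E_{\gamma}(\rho_1\|\rho_2) \ge E_{\gamma}(\sigma_1\|\sigma_2)$. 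For the swapped pair I would invoke the elementary identity $E_{\gamma}(\rho_2\|\rho_1) = (1-\gamma) + \gamma\, E_{1/\gamma}(\rho_1\|\rho_2)$ (another $\Lambda \mapsto \bbI - \Lambda$ computation), apply the displayed formula at the parameter $1/\gamma \ge 0$ and \eqref{lemma_t_hyp_F_div_eq} to obtain $E_{1/\gamma}(\rho_1\|\rho_2) \ge E_{1/\gamma}(\sigma_1\|\sigma_2)$, and conclude $E_{\gamma}(\rho_2\|\rho_1) \ge E_{\gamma}(\sigma_2\|\sigma_1)$ for all $\gamma \ge 1$.

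A conceptually cleaner alternative to the middle step runs through the characteristic region of Section \ref{sec_char_region}: using convexity of $\cR(\rho,\sigma)$, the point symmetry $\Lambda \leftrightarrow \bbI - \Lambda$, and Fact \ref{T_hyp_func_equi}, one verifies that $\cR(\rho,\sigma)$ equals the set of $(\alpha,\beta) \in [0,1]^2$ with $\beta(\alpha|\rho\|\sigma) \le \beta \le 1 - \beta(1-\alpha|\rho\|\sigma)$, i.e.\ it is determined by the trade-off function alone; hence \eqref{lemma_t_hyp_F_div_eq} forces $\cR(\sigma_1,\sigma_2) \subseteq \cR(\rho_1,\rho_2)$, and reflecting across the diagonal $\alpha = \beta$ (which maps $\cR(\rho,\sigma)$ onto $\cR(\sigma,\rho)$) gives $\cR(\sigma_2,\sigma_1) \subseteq \cR(\rho_2,\rho_1)$. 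Since $E_{\gamma}(\rho\|\sigma) = \max\{ \alpha - \gamma(1-\beta) : (\alpha,\beta) \in \cR(\rho,\sigma) \}$ is the maximum of a fixed affine functional over the region, the two containments yield both hockey-stick estimates at once.

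I expect the main obstacle to be precisely this middle step: showing that dominance of the \emph{entire} family $\{D^{\alpha}_{H}\}_{\alpha \in [0,1]}$ propagates to dominance of the \emph{entire} family $\{E_{\gamma}\}_{\gamma \ge 1}$ in both argument orders. This is where the Legendre duality between the ROC/trade-off curve and the hockey-stick ``support function'', together with the $\Lambda \leftrightarrow \bbI - \Lambda$ symmetry (Fact \ref{T_hyp_func_equi}), does the real work; after that, the extension to arbitrary $f$-divergences is only the nonnegativity of $f''$ and monotonicity of the integral, and the well-definedness proviso in the statement is automatic because the integrands are nonnegative.
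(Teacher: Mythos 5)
Your proposal is correct, and its overall skeleton coincides with the paper's: reduce \eqref{lemma_t_hyp_F_div_eq} to hockey-stick dominance and then integrate against $f''\ge 0$. The differences lie in how the middle step is executed. The paper proves $E_{\gamma}(\rho_1\|\rho_2)\ge E_{\gamma}(\sigma_1\|\sigma_2)$ for all $\gamma\ge 0$ (Lemma \ref{fact_equi_ah_sd}) by taking the optimal test $\Delta^{\star}_{\gamma}$ for the $\sigma$-pair, reading off its operating point $(\beta^{\star}_{\gamma},\delta^{\star}_{\gamma})$, and using the trade-off dominance to exhibit a test for the $\rho$-pair with at least as large a value of $\tr[\Lambda\rho_1]+\gamma\tr[(\bbI-\Lambda)\rho_2]$; it then invokes the one-sided representation $D_f(\rho\|\sigma)=\int_0^\infty f''(\gamma)E_\gamma(\rho\|\sigma)\,d\gamma$ (Fact \ref{fact_f_div_integral_rep}), so only this single argument order is needed. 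You instead derive the exact variational identity $E_{\gamma}(\rho\|\sigma)=1-\min_{\alpha\in[0,1]}\bigl(\alpha+\gamma\,\beta(\alpha|\rho\|\sigma)\bigr)$ (a clean repackaging of the same comparison, and your use of Fact \ref{T_hyp_func_equi} in its derivation is sound), from which monotonicity in the trade-off function is immediate; and because you work with the two-term representation of Definition \ref{def_f_div_integral_rep} over $[1,\infty)$, you also need and correctly supply the swapped-argument dominance via $E_{\gamma}(\rho_2\|\rho_1)=(1-\gamma)+\gamma E_{1/\gamma}(\rho_1\|\rho_2)$, a step the paper avoids by its choice of representation. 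Both routes are valid; yours buys a reusable closed-form link between $E_\gamma$ and the testing trade-off (and the characteristic-region picture as a bonus), while the paper's is marginally shorter because the one-sided integral representation makes the reversed order unnecessary. Your handling of the well-definedness proviso (the inequality holds in $[0,\infty]$ since the integrands are nonnegative) is also fine.
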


    Before proceeding with the proof of Theorem \ref{lemma_t_hyp_F_Div}, we first state the following lemma, which will be required in proving Theorem \ref{lemma_t_hyp_F_Div}.
    \begin{lemma}\label{fact_equi_ah_sd}
        Consider $\rho_1,\rho_2 \in \cD(\cH_A)$ and $\sigma_1,\sigma_2 \in \cD(\cH_B)$. If
        \begin{equation}
            D^{\alpha}_{H}{(\rho_1\|\rho_2)} \geq D^{\alpha}_{H}{(\sigma_1\|\sigma_2)}, \forall \alpha \in [0,1],\label{fact_equi_ah_sd_eq}
        \end{equation}
        then,
\begin{align}
            E_{\gamma}(\rho_1\|\rho_2) &\geq E_{\gamma}(\sigma_1\|\sigma_2), \forall \gamma\geq 0, \label{BN1}\\
            D_{\tilde{\alpha}}(\rho_1\|\rho_2) &\geq D_{\tilde{\alpha}}(\sigma_1\|\sigma_2), \forall 
            \tilde{\alpha} \in [0,1) \cup (1,+\infty),\label{BN2} \\
            D(\rho_1\|\rho_2) &\geq D(\sigma_1\|\sigma_2),\label{BN2B}
        \end{align}
        where $E_{\gamma} (\cdot \|\cdot)$ 
        and $D_{\tilde{\alpha}}(\cdot \|\cdot)$
        are defined in Definitions \ref{f_sym_hyp_func} and \ref{def_quant_renyi_hockey}.
    \end{lemma}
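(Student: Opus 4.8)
The plan is to derive each of the three conclusions \eqref{BN1}, \eqref{BN2}, \eqref{BN2B} from the hypothesis-testing dominance \eqref{fact_equi_ah_sd_eq} by passing through the characteristic (type-I/type-II) region and then invoking the integral representations already recorded in the excerpt. The key observation is that $D^\alpha_H(\rho_1\|\rho_2)\ge D^\alpha_H(\sigma_1\|\sigma_2)$ for all $\alpha\in[0,1]$ is, by Fact \ref{fact_d_hyp_func}, exactly the statement $\beta(\alpha|\rho_1\|\rho_2)\le\beta(\alpha|\sigma_1\|\sigma_2)$ for all $\alpha$; geometrically this says the lower boundary of the region $\cR(\rho_1,\rho_2)$ lies weakly below that of $\cR(\sigma_1,\sigma_2)$, i.e. $(\rho_1,\rho_2)$ is ``more distinguishable''. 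So the whole lemma is a quantitative unpacking of that containment.

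First I would prove \eqref{BN1}. Using Fact \ref{Fact_hockey_equiv_form} (or directly Definition \ref{f_sym_hyp_func}), write $E_\gamma(\rho\|\sigma)=\max_{0\preceq\Lambda\preceq\bbI}\tr[\Lambda(\rho-\gamma\sigma)] = \max_\alpha\big(\alpha-\gamma(1-\beta(\alpha|\rho\|\sigma))\big)$, where the maximum is over $\alpha\in[0,1]$ and I used Fact \ref{T_hyp_func_equi}/\ref{T_hyp_func_equi2} to restrict to the optimal tests parametrized by their type-I error $\alpha$. Since for each fixed $\alpha$ we have $1-\beta(\alpha|\rho_1\|\rho_2)\ge 1-\beta(\alpha|\sigma_1\|\sigma_2)$, the $\alpha$-th term for $(\rho_1,\rho_2)$ dominates that for $(\sigma_1,\sigma_2)$ (here $\gamma\ge0$ is used), and taking the max over $\alpha$ gives $E_\gamma(\rho_1\|\rho_2)\ge E_\gamma(\sigma_1\|\sigma_2)$. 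One should also note the symmetric role: the same argument with the pairs swapped shows $E_\gamma(\rho_2\|\rho_1)\ge E_\gamma(\sigma_2\|\sigma_1)$, since $D^\alpha_H(\rho_2\|\rho_1)\ge D^\alpha_H(\sigma_2\|\sigma_1)$ follows from \eqref{fact_equi_ah_sd_eq} by the standard symmetry of the hypothesis-testing region (reflection through $\alpha+\beta=1$); this reversed inequality is needed for the next two parts.

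Once \eqref{BN1} holds for all $\gamma\ge0$ and for both orderings of each pair, \eqref{BN2} and \eqref{BN2B} are immediate from the integral representations: plugging $E_\gamma(\rho_1\|\rho_2)\ge E_\gamma(\sigma_1\|\sigma_2)$ and $E_\gamma(\rho_2\|\rho_1)\ge E_\gamma(\sigma_2\|\sigma_1)$ into the formula for $H_{\tilde\alpha}$ in Definition \ref{def_quant_renyi_hockey} (all weights $\gamma^{\tilde\alpha-2}$, $\gamma^{-\tilde\alpha-1}$ are nonnegative on $[1,\infty)$) gives $H_{\tilde\alpha}(\rho_1\|\rho_2)\ge H_{\tilde\alpha}(\sigma_1\|\sigma_2)$, and then $D_{\tilde\alpha}=\frac{1}{\tilde\alpha-1}\log(1+(\tilde\alpha-1)H_{\tilde\alpha})$ is monotone increasing in $H_{\tilde\alpha}$ for both $\tilde\alpha>1$ and $\tilde\alpha<1$ (one checks the sign of $\tilde\alpha-1$ cancels against the direction of $\log$), yielding \eqref{BN2}; here I would also remark that the domain condition in Definition \ref{def_quant_renyi_hockey} is respected because $\rho_1\ll\rho_2$ (resp. non-orthogonality) transfers appropriately, or simply state the inequality where both sides are defined. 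Similarly, Fact \ref{fact_KL_div_integral_rep} gives $D(\rho\|\sigma)=\int_0^\infty(\gamma^{-1}E_\gamma(\rho\|\sigma)+\gamma^{-2}E_\gamma(\sigma\|\rho))\,d\gamma$ with nonnegative weights, so \eqref{BN1} in both directions yields \eqref{BN2B}. The main obstacle I anticipate is the first step: carefully justifying the variational identity $E_\gamma(\rho\|\sigma)=\max_{\alpha}(\alpha-\gamma(1-\beta(\alpha|\rho\|\sigma)))$ and making sure the optimal tests can be taken with exact type-I error $\alpha$ (this is where Facts \ref{T_hyp_func_equi} and \ref{T_hyp_func_equi2} are essential) and handling the case $0\le\gamma<1$ versus $\gamma\ge1$ — everything after that is a substitution into an integral with a sign check.
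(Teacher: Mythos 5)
Your overall strategy is the right one and is essentially the paper's: reduce \eqref{BN1} to the Neyman--Pearson trade-off curve and then feed \eqref{BN1} (in both orderings) into the integral representations with a sign check. However, the central displayed identity in your first step is wrong as written, and this is exactly the step you yourself flag as the main obstacle. You claim $E_\gamma(\rho\|\sigma)=\max_{\alpha\in[0,1]}\bigl(\alpha-\gamma(1-\beta(\alpha|\rho\|\sigma))\bigr)$. Take $\rho\perp\sigma$: then $\beta(\alpha|\rho\|\sigma)=0$ for all $\alpha$, so your expression equals $1-\gamma$, whereas $E_\gamma(\rho\|\sigma)=\tr\abs{\rho-\gamma\sigma}_+=1$. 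The slip is that for fixed type-I error $\alpha=\tr[\Lambda\rho]$, maximizing $\tr[\Lambda(\rho-\gamma\sigma)]$ requires \emph{minimizing} $\tr[\Lambda\sigma]$, i.e.\ you must restrict to the complements $\bbI-\Lambda_\alpha$ of the Neyman--Pearson tests, not to the tests $\Lambda_\alpha$ that achieve $\beta(\alpha|\rho\|\sigma)$ (those maximize $\tr[\Lambda\sigma]$). Moreover, with your formula the monotonicity step you then assert goes the wrong way: since $1-\beta(\alpha|\rho_1\|\rho_2)\ge 1-\beta(\alpha|\sigma_1\|\sigma_2)$ enters with the coefficient $-\gamma\le 0$, the $\alpha$-th term for $(\rho_1,\rho_2)$ is \emph{smaller}, not larger. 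The correct identity is $E_\gamma(\rho\|\sigma)=\max_{\alpha\in[0,1]}\bigl[(1-\alpha)-\gamma\,\beta(\alpha|\rho\|\sigma)\bigr]$ (equivalently $\max_{\alpha}\bigl[\alpha-\gamma\,\beta(1-\alpha|\rho\|\sigma)\bigr]$, using Fact \ref{T_hyp_func_equi}), and with it your argument goes through verbatim: $\beta(\alpha|\rho_1\|\rho_2)\le\beta(\alpha|\sigma_1\|\sigma_2)$ and $\gamma\ge 0$ give termwise dominance, hence \eqref{BN1}. In that corrected form your proof of \eqref{BN1} is the same operating-point argument as the paper's, which shows $\max_\Lambda\{\tr[\Lambda\rho_1]+\gamma\tr[(\bbI-\Lambda)\rho_2]\}\ge\max_\Delta\{\tr[\Delta\sigma_1]+\gamma\tr[(\bbI-\Delta)\sigma_2]\}$ and notes this functional is $E_\gamma(\cdot\|\cdot)+\gamma$.

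The remainder of your plan is sound. For \eqref{BN2} and \eqref{BN2B} you correctly observe that the second terms in $H_{\tilde\alpha}$ and in Fact \ref{fact_KL_div_integral_rep} require the reversed-order dominance $E_\gamma(\rho_2\|\rho_1)\ge E_\gamma(\sigma_2\|\sigma_1)$; your justification via the symmetry of the testing region (swapping hypotheses reflects the achievable $(\alpha,\beta)$ region, so pointwise dominance of the $\beta(\cdot)$ curves is preserved under taking generalized inverses) is valid, and is in fact more explicit than the paper, which leaves this point implicit. An even quicker route, requiring no new testing argument, is the duality $E_\gamma(\rho\|\sigma)=1-\gamma+\gamma\,E_{1/\gamma}(\sigma\|\rho)$ for $\gamma>0$, so the reversed-order inequality for $\gamma\ge 1$ already follows from \eqref{BN1} applied at $\gamma\in(0,1]$. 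Your monotonicity check of $D_{\tilde\alpha}$ in $H_{\tilde\alpha}$ for both $\tilde\alpha<1$ and $\tilde\alpha>1$, and the nonnegativity of the integral weights, are fine. So the only genuine defect is the mis-stated variational identity and the resulting inconsistent inequality direction; once repaired as above, the proof is correct and follows the paper's route.
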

    \begin{proof}
        See Appendix \ref{proof_fact_equi_ah_sd} for the proof.
    \end{proof}
    Using the above lemma, we now prove Theorem \ref{lemma_t_hyp_F_Div} using the following series of inequalities,
    \begin{align*}
        D_{f}(\rho_1\|\rho_2) &\overset{(a)}{=} \int_{0}^{\infty} f''(\gamma) E_{\gamma}(\rho_1\|\rho_2) d\gamma\nn\\
        &\overset{(b)}{\geq} \int_{0}^{\infty} f''(\gamma) E_{\gamma}(\sigma_1\|\sigma_2) d\gamma\nn\\
        &\overset{(c)}{=} D_{f}(\sigma_1\|\sigma_2),
    \end{align*}
    where $(a)$ and $(c)$ follows from Fact \ref{fact_f_div_integral_rep} and $(b)$ follows from \eqref{BN1} of Lemma \ref{fact_equi_ah_sd} and the fact that the double derivative of $f$ exists and $f$ is a convex function. This completes the proof of Theorem \ref{lemma_t_hyp_F_Div}.

\begin{remark}
    We note here that \eqref{BN1} of Lemma \ref{fact_equi_ah_sd} was also proven in \cite[Theorem $2$]{BG17}.
\end{remark}

\begin{remark}
    There have been various notions of ordering of informativeness in the quantum case in the earlier works, see for example,  \cite{ReebKastoryanoWolf2011,Buscemi2012,Jencova2012,Matsumoto:2015vgh,Buscemi2016,Buscemi_2018,BuscemiSutterTomamichel2019,WW19}. However, it is not very clear whether these orderings are useful for characterizing differentially private mechanisms, which is the main aim of this manuscript.
\end{remark}

\subsection{Differential privacy: A quantum hypothesis testing perspective}

In this subsection, we consider differential privacy from a quantum hypothesis testing perspective. The main goal of quantum differential privacy is to ensure that a quantum mechanism produces output states that are nearly indistinguishable when applied to neighboring quantum inputs. Formally, for two neighboring quantum inputs producing output states $\rho, \sigma$, the distinguishability between $\rho$ and $\sigma$ must be limited by the privacy parameters. Intuitively, this means that even if an investigator (adversary) knows the mechanism and observes its output, the investigator should not be able to reliably infer which quantum input (respondent) generated it. More formally, with high probability, no hypothesis test—regardless of the investigator’s strategy—can reliably infer the respondent’s individual contribution from the output.

Motivated by the close relation between privacy and hypothesis testing, in the lemma below, we give an equivalent condition for a pair of quantum states to be $(\eps,\delta)$-DP (see Definition \ref{def_DP_state}).

\begin{lemma}\label{lemma_DP_t_hyp_rel}
    Consider any pair of quantum states $(\rho,\sigma) \in \mathfrak{D}_{(\eps,\delta)} $ (see Definition \ref{def_DP_state}) for some fixed $\eps \geq 0$ and $\delta \in [0,1)$. Then, the following holds,
\begin{equation}
    D^{\alpha}_{H} {(\rho\|\sigma)} \leq -\log f_{\eps,\delta}(\alpha), \forall \alpha \in [0,1], \label{lemma_DP_t_hyp_rel_eq}
\end{equation}
or, equivalently,
\begin{equation}
    \beta(\alpha|\rho\|\sigma) \geq  f_{\eps,\delta}(\alpha), \forall \alpha \in [0,1], \label{lemma_DP_t_hyp_rel_eq1}
\end{equation}
where,
\begin{equation}
    f_{\eps,\delta}(\alpha) := \max\left\{1 - \delta - e^{\eps}\alpha, e^{-\eps}(1 -\delta-\alpha),0\right\},\label{fepdel}
\end{equation}
 for any $\alpha \in [0,1]$.
\end{lemma}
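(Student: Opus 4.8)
The plan is to work directly with the minimum type-\RNum{2} error $\beta(\alpha|\rho\|\sigma)$ of Definition~\ref{Q-def_min_type2_err} and establish the lower bound \eqref{lemma_DP_t_hyp_rel_eq1}; the bound \eqref{lemma_DP_t_hyp_rel_eq} on the hypothesis testing divergence then follows at once from Fact~\ref{fact_d_hyp_func}, since $D^{\alpha}_{H}(\rho\|\sigma) = -\log\beta(\alpha|\rho\|\sigma)$ and $-\log$ is decreasing (with the convention $-\log 0 = +\infty$ taking care of the degenerate regime where $f_{\eps,\delta}(\alpha)=0$, in which case \eqref{lemma_DP_t_hyp_rel_eq} reads $D^{\alpha}_{H}(\rho\|\sigma)\leq +\infty$ and is vacuous).

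Fix $\alpha\in[0,1]$ and let $\Lambda$ be any operator with $0\preceq\Lambda\preceq\bbI$ and $\tr[\Lambda\rho]\leq\alpha$. The idea is to bound $\tr[(\bbI-\Lambda)\sigma]$ from below by each of the three quantities appearing inside the maximum defining $f_{\eps,\delta}(\alpha)$. First, applying the first inequality of \eqref{def_DP_state_eq} to the effect $\Lambda$ gives $\tr[\Lambda\sigma]\leq e^{\eps}\tr[\Lambda\rho]+\delta\leq e^{\eps}\alpha+\delta$, so $\tr[(\bbI-\Lambda)\sigma] = 1-\tr[\Lambda\sigma] \geq 1-\delta-e^{\eps}\alpha$. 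Second, applying the second inequality of \eqref{def_DP_state_eq} to the effect $\bbI-\Lambda$ gives $\tr[(\bbI-\Lambda)\rho]\leq e^{\eps}\tr[(\bbI-\Lambda)\sigma]+\delta$; rearranging and using $\tr[(\bbI-\Lambda)\rho]=1-\tr[\Lambda\rho]\geq 1-\alpha$ yields $\tr[(\bbI-\Lambda)\sigma]\geq e^{-\eps}\bigl(1-\tr[\Lambda\rho]-\delta\bigr)\geq e^{-\eps}(1-\delta-\alpha)$, the sign of $e^{-\eps}>0$ being irrelevant. Third, $\tr[(\bbI-\Lambda)\sigma]\geq 0$ trivially, since $\bbI-\Lambda\succeq 0$ and $\sigma\succeq 0$. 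Combining the three bounds, $\tr[(\bbI-\Lambda)\sigma]\geq f_{\eps,\delta}(\alpha)$ for every admissible $\Lambda$; taking the infimum over such $\Lambda$ gives $\beta(\alpha|\rho\|\sigma)\geq f_{\eps,\delta}(\alpha)$, which is \eqref{lemma_DP_t_hyp_rel_eq1}.

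There is no real technical obstacle here: the argument is just an unrolling of the two defining inequalities of $(\eps,\delta)$-DP together with the complementarity $\tr[\Lambda\cdot]+\tr[(\bbI-\Lambda)\cdot]=1$, and the only point needing mild care is the bookkeeping in the degenerate case $f_{\eps,\delta}(\alpha)=0$ (large $\alpha$) just mentioned. I note that this lemma is really the operator-level restatement of the containment $\cR(\rho,\sigma)\subseteq\cR(\eps,\delta)$ coming from constraints \eqref{char_region_eq1}–\eqref{char_region_eq2}, so one could alternatively deduce it from that containment established in Section~\ref{sec_char_region}; nonetheless I would keep the short direct computation above for self-containedness.
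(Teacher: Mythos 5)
Your proposal is correct and takes essentially the same route as the paper's proof: you lower-bound $\tr[(\bbI-\Lambda)\sigma]$ by each of the three expressions entering $f_{\eps,\delta}(\alpha)$ via the two $(\eps,\delta)$-DP inequalities applied to the effects $\Lambda$ and $\bbI-\Lambda$, take the infimum over admissible $\Lambda$, and convert to the divergence statement through Fact~\ref{fact_d_hyp_func}. The only slip is purely notational: the inequalities you attribute to the ``first'' and ``second'' lines of \eqref{def_DP_state_eq} are in fact swapped, which is harmless since Definition~\ref{def_DP_state} requires both to hold for every effect.
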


\begin{proof}
    See Appendix \ref{proof_lemma_DP_t_hyp_rel} for the proof.
\end{proof}

Note that the above version of quantum $(\eps,\delta)$-DP implies that if for each $\alpha \in [0,1]$ $D^{\alpha}_H(\rho'\|\sigma') =-\log f_{\eps,\delta}(\alpha)$ for some pair of quantum states $\rho',\sigma'$ over any arbitrary finite dimensional Hilbert space, then for any pair of quantum states $\rho,\sigma \in \mathfrak{D}_{(\eps,\delta)}$ (see Definition \ref{def_DP_state}) is at least as hard as distinguishing between the pair $(\rho',\sigma')$.
The above intuition provides a notion of the weakest (most informative) quantum $(\eps,\delta)$-DP pairs of quantum states, which are the least ``hard to differentiate''. We formalize this notion in the definition below.

\begin{definition}[Weakest quantum $(\eps,\delta)$-DP]\label{def_opt_DP_state}
     For $\eps \geq 0 $ and $\delta \in [0,1),$ a pair of quantum states $(\rho,\sigma)$ is defined to be the weakest (most informative) quantum $(\eps,\delta)$-DP if
\begin{equation}
    D^{\alpha}_{H} {(\rho\|\sigma)} = -\log f_{\eps,\delta}(\alpha), \forall \alpha \in [0,1], \label{def_opt_DP_state_eq}
\end{equation}
\end{definition}

In the lemma below, we show the existence of a pair of quantum states that is the weakest (most informative) quantum $(\eps,\delta)$-DP (see Definition \ref{def_opt_DP_state}) for some fixed $\eps \geq 0$ and $\delta \in [0,1]$.

\begin{lemma}\label{lemma_diff_priv_quant}
    Consider the quantum states $\rho_{(\eps,\delta)}$ and $\sigma_{(\eps,\delta)}$ mentioned in \eqref{eq_rho_simp} and \eqref{eq_sigma_simp}, for some fixed $\eps \geq 0$ and $\delta \in [0,1]$. 
    Then, we have $D^{\alpha}_{H}(\rho_{(\eps,\delta)}\|\sigma_{(\eps,\delta)}) = -\log f_{\eps,\delta}(\alpha)$ for any $\alpha \in [0,1]$.
\end{lemma}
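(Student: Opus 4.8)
The plan is to exploit the fact that $\rho_{(\eps,\delta)}$ and $\sigma_{(\eps,\delta)}$ in \eqref{eq_rho_simp}--\eqref{eq_sigma_simp} are simultaneously diagonal in the orthonormal basis $\{\ket{00},\ket{01},\ket{10},\ket{11}\}$, hence commute. Consequently, for \emph{any} $0\preceq\Lambda\preceq\bbI$, replacing $\Lambda$ by its pinching $\Phi(\Lambda)$ onto this common eigenbasis changes neither $\tr[\Lambda\rho_{(\eps,\delta)}]$ nor $\tr[(\bbI-\Lambda)\sigma_{(\eps,\delta)}]$ (the pinching is self-adjoint and fixes both states), so the optimal operator in $\beta(\alpha\mid\rho_{(\eps,\delta)}\|\sigma_{(\eps,\delta)})$ may be taken diagonal. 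The problem thereby reduces to the classical Neyman--Pearson problem between the probability vectors $P=(\delta,\ (1-\delta)\tfrac{e^\eps}{1+e^\eps},\ (1-\delta)\tfrac{1}{1+e^\eps},\ 0)$ and $Q=(0,\ (1-\delta)\tfrac{1}{1+e^\eps},\ (1-\delta)\tfrac{e^\eps}{1+e^\eps},\ \delta)$ read off the diagonals.

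One inequality is immediate: since $(\rho_{(\eps,\delta)},\sigma_{(\eps,\delta)})\in\mathfrak{D}_{(\eps,\delta)}$ was verified in Section~\ref{sec_char_region}, Lemma~\ref{lemma_DP_t_hyp_rel} already yields $D^{\alpha}_{H}(\rho_{(\eps,\delta)}\|\sigma_{(\eps,\delta)}) \le -\log f_{\eps,\delta}(\alpha)$, equivalently $\beta(\alpha\mid\rho_{(\eps,\delta)}\|\sigma_{(\eps,\delta)}) \ge f_{\eps,\delta}(\alpha)$, for all $\alpha\in[0,1]$. So the substance is the matching bound $\beta(\alpha\mid\rho_{(\eps,\delta)}\|\sigma_{(\eps,\delta)}) \le f_{\eps,\delta}(\alpha)$, i.e.\ exhibiting explicit tests. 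For this I would order the four coordinates by likelihood ratio $P(x)/Q(x)$: these equal $0$ on $\ket{11}$, $e^{-\eps}$ on $\ket{10}$, $e^{\eps}$ on $\ket{01}$, and $+\infty$ on $\ket{00}$. By the Neyman--Pearson lemma the optimal rejection region is the ``water-filling'' region that adds coordinates in increasing order of $P(x)/Q(x)$ — first all of $\ket{11}$ (which costs no $\rho_{(\eps,\delta)}$-mass), then $\ket{10}$, then $\ket{01}$, then $\ket{00}$ — until the budget $\tr[\Lambda\rho_{(\eps,\delta)}]=\alpha$ is exhausted (Fact~\ref{T_hyp_func_equi} justifies taking equality here).

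Carrying this out gives a piecewise-linear $\beta(\cdot)$ over three regimes. For $\alpha\in[0,\tfrac{1-\delta}{1+e^\eps}]$ the optimal region is all of $\ket{11}$ plus a fraction of $\ket{10}$, yielding $\beta(\alpha)=1-\delta-e^{\eps}\alpha$; for $\alpha\in[\tfrac{1-\delta}{1+e^\eps},\,1-\delta]$ it is all of $\ket{11},\ket{10}$ plus a fraction of $\ket{01}$, yielding $\beta(\alpha)=e^{-\eps}(1-\delta-\alpha)$ after a short algebraic simplification; and for $\alpha\ge 1-\delta$ the region already contains $\ket{11},\ket{10},\ket{01}$, so $\beta(\alpha)=0$. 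It then remains to check these against $f_{\eps,\delta}(\alpha)=\max\{1-\delta-e^{\eps}\alpha,\ e^{-\eps}(1-\delta-\alpha),\ 0\}$: one verifies $1-\delta-e^{\eps}\alpha\ge e^{-\eps}(1-\delta-\alpha)$ precisely when $\alpha\le\tfrac{1-\delta}{1+e^\eps}$, and that both of these are nonpositive once $\alpha\ge 1-\delta$, so the active branch of the $\max$ matches the $\beta$-expression in each regime. This gives $\beta(\alpha\mid\rho_{(\eps,\delta)}\|\sigma_{(\eps,\delta)})=f_{\eps,\delta}(\alpha)$ on all of $[0,1]$, hence $D^{\alpha}_{H}(\rho_{(\eps,\delta)}\|\sigma_{(\eps,\delta)})=-\log f_{\eps,\delta}(\alpha)$ by Fact~\ref{fact_d_hyp_func}.

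The conceptual content is light — this is just Neyman--Pearson for a pair of commuting states — so the only real obstacle is bookkeeping: keeping the three regimes and the single fractionally included coordinate straight, carrying out the one nontrivial simplification in the middle regime, and matching the correct branch of the $\max$ defining $f_{\eps,\delta}$ at the breakpoints $\alpha=\tfrac{1-\delta}{1+e^\eps}$ and $\alpha=1-\delta$. The one point worth stating explicitly is the reduction to diagonal $\Lambda$ for commuting states, which I would justify via the pinching channel as above.
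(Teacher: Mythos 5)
Your proposal is correct and follows essentially the same route as the paper: the matching lower bound $\beta(\alpha\mid\rho_{(\eps,\delta)}\|\sigma_{(\eps,\delta)})\ge f_{\eps,\delta}(\alpha)$ comes from Lemma~\ref{lemma_DP_t_hyp_rel}, and achievability comes from the same three-regime explicit tests (all of $\ketbra{11}$ plus a fraction of $\ketbra{10}$, then a fraction of $\ketbra{01}$, then the zero-type-II region for $\alpha\ge 1-\delta$) that the paper constructs in its Appendix proof. Your pinching/Neyman--Pearson framing is a harmless extra layer: since optimality of the exhibited tests already follows from the DP lower bound, the reduction to diagonal $\Lambda$ is not actually needed, exactly as in the paper.
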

\begin{proof}
    See Appendix \ref{proof_lemma_diff_priv_quant} for the proof.
\end{proof}

Therefore, for a pair $(\rho,\sigma) \in \mathfrak{D}_{(\varepsilon,\delta)}$ with some fixed $\varepsilon \geq 0$ and $\delta \in [0,1]$,
Lemmas~\ref{lemma_DP_t_hyp_rel} and~\ref{lemma_diff_priv_quant} guarantee the relation,
    \begin{align}
        D^{\alpha}_{H}(\rho\|\sigma) &\leq -\log f_{\eps,\delta}(\alpha)
        = D^{\alpha}_{H}(\rho_{(\eps,\delta)}\|\sigma_{(\eps,\delta)}),\label{theo_QLDP_approx_F_div_eq1}
    \end{align}
 for every $\alpha \in [0,1]$.

 Further, from Lemmas \ref{lemma_DP_t_hyp_rel} and \ref{lemma_diff_priv_quant} and Definition \ref{def_quant_blackwell}, we note that any pair of quantum states $(\rho,\sigma) \in \mathfrak{D}_{(\eps,\delta)}$ for some fixed $\eps \geq 0$ and $\delta \in [0,1]$, gets dominated by the pair $(\rho_{(\eps,\delta)},\sigma_{(\eps,\delta)})$. Formally, it implies the following.

\begin{corollary}\label{corr_QLDP_blackwell}
    If any pair of quantum states $(\rho,\sigma) \in \mathfrak{D}_{(\eps,\delta)}$ for some fixed $\eps \geq 0$ and $\delta \in [0,1]$, then the following holds,
    \begin{equation}
       \left\{\rho_{(\eps,\delta)},\sigma_{(\eps,\delta)}\right\} \succeq_{Q}  \left\{\rho,\sigma\right\} .\label{corr_QLDP_blackwell_eq}
    \end{equation}
\end{corollary}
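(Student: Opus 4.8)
The plan is to obtain Corollary~\ref{corr_QLDP_blackwell} as an immediate composition of Lemmas~\ref{lemma_DP_t_hyp_rel} and~\ref{lemma_diff_priv_quant}, with essentially no new work. First I would fix $\eps \geq 0$ and $\delta \in [0,1)$ and take an arbitrary pair $(\rho,\sigma) \in \mathfrak{D}_{(\eps,\delta)}$. Applying Lemma~\ref{lemma_DP_t_hyp_rel} to this pair yields the pointwise upper bound $D^{\alpha}_{H}(\rho\|\sigma) \leq -\log f_{\eps,\delta}(\alpha)$ for every $\alpha \in [0,1]$, where $f_{\eps,\delta}$ is the piecewise-linear function defined in~\eqref{fepdel}.

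Next I would invoke Lemma~\ref{lemma_diff_priv_quant} for the explicitly constructed pair $(\rho_{(\eps,\delta)},\sigma_{(\eps,\delta)})$ of~\eqref{eq_rho_simp}--\eqref{eq_sigma_simp}, which gives the matching equality $D^{\alpha}_{H}(\rho_{(\eps,\delta)}\|\sigma_{(\eps,\delta)}) = -\log f_{\eps,\delta}(\alpha)$ for all $\alpha \in [0,1]$. Chaining the two displays, for every $\alpha \in [0,1]$ one obtains
\[
  D^{\alpha}_{H}(\rho_{(\eps,\delta)}\|\sigma_{(\eps,\delta)}) \;=\; -\log f_{\eps,\delta}(\alpha) \;\geq\; D^{\alpha}_{H}(\rho\|\sigma).
\]
This is precisely the defining condition~\eqref{def_quant_blackwell_eq} of the quantum informativeness order in Definition~\ref{def_quant_blackwell}, applied with $(\rho_1,\rho_2) = (\rho_{(\eps,\delta)},\sigma_{(\eps,\delta)})$ and $(\sigma_1,\sigma_2) = (\rho,\sigma)$, so I conclude $\{\rho_{(\eps,\delta)},\sigma_{(\eps,\delta)}\} \succeq_{Q} \{\rho,\sigma\}$.

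There is no genuine obstacle here: the two lemmas do all the heavy lifting, and the corollary is a one-line argument. The only points warranting a sentence of care are (i) checking that the quantifier ``$\forall \alpha \in [0,1]$'' appearing in the two lemmas matches verbatim the one in Definition~\ref{def_quant_blackwell}, which it does, and (ii) noting the mild discrepancy in the stated range of $\delta$: Lemma~\ref{lemma_DP_t_hyp_rel} requires $\delta \in [0,1)$, so the conclusion is really meaningful for $\delta \in [0,1)$, the endpoint $\delta=1$ being degenerate. Optionally I would also remark that, by Theorem~\ref{lemma_t_hyp_F_Div}, this ordering automatically propagates to every well-defined quantum $f$-divergence, foreshadowing Theorem~\ref{theo_QLDP_approx_F_div_bound}.
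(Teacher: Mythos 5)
Your proposal is correct and follows exactly the paper's own argument: chaining Lemma~\ref{lemma_DP_t_hyp_rel} with Lemma~\ref{lemma_diff_priv_quant} via the identity $-\log f_{\eps,\delta}(\alpha) = D^{\alpha}_{H}(\rho_{(\eps,\delta)}\|\sigma_{(\eps,\delta)})$ and invoking Definition~\ref{def_quant_blackwell}, which is precisely how the paper deduces \eqref{theo_QLDP_approx_F_div_eq1} and the corollary. Your side remark on the $\delta\in[0,1)$ versus $\delta\in[0,1]$ range is a fair observation but does not change the argument.
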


\begin{remark}\hfill
   \begin{itemize}
    \item Note that the quantum states $\rho_{(\eps,\delta)}$ and $\sigma_{(\eps,\delta)}$ mentioned in Lemma \ref{lemma_diff_priv_quant}, which satisfies weakest (most informative) $(\eps,\delta)$-DP condition for fixed $\eps \geq 0$ and $\delta \in [0,1]$, has a fixed-point at $\alpha = \frac{1-\delta}{e^\eps+1}$. 
    \item For $\delta = 0$ and any $\eps \geq 0$, we define two quantum states, 
    \begin{align}
        \rho_{(\eps)} &:= \frac{e^\eps}{1+e^\eps}\ketbra{0} + \frac{1}{1+e^\eps}\ketbra{1},\label{rho_pure_weak}\\
        \sigma_{(\eps)} &:= \frac{1}{1+e^\eps}\ketbra{0} + \frac{e^\eps}{1+e^\eps}\ketbra{1},\label{sigma_pure_weak}
    \end{align}
     Observe that it can be verified that $\rho_{(\eps)}$ and $\sigma_{(\eps)}$ are the weakest (most informative) $\eps$-DP pair of quantum states.
   \end{itemize}
\end{remark}

We now state the main theorem of this section, which provides a bound on $D_f(\rho\|\sigma)$ (see Definition \ref{def_f_div_integral_rep}) for any pair of quantum states $(\rho,\sigma) \in \mathfrak{D}_{(\eps,\delta)}$.

\begin{theorem}\label{theo_QLDP_approx_F_div_bound}
    If a pair of quantum states $(\rho,\sigma)$ belongs to $\mathfrak{D}_{(\eps,\delta)}$ for some fixed $\eps \geq 0$ and $\delta \in [0,1]$, 
    then for any $f \in \cF$, we have,
    \begin{equation}
        D_f(\rho\|\sigma) \leq D_f(\rho_{(\eps,\delta)}\|\sigma_{(\eps,\delta)}),\label{theo_QLDP_approx_F_div_eq}
    \end{equation}
    whenever $D_f$ is well-defined for the pairs $(\rho_1,\rho_2)$ and $(\rho_{(\eps,\delta)},\sigma_{(\eps,\delta)})$.
Also, for any $\gamma \geq 0$
    and $\alpha \in [0,1) \cup (1,+\infty)$, we have,
    \begin{align}
        E_{\gamma}(\rho\|\sigma)& \leq E_{\gamma}(\rho_{(\eps,\delta)}\|\sigma_{(\eps,\delta)}),\label{corr_QLDP_approx_hockey_eq} \\
        D_{\alpha}(\rho\|\sigma)& \leq D_{\alpha}(\rho_{(\eps,\delta)}\|\sigma_{(\eps,\delta)}),\label{corr_QLDP_P} \\
        D(\rho\|\sigma)& \leq D(\rho_{(\eps,\delta)}\|\sigma_{(\eps,\delta)}),\label{corr_QLDP_Re} \\
        \check{D}_{\alpha}(\rho\|\sigma)& \leq \check{D}_{\alpha}(\rho_{(\eps,\delta)}\|\sigma_{(\eps,\delta)}).\label{corr_QLDP_S}
    \end{align}
\end{theorem}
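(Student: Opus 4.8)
The plan is to reduce the entire theorem to the one‑shot hypothesis‑testing comparison \eqref{theo_QLDP_approx_F_div_eq1}, which is already established just above the statement: combining Lemma~\ref{lemma_DP_t_hyp_rel} with Lemma~\ref{lemma_diff_priv_quant} yields $D^{\alpha}_{H}(\rho\|\sigma)\le D^{\alpha}_{H}(\rho_{(\eps,\delta)}\|\sigma_{(\eps,\delta)})$ for all $\alpha\in[0,1]$, i.e. $\{\rho_{(\eps,\delta)},\sigma_{(\eps,\delta)}\}\succeq_{Q}\{\rho,\sigma\}$ (Corollary~\ref{corr_QLDP_blackwell}). Given this, \eqref{theo_QLDP_approx_F_div_eq} is an immediate application of Theorem~\ref{lemma_t_hyp_F_Div} with $(\rho_1,\rho_2)=(\rho_{(\eps,\delta)},\sigma_{(\eps,\delta)})$ and $(\sigma_1,\sigma_2)=(\rho,\sigma)$, while \eqref{corr_QLDP_approx_hockey_eq}, \eqref{corr_QLDP_P} and \eqref{corr_QLDP_Re} are precisely the three conclusions \eqref{BN1}--\eqref{BN2B} of Lemma~\ref{fact_equi_ah_sd} under the same substitution. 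The only thing to check alongside is the usual finiteness caveat; note that whenever $\delta<1$ the states $\rho_{(\eps,\delta)},\sigma_{(\eps,\delta)}$ share the $\ket{01},\ket{10}$ directions, so $\rho_{(\eps,\delta)}\not\perp\sigma_{(\eps,\delta)}$ and the quantities $D_\alpha$ (for $\alpha<1$) and $D$ are well defined.

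The one inequality not delivered by the earlier results is the measured R\'enyi bound \eqref{corr_QLDP_S}, since $\check D_\alpha$ is not an $f$‑divergence of the form in Definition~\ref{def_f_div_integral_rep} and has no hockey‑stick integral representation on the quantum side. Here I would use the defining supremum $\check D_\alpha(\rho\|\sigma)=\sup_{M}D_\alpha(P^M_\rho\|P^M_\sigma)$. Fix a POVM $M$; its measurement channel is CPTP, so the data‑processing inequality for the hockey‑stick divergence, together with \eqref{corr_QLDP_approx_hockey_eq} applied to $(\rho,\sigma)$ and---using the symmetry of Definition~\ref{def_DP_state} and of the construction \eqref{eq_rho_simp}--\eqref{eq_sigma_simp}---to $(\sigma,\rho)$, gives $E_\gamma(P^M_\rho\|P^M_\sigma)\le E_\gamma(\rho_{(\eps,\delta)}\|\sigma_{(\eps,\delta)})$ and $E_\gamma(P^M_\sigma\|P^M_\rho)\le E_\gamma(\sigma_{(\eps,\delta)}\|\rho_{(\eps,\delta)})$ for every $\gamma\ge 1$. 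Substituting these into the classical specialization of the integral formula for $H_\alpha$ in Definition~\ref{def_quant_renyi_hockey} shows $H_\alpha(P^M_\rho\|P^M_\sigma)\le H_\alpha(\rho_{(\eps,\delta)}\|\sigma_{(\eps,\delta)})$, and since $t\mapsto\frac{1}{\alpha-1}\log(1+(\alpha-1)t)$ is increasing for every admissible $\alpha$, $D_\alpha(P^M_\rho\|P^M_\sigma)\le D_\alpha(\rho_{(\eps,\delta)}\|\sigma_{(\eps,\delta)})$. Taking the supremum over $M$ gives $\check D_\alpha(\rho\|\sigma)\le D_\alpha(\rho_{(\eps,\delta)}\|\sigma_{(\eps,\delta)})$.

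To finish \eqref{corr_QLDP_S} I would close the gap between $D_\alpha$ and $\check D_\alpha$ on the right‑hand side by exploiting that $\rho_{(\eps,\delta)}$ and $\sigma_{(\eps,\delta)}$ commute (both are diagonal in $\{\ket{00},\ket{01},\ket{10},\ket{11}\}$). Measuring in that common eigenbasis is feasible in the supremum defining $\check D_\alpha$, so $\check D_\alpha(\rho_{(\eps,\delta)}\|\sigma_{(\eps,\delta)})$ is at least the classical R\'enyi divergence of the two eigenvalue vectors; by Fact~\ref{Fact_hockey_equiv_form} the hockey‑stick divergence of commuting states coincides with the classical one, so that classical R\'enyi divergence equals $D_\alpha(\rho_{(\eps,\delta)}\|\sigma_{(\eps,\delta)})$ via its integral formula. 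Combined with $\check D_\alpha\le D_\alpha$ from Fact~\ref{fact_P_div_integral_rep}, this forces $\check D_\alpha(\rho_{(\eps,\delta)}\|\sigma_{(\eps,\delta)})=D_\alpha(\rho_{(\eps,\delta)}\|\sigma_{(\eps,\delta)})$, and \eqref{corr_QLDP_S} follows from the previous paragraph. I expect this last part---pushing the measured R\'enyi divergence through the classical picture and the commutativity of the weakest pair---to be the only real obstacle; everything else is bookkeeping built directly on Lemmas~\ref{lemma_DP_t_hyp_rel}, \ref{lemma_diff_priv_quant} and \ref{fact_equi_ah_sd} and Theorem~\ref{lemma_t_hyp_F_Div}.
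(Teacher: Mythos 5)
Your proposal is correct and follows essentially the same route as the paper: reduce everything to the dominance $D^{\alpha}_{H}(\rho\|\sigma)\le D^{\alpha}_{H}(\rho_{(\eps,\delta)}\|\sigma_{(\eps,\delta)})$ from Lemmas~\ref{lemma_DP_t_hyp_rel} and~\ref{lemma_diff_priv_quant}, then invoke Theorem~\ref{lemma_t_hyp_F_Div} for \eqref{theo_QLDP_approx_F_div_eq} and Lemma~\ref{fact_equi_ah_sd} for \eqref{corr_QLDP_approx_hockey_eq}--\eqref{corr_QLDP_Re}. For \eqref{corr_QLDP_S} your measure-first detour is a more laborious rendering of the paper's one-line chain $\check{D}_{\alpha}(\rho\|\sigma)\le D_{\alpha}(\rho\|\sigma)\le D_{\alpha}(\rho_{(\eps,\delta)}\|\sigma_{(\eps,\delta)})=\check{D}_{\alpha}(\rho_{(\eps,\delta)}\|\sigma_{(\eps,\delta)})$ via Fact~\ref{fact_P_div_integral_rep} and commutativity of the weakest pair, but it arrives at the same conclusion by the same essential mechanism.
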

\begin{proof}
Suppose a pair $(\rho,\sigma)$ belongs to 
$\mathfrak{D}_{(\varepsilon,\delta)}$ for some fixed $\varepsilon \geq 0$ and $\delta \in [0,1]$. 
   Then, the combination of \eqref{theo_QLDP_approx_F_div_eq1} and
Theorem \ref{lemma_t_hyp_F_Div} implies the relation $D_f(\rho\|\sigma) \leq D_f(\rho_{(\eps,\delta)}\|\sigma_{(\eps,\delta)})$. 
The combination of \eqref{theo_QLDP_approx_F_div_eq1} and
Lemma \ref{fact_equi_ah_sd}
also implies 
\eqref{corr_QLDP_approx_hockey_eq}, \eqref{corr_QLDP_P}, and \eqref{corr_QLDP_Re}.
    Finally, using Fact \ref{fact_P_div_integral_rep}, we have, 
    \begin{align}
            \check{D}_{\alpha}(\rho\|\sigma)\leq 
                    D_{\alpha}(\rho\|\sigma) \leq D_{\alpha}(\rho_{(\eps,\delta)}\|\sigma_{(\eps,\delta)})=
            \check{D}_{\alpha}(\rho_{(\eps,\delta)}\|\sigma_{(\eps,\delta)}),    \end{align}
    where the final equality follows from Fact \ref{fact:commuting_renyi_full}, since 
    $\rho_{(\eps,\delta)}$ and $\sigma_{(\eps,\delta)}$
   are commutative with each other. 
    This proves Theorem \ref{theo_QLDP_approx_F_div_bound}.
\end{proof}

Further, for $\delta =0$, the theorem below gives an upper-bound on $D_f$ (for any $f \in \cF$) between any pair of quantum states $(\rho,\sigma) \in \mathfrak{D}_{(\eps,0)}$.
\begin{corollary}\label{theo_QLDP_F_div_pure_bound}
    If any pair of quantum states $(\rho,\sigma) \in \mathfrak{D}_{(\eps,0)}$ for some fixed $\eps \geq 0$, then for any $f \in \cF$, $\gamma \geq 0$
    and $\alpha \in [0,1) \cup (1,+\infty)$, we have,
    \begin{align}
        D_f(\rho\|\sigma) &\leq D_f(\rho_{(\eps)}\|\sigma_{(\eps)}),\label{theo_QLDP_F_div_pure_bound_eq}\\
        E_{\gamma}(\rho\|\sigma)& \leq E_{\gamma}(\rho_{(\eps)}\|\sigma_{(\eps)}),\label{theo_QLDP_F_div_pure_bound_eq2} \\
        D_{\alpha}(\rho\|\sigma)
        & \leq D_{\alpha}(\rho_{(\eps)}\|\sigma_{(\eps)}),\label{theo_QLDP_F_div_pure_bound_eq3}\\
        \check{D}_{\alpha}(\rho\|\sigma)& \leq 
        \check{D}_{\alpha}(\rho_{(\eps)}\|\sigma_{(\eps)}).\label{theo_QLDP_F_div_pure_bound_eq4}
    \end{align}
   whenever $D_f$ is well-defined for the pairs $(\rho,\sigma)$ and $(\rho_{(\eps)},\sigma_{(\eps)})$ (defined in \eqref{rho_pure_weak} and \eqref{sigma_pure_weak} respectively).
\end{corollary}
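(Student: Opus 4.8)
The plan is to obtain Corollary~\ref{theo_QLDP_F_div_pure_bound} as the $\delta = 0$ specialization of Theorem~\ref{theo_QLDP_approx_F_div_bound}, after identifying the weakest pair in the pure case. First I would set $\delta = 0$ in \eqref{eq_rho_simp} and \eqref{eq_sigma_simp}: the $\ketbra{00}$ and $\ketbra{11}$ components then carry zero weight, so
\[
\rho_{(\eps,0)} = \tfrac{e^\eps}{1+e^\eps}\ketbra{01} + \tfrac{1}{1+e^\eps}\ketbra{10}, \qquad
\sigma_{(\eps,0)} = \tfrac{1}{1+e^\eps}\ketbra{01} + \tfrac{e^\eps}{1+e^\eps}\ketbra{10}.
\]
Comparing with \eqref{rho_pure_weak} and \eqref{sigma_pure_weak}, the pair $(\rho_{(\eps,0)},\sigma_{(\eps,0)})$ is the image of $(\rho_{(\eps)},\sigma_{(\eps)})$ under the isometry $\ket{0}\mapsto\ket{01}$, $\ket{1}\mapsto\ket{10}$. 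Since the quantum hockey-stick divergence, the quantum $f$-divergences, the hockey-stick R\'enyi divergence, the relative entropy and the measured R\'enyi divergence are all invariant under isometric embeddings (and none is sensitive to enlarging the ambient Hilbert space), every divergence of $(\rho_{(\eps,0)},\sigma_{(\eps,0)})$ equals the corresponding divergence of $(\rho_{(\eps)},\sigma_{(\eps)})$; likewise the ``$D_f$ well-defined'' proviso is preserved.

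Next I would note that, by Definition~\ref{def_DP_state}, $\mathfrak{D}_{(\eps,0)}$ is exactly the family of $\eps$-DP pairs in the hypothesis of the corollary, so invoking Theorem~\ref{theo_QLDP_approx_F_div_bound} with $\delta = 0$ for a pair $(\rho,\sigma)\in\mathfrak{D}_{(\eps,0)}$ yields $D_f(\rho\|\sigma)\le D_f(\rho_{(\eps,0)}\|\sigma_{(\eps,0)})$, $E_\gamma(\rho\|\sigma)\le E_\gamma(\rho_{(\eps,0)}\|\sigma_{(\eps,0)})$, and the analogous inequalities for $D_\alpha$ and $\check D_\alpha$. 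Substituting the identification of the previous paragraph rewrites the right-hand sides as the divergences of $(\rho_{(\eps)},\sigma_{(\eps)})$, which is exactly \eqref{theo_QLDP_F_div_pure_bound_eq}--\eqref{theo_QLDP_F_div_pure_bound_eq4}.

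For a more self-contained presentation, I would instead run the same scheme as in the proof of Theorem~\ref{theo_QLDP_approx_F_div_bound} directly at $\delta=0$: Lemma~\ref{lemma_DP_t_hyp_rel} with $\delta = 0$ gives $D^\alpha_H(\rho\|\sigma)\le -\log f_{\eps,0}(\alpha)$ for all $\alpha\in[0,1]$, where $f_{\eps,0}(\alpha) = \max\{1 - e^\eps\alpha,\ e^{-\eps}(1-\alpha),\ 0\}$; a short direct computation, or Lemma~\ref{lemma_diff_priv_quant} at $\delta=0$ together with the isometry above, shows that the commuting qubit pair $(\rho_{(\eps)},\sigma_{(\eps)})$ attains this with equality. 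Combining the two gives $D^\alpha_H(\rho\|\sigma)\le D^\alpha_H(\rho_{(\eps)}\|\sigma_{(\eps)})$ for every $\alpha$, i.e. $\{\rho_{(\eps)},\sigma_{(\eps)}\}\succeq_Q\{\rho,\sigma\}$; then Theorem~\ref{lemma_t_hyp_F_Div} gives the $D_f$ bound, Lemma~\ref{fact_equi_ah_sd} gives the $E_\gamma$, $D_\alpha$ and $D$ bounds, and Fact~\ref{fact_P_div_integral_rep} together with the commutativity of $\rho_{(\eps)}$ and $\sigma_{(\eps)}$ (which forces $\check D_\alpha = D_\alpha$ for that pair) upgrades the $D_\alpha$ bound to the measured R\'enyi bound \eqref{theo_QLDP_F_div_pure_bound_eq4}.

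There is no substantial obstacle: the corollary is essentially a relabeling of Theorem~\ref{theo_QLDP_approx_F_div_bound}. The only point requiring care is the bookkeeping of the null components — confirming that dropping the zero-weight $\ketbra{00}$, $\ketbra{11}$ terms and relabeling changes none of the five divergences and preserves the well-definedness hypothesis. I would state this identification as a one-line observation before applying the main theorem.
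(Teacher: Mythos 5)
Your proposal is correct and matches the paper's intended argument: the corollary is stated there as the direct $\delta=0$ specialization of Theorem~\ref{theo_QLDP_approx_F_div_bound} (equivalently, of the chain Lemma~\ref{lemma_DP_t_hyp_rel}, Lemma~\ref{lemma_diff_priv_quant}, Theorem~\ref{lemma_t_hyp_F_Div}, Lemma~\ref{fact_equi_ah_sd}, Fact~\ref{fact_P_div_integral_rep}), with the weakest pair becoming $(\rho_{(\eps)},\sigma_{(\eps)})$. Your explicit check that dropping the zero-weight $\ketbra{00}$, $\ketbra{11}$ components and relabeling is an isometric identification that leaves all five divergences (and the well-definedness proviso) unchanged is a small bookkeeping point the paper leaves implicit, but it introduces no new idea.
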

\begin{remark}
    Note that for some fixed $\eps \geq 0$, the pair $(\rho_{(\eps)},\sigma_{(\eps)})$ is commutative and $D_{\max}(\rho_{(\eps)}\|\sigma_{(\eps)})= D_{\max}(\sigma_{(\eps)}\|\rho_{(\eps)})$ $ = \eps$. 
    Further, Definition \ref{def_DP_state} directly guarantees that 
     any pair of quantum states $(\rho,\sigma) \in \mathfrak{D}_{(\eps,0)}$ satisfies 
    \begin{align*}
        D_{\max}(\rho\|\sigma) & \leq \eps = D_{\max}(\rho_{(\eps)}\|\sigma_{(\eps)}), \nn\\
        D_{\max}(\sigma\|\rho) &\leq \eps = D_{\max}(\sigma_{(\eps)}\|\rho_{(\eps)}).
            \end{align*}
     Therefore, from Fact \ref{fact_state_trans}, there exists a CP-TP map $\cE$, such that $\rho = \cE(\rho_{(\eps)}) $ and $\sigma =\cE(\sigma_{(\eps)})$. Hence, Definition \ref{def_data_proc_func} implies 
     \begin{equation*}
         \varmathbb{D}(\rho\|\sigma) \leq \varmathbb{D}(\rho_{(\eps)}\|\sigma_{(\eps)}).
     \end{equation*}
\end{remark}

It is important to note that for $\delta > 0$, $\supp(\rho_{(\eps,\delta)}) \not\subseteq \supp(\sigma_{(\eps,\delta)})$. Therefore, the divergence that requires support inclusion is not well-defined for the pair of quantum states $\rho_{(\eps,\delta)}$ and $\sigma_{(\eps,\delta)}$.
Although for the case when $\delta = 0$, since $\supp(\rho_{\eps}) = \supp(\sigma_{\eps})$, for any $\eps$-DP quantum state pairs, using Corollary \ref{theo_QLDP_F_div_pure_bound}, we derive upper-bounds on quantum relative entropy and $L_1$ distance as mentioned in the corollary below.
\begin{corollary}\label{corr_QLDP_Divs_bound}
    If a pair of quantum states $(\rho,\sigma) \in \mathfrak{D}_{(\eps,0)}$ for some fixed $\eps \geq 0$, 
    then, we have the following upper-bounds,

    \begin{enumerate}[label=(\roman*)]
        \item $D(\rho\|\sigma) \leq  \eps \tanh \left(\frac{\eps}{2}\right)$\label{corr_QLDP_KL_bound_eq},
        \vspace*{5pt} 
        \item $\norm{\rho-\sigma}{1} \leq 2 \tanh \left(\frac{\eps}{2}\right)$. \label{corr_QLDP_Hockey_Stick_bound_eq}
    \end{enumerate}
\end{corollary}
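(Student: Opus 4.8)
The plan is to reduce both bounds to the explicit ``weakest'' pure $\eps$-DP pair $(\rho_{(\eps)},\sigma_{(\eps)})$ of \eqref{rho_pure_weak}--\eqref{sigma_pure_weak} via Corollary~\ref{theo_QLDP_F_div_pure_bound}, and then evaluate the relevant divergence of that pair by a one-line classical computation. The key structural observation is that $\rho_{(\eps)}$ and $\sigma_{(\eps)}$ are simultaneously diagonal in $\{\ket 0,\ket 1\}$ with weight vectors $\bigl(\tfrac{e^\eps}{1+e^\eps},\tfrac{1}{1+e^\eps}\bigr)$ and $\bigl(\tfrac{1}{1+e^\eps},\tfrac{e^\eps}{1+e^\eps}\bigr)$, so every divergence between them collapses to the corresponding binary classical quantity.

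For \ref{corr_QLDP_KL_bound_eq}: the map $f(x)=x\log x$ lies in $\cF$ (convex on $(0,\infty)$, twice differentiable with $f''(x)=1/x$, and $f(1)=0$), and by Fact~\ref{fact_KL_div_integral_rep} its associated $f$-divergence is the quantum relative entropy $D(\cdot\|\cdot)$; moreover $\supp(\rho_{(\eps)})=\supp(\sigma_{(\eps)})$, so $D(\rho_{(\eps)}\|\sigma_{(\eps)})$ is finite. Hence \eqref{theo_QLDP_F_div_pure_bound_eq} gives $D(\rho\|\sigma)\le D(\rho_{(\eps)}\|\sigma_{(\eps)})$. Writing $p=\tfrac{e^\eps}{1+e^\eps}$, the right-hand side equals $p\log\tfrac{p}{1-p}+(1-p)\log\tfrac{1-p}{p}=(2p-1)\log\tfrac{p}{1-p}$; using $\tfrac{p}{1-p}=e^\eps$ and $2p-1=\tfrac{e^\eps-1}{e^\eps+1}=\tanh(\eps/2)$ yields $D(\rho_{(\eps)}\|\sigma_{(\eps)})=\eps\tanh(\eps/2)$, which is the claim.

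For \ref{corr_QLDP_Hockey_Stick_bound_eq}: by Fact~\ref{Fact_hockey_equiv_form} together with $\tr(\rho-\sigma)=0$ we have $\tfrac12\norm{\rho-\sigma}{1}=\tr|(\rho-\sigma)|_+=E_1(\rho\|\sigma)$, so it suffices to show $E_1(\rho\|\sigma)\le\tanh(\eps/2)$. Applying \eqref{theo_QLDP_F_div_pure_bound_eq2} with $\gamma=1$ gives $E_1(\rho\|\sigma)\le E_1(\rho_{(\eps)}\|\sigma_{(\eps)})$, and in the binary picture (with $p\ge\tfrac12$ since $\eps\ge 0$) one has $E_1(\rho_{(\eps)}\|\sigma_{(\eps)})=[p-(1-p)]_+ + [(1-p)-p]_+ = 2p-1=\tanh(\eps/2)$. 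Chaining the inequalities proves \ref{corr_QLDP_Hockey_Stick_bound_eq}.

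No genuine obstacle arises; the only points needing care are (a) confirming $x\log x\in\cF$ (and that its $f$-divergence is finite here) so that the dominance of Corollary~\ref{theo_QLDP_F_div_pure_bound} — equivalently \eqref{corr_QLDP_Re} of Theorem~\ref{theo_QLDP_approx_F_div_bound} specialized to $\delta=0$ — legitimately applies to the relative entropy, and (b) the elementary identities $\log\tfrac{p}{1-p}=\eps$ and $\tfrac{e^\eps-1}{e^\eps+1}=\tanh(\eps/2)$. An alternative to invoking Corollary~\ref{theo_QLDP_F_div_pure_bound} is to use that for $\delta=0$ a CPTP map sends $(\rho_{(\eps)},\sigma_{(\eps)})$ to $(\rho,\sigma)$ (as noted in the remark following that corollary) and apply ordinary data processing; either route reduces the problem to the same two-point computation.
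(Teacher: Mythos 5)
Your proof is correct and follows essentially the same route as the paper: both parts reduce to the weakest pure $\eps$-DP pair $(\rho_{(\eps)},\sigma_{(\eps)})$ via Corollary~\ref{theo_QLDP_F_div_pure_bound} (using $f(x)=x\log x$ with Fact~\ref{fact_KL_div_integral_rep} for the relative entropy, and $E_1$ with Fact~\ref{Fact_hockey_equiv_form} for the trace norm), followed by the same two-point evaluation giving $\eps\tanh(\eps/2)$ and $2\tanh(\eps/2)$. The alternative CPTP/data-processing route you mention is likewise noted in the paper's remark, so there is nothing to correct.
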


\begin{proof}
    \hfill\\
    \emph{(i)} Fact \ref{fact_KL_div_integral_rep} yields that $D(\cdot\|\cdot) = D_{f}(\cdot\|\cdot)$, where $f(x) = x \log x$. 
    Further, the pair $(\rho_{(\eps)},\sigma_{(\eps)})$ satisfies the following,
    \begin{align}
        D(\rho_{(\eps)}\|\sigma_{(\eps)}) &= \frac{e^\eps}{1+e^\eps} \eps - \frac{1}{1+e^\eps}\eps\nn\\
                          &= \eps\left(\frac{e^{\frac{\eps}{2}} - e^{-\frac{\eps}{2}}}{e^{\frac{\eps}{2}} + e^{-\frac{\eps}{2}}}\right)
                          = \eps \tanh\left(\frac{\eps}{2}\right).
    \end{align}

Therefore, \eqref{theo_QLDP_F_div_pure_bound_eq} of Corollary \ref{theo_QLDP_F_div_pure_bound} completes the proof of \emph{(i)} of Corollary \ref{corr_QLDP_Divs_bound}.

\emph{(ii)} The pair $(\rho_{(\eps)},\sigma_{(\eps)})$ satisfies the following,
    \begin{align}
       \norm{\rho_{(\eps)}-\sigma_{(\eps)}}{1} &= \abs{\frac{e^\eps - 1}{1+ e^\eps}} + \abs{\frac{1 - e^\eps}{1+ e^\eps}}\nn\\
                          &\overset{(a)}{=} 2\abs{\frac{e^\eps - 1}{1+ e^\eps}}
                          = 2 \tanh\left(\frac{\eps}{2}\right).
    \end{align}
where $(a)$ follows from the fact that $e^\eps \geq 1$ as $\eps \geq 0$.
Note that we have $\norm{\rho-\sigma}{1} = 2 \abs{\rho-\sigma}_{+} = 2 E_{1}(\rho\|\sigma)$, where the last equality follows from Fact \ref{Fact_hockey_equiv_form}.
 Therefore, the relation \eqref{theo_QLDP_F_div_pure_bound_eq2} of Corollary \ref{theo_QLDP_F_div_pure_bound} derives that $\norm{\rho-\sigma}{1} = 2E_{1}(\rho\|\sigma) \leq  2E_{1}(\rho_{(\eps)}\|\sigma_{(\eps)}) = \norm{\rho_{(\eps)}-\sigma_{(\eps)}}{1}$. This completes the proof of \emph{(ii)} of Corollary \ref{corr_QLDP_Divs_bound}.    
\end{proof}

\section{Quantum Privatized Parameter Inference}\label{sec:fisher}
\subsection{Problem setting}
We consider a statistical inference problem with $n$ respondents with the privacy setting.
Respondent $i\in\{1,\ldots,n\}$ holds a private bit $X_i\in\{0,1\}$, and the investigator aims to infer a population parameter
$\theta\in[0,1]$ that governs the relative frequency of the two outcomes.
Concretely, since (from the investigator’s perspective) the respondents’ behavior is modeled as random sampling with replacement, the $n$ private bits can be treated as i.i.d.\ draws from a Bernoulli law $p_\theta$ on $\{0,1\}$ with $p_\theta(0)=\theta$ and $p_\theta(1)=1-\theta$.
Equivalently, the investigator treats the collected answers as random sampling with replacement from a population with
composition parameterized by $\theta$, so that $(X_1,\ldots,X_n)$ is (approximately) i.i.d.\ under $p_\theta$.
The key task is to estimate $\theta$ from the $n$ (privatized) responses.

To gather information while protecting privacy, each respondent releases not the raw bit $X_i$ but a quantum system whose
state depends on $X_i$.
Specifically, we fix two quantum states $\rho$ and $\sigma$, and respondent $i$ releases state $\rho$ if $X_i=0$ and state
$\sigma$ if $X_i=1$.
Thus, each response is produced by an encoding map $x\in\{0,1\}\mapsto \rho_x$, where $\rho_0:=\rho$ and
$\rho_1:=\sigma$.
The privacy requirement is imposed at the level of each respondent's released system.
Namely, we require that the pair of possible outputs of an individual respondent, $(\rho,\sigma)$, satisfies
$(\eps,\delta)$-DP in the sense of Definition~\ref{def_DP_state}, i.e.,
$(\rho,\sigma)\in\mathfrak{D}_{(\varepsilon,\delta)}$.

Parametrized states induced by the population model.
Under the Bernoulli model with parameter $\theta$, a single privatized response is described by the mixture
$\rho_\theta := \theta\,\rho + (1-\theta)\,\sigma$ with $\theta\in[0,1]$.
Since the investigator collects $n$ responses, the resulting joint state is modeled as the product state
$\rho_\theta^{\otimes n}$.
This is the quantum analogue of observing $n$ i.i.d.\ privatized samples drawn according to $p_\theta$, where the unknown
parameter $\theta$ controls the mixture proportions between $\rho$ and $\sigma$.

We also define $\rho_{(\eps,\delta),\theta}$ analogously when the respondents encode their information using the weakest/most
informative $(\eps,\delta)$-DP pair $(\rho_{(\eps,\delta)},\sigma_{(\eps,\delta)})$ constructed earlier in the paper.
That is,
$\rho_{(\eps,\delta),\theta} := \theta\,\rho_{(\eps,\delta)} + (1-\theta)\,\sigma_{(\eps,\delta)}$.
This reference pair plays a canonical role in our analysis: since it is the most informative among all $(\eps,\delta)$-DP
pairs, it provides the fundamental benchmark for the maximum inferential power (e.g., hypothesis testing performance and
Fisher information) achievable by any privatized mechanism under the same privacy parameters.

More generally, quantum parameter inference concerns estimating an unknown parameter $\theta$ encoded in a family of states
$\{\rho_\theta\}$, where the encoding is constrained by privacy requirements.
In the present privatized setting, the requirement $(\rho,\sigma)\in\mathfrak{D}_{(\eps,\delta)}$ restricts the
distinguishability of the two single-shot outputs under \emph{any} measurement and hence limits how much information about
$\theta$ can be extracted from $\rho_\theta^{\otimes n}$.
In the following subsections, we quantify this limitation through (i) privatized hypothesis testing and (ii) bounds on the
(SLD) Fisher information under $(\eps,\delta)$-QDP.

\subsection{Hypothesis testing with one respondent}
First, for the theoretical and fundamental analysis, we consider the scenario where the investigator aims to distinguish between two original information sources, $\rho_{\theta_0}$ and $\rho_{\theta_1}$ with $\theta_0 < \theta_1$, only from one observation; however, this scenario is impractical as it addresses only a single respondent.
The performance in this hypothesis testing task is characterized by quantities like the hypothesis testing divergence $D_H^\alpha( \rho_{\theta_0}\|\rho_{\theta_1})$.

\begin{theorem}\label{THJ}
The maximum hypothesis testing divergence under the $(\eps,\delta)$-DP constraint is given by:
\begin{align}
\max_{(\rho,\sigma) \in \mathfrak{D}_{(\eps,\delta)}} D_H^\alpha( \rho_{\theta_1}\|\rho_{\theta_2}) 
= D_H^\alpha( \rho_{(\eps,\delta),\theta_1}\|\rho_{(\eps,\delta),\theta_0}).
\end{align}
This maximum is uniformly attained by the pair $(\rho_{(\eps,\delta)}, \sigma_{(\eps,\delta)})$.
\end{theorem}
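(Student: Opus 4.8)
The plan is to reduce the statement to the containment of characteristic regions established in Section~\ref{sec_char_region}, rather than to any direct comparison of the mixed states. The crucial observation is that both $\rho_{\theta_0}=\theta_0\rho+(1-\theta_0)\sigma$ and $\rho_{\theta_1}=\theta_1\rho+(1-\theta_1)\sigma$ are affine combinations of the \emph{same} pair $(\rho,\sigma)$, so the behaviour of any test $0\preceq\Lambda\preceq\bbI$ against these mixed states depends on $\Lambda$ only through the two numbers $p:=\tr[\Lambda\rho]$ and $q:=\tr[\Lambda\sigma]$: indeed $\tr[\Lambda\rho_{\theta_1}]=\theta_1 p+(1-\theta_1)q$ and $\tr[(\bbI-\Lambda)\rho_{\theta_0}]=1-\theta_0 p-(1-\theta_0)q$. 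Writing $\cS(\rho,\sigma):=\{(\tr[\Lambda\rho],\tr[\Lambda\sigma]):0\preceq\Lambda\preceq\bbI\}$, which is just the coordinate change $(\alpha_\Lambda,\beta_\Lambda)\mapsto(\alpha_\Lambda,1-\beta_\Lambda)$ applied to the characteristic region $\cR(\rho,\sigma)$ of Definition~\ref{def_char_region}, Fact~\ref{fact_d_hyp_func} then gives
\[
  D^{\alpha}_{H}(\rho_{\theta_1}\|\rho_{\theta_0})
  = -\log \min\Big\{\, 1-\theta_0 p-(1-\theta_0)q \ :\ (p,q)\in\cS(\rho,\sigma),\ \theta_1 p+(1-\theta_1)q\le\alpha \,\Big\}.
\]
Thus the one-shot privatized hypothesis-testing problem is a linear program over the two-dimensional convex body $\cS(\rho,\sigma)$, with objective and constraint that do not depend on which $(\eps,\delta)$-DP pair we use.

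Next I would invoke the two statements proved in Section~\ref{sec_char_region}: for any $(\rho,\sigma)\in\mathfrak{D}_{(\eps,\delta)}$ one has $\cR(\rho,\sigma)\subseteq\cR(\eps,\delta)$, and the explicit pair of \eqref{eq_rho_simp}--\eqref{eq_sigma_simp} attains $\cR(\rho_{(\eps,\delta)},\sigma_{(\eps,\delta)})=\cR(\eps,\delta)$; hence $\cS(\rho,\sigma)\subseteq\cS(\rho_{(\eps,\delta)},\sigma_{(\eps,\delta)})$. In the displayed minimisation only the feasible set changes when $(\rho,\sigma)$ is replaced by $(\rho_{(\eps,\delta)},\sigma_{(\eps,\delta)})$, and it can only enlarge, so the minimum can only decrease:
\[
  \min\Big\{\,1-\theta_0 p-(1-\theta_0)q : (p,q)\in\cS(\rho,\sigma),\ \theta_1 p+(1-\theta_1)q\le\alpha\,\Big\}
  \ \ge\ \min\Big\{\,\cdots : (p,q)\in\cS(\rho_{(\eps,\delta)},\sigma_{(\eps,\delta)}),\ \cdots\,\Big\}.
\]
Applying the decreasing map $-\log(\cdot)$ yields $D^{\alpha}_{H}(\rho_{\theta_1}\|\rho_{\theta_0})\le D^{\alpha}_{H}(\rho_{(\eps,\delta),\theta_1}\|\rho_{(\eps,\delta),\theta_0})$ for every $(\rho,\sigma)\in\mathfrak{D}_{(\eps,\delta)}$ and every $\alpha\in[0,1]$. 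Since $(\rho_{(\eps,\delta)},\sigma_{(\eps,\delta)})\in\mathfrak{D}_{(\eps,\delta)}$, this bound is achieved, which establishes both the claimed value of the supremum and the fact that it is attained, uniformly in $\alpha$, by $(\rho_{(\eps,\delta)},\sigma_{(\eps,\delta)})$.

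I expect the only genuinely delicate point to be the reduction in the first step: one must check carefully that the hypothesis test for the \emph{mixed} states $\rho_{\theta_0},\rho_{\theta_1}$ really factors through the pair $(\tr[\Lambda\rho],\tr[\Lambda\sigma])$, so that the problem collapses to an optimization over the component region $\cS(\rho,\sigma)$ (equivalently, that the ROC region of $(\rho_{\theta_1},\rho_{\theta_0})$ is the linear image of $\cS(\rho,\sigma)$ under the map $(p,q)\mapsto(\theta_1 p+(1-\theta_1)q,\,1-\theta_0 p-(1-\theta_0)q)$). This is exactly why we need not produce a CPTP map taking $(\rho_{(\eps,\delta)},\sigma_{(\eps,\delta)})$ to $(\rho,\sigma)$: region containment for two-dimensional testing regions is already the strong ``full ROC'' domination, and it is preserved under the linear image above. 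Once the reduction is in place, the remainder is only monotonicity of an infimum under set inclusion together with the identity $\cR(\rho_{(\eps,\delta)},\sigma_{(\eps,\delta)})=\cR(\eps,\delta)$. (Alternatively, one could run the argument through the SDP-dual identity $\beta(\alpha|\tau_1\|\tau_2)=\max_{\gamma\ge 0}\{1-\gamma\alpha-E_{\gamma}(\tau_2\|\tau_1)\}$, expand $E_{\gamma}(\rho_{\theta_0}\|\rho_{\theta_1})$ into hockey-stick divergences of $(\rho,\sigma)$ and $(\sigma,\rho)$, and apply Theorem~\ref{theo_QLDP_approx_F_div_bound}; but the characteristic-region route avoids the sign case-analysis that appears in that expansion.)
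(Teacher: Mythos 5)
Your proposal is correct, and it reaches the conclusion by a route that differs in its key mechanism from the paper's own proof. Both arguments share the same first reduction: for a test $0\preceq\Lambda\preceq\bbI$, the errors against the mixtures $\rho_{\theta_1},\rho_{\theta_0}$ depend on $\Lambda$ only through $(\tr[\Lambda\rho],\tr[\Lambda\sigma])$ (the paper uses this implicitly when it writes $\beta^c(\alpha|\rho_{\theta_1}\|\rho_{\theta_0})$ as an optimization over points $(\alpha',\beta^c(\alpha'|\rho\|\sigma))$ on the boundary curve of the pair). From there the paths diverge: the paper works only with the one-sided boundary comparison \eqref{theo_QLDP_approx_F_div_eq1}, parametrizes the optimum by the constraint-binding values $\alpha_0,\alpha_1$, and closes with an algebraic monotonicity argument that explicitly uses $\theta_0<\theta_1$; you instead invoke the full two-dimensional region containment $\cR(\rho,\sigma)\subseteq\cR(\eps,\delta)=\cR(\rho_{(\eps,\delta)},\sigma_{(\eps,\delta)})$ from Section~\ref{sec_char_region} (which rests on the symmetric DP condition, i.e.\ both inequalities in \eqref{def_DP_state_eq}, not just the lower boundary) and then only need monotonicity of a minimum under enlargement of a compact feasible set. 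What your route buys: it sidesteps the boundary-parametrization and sign case analysis, it does not need $\theta_0<\theta_1$, and attainment of the minimum is automatic since $\cS(\rho,\sigma)$ is a compact convex set. What the paper's route buys: it needs only the hypothesis-testing-divergence comparison of Lemmas~\ref{lemma_DP_t_hyp_rel} and~\ref{lemma_diff_priv_quant} (the lower ROC boundary), whereas your argument leans on the stronger, though also established, claim $\cR(\rho_{(\eps,\delta)},\sigma_{(\eps,\delta)})=\cR(\eps,\delta)$, which the paper verifies somewhat informally via the corner points and convexity. Your explicit identification of the mixture problem as a linear program over $\cS(\rho,\sigma)$ is in fact a cleaner justification of the reduction step that the paper states without proof.
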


\begin{proof}
It is sufficient to show
\begin{align}
\max_{(\rho,\sigma) \in \mathfrak{D}_{(\eps,\delta)}} 
\beta^c(\alpha|\rho_{\theta_1}\|\rho_{\theta_0}) 
= \beta^c(\alpha| \rho_{(\eps,\delta),\theta_1}\|\rho_{(\eps,\delta),\theta_0})\label{VB2}.
\end{align}
where $\beta^c(\alpha| \sigma_1\|\sigma_2) :=1-\beta(\alpha| \sigma_1\|\sigma_2) $.
The relation \eqref{theo_QLDP_approx_F_div_eq1} implies
\begin{align}
\max_{(\rho,\sigma) \in \mathfrak{D}_{(\eps,\delta)}} 
\beta^c(\alpha| \rho\|\sigma) 
= \beta^c(\alpha| \rho_{(\eps,\delta)}\|\sigma_{(\eps,\delta)})
\label{VB3}.
\end{align}
Hence, the following discussion will show \eqref{VB2} from \eqref{VB3}.

Assume that the states $\rho$ and $\sigma$ satisfy the $(\eps,\delta)$-DP condition, which implies
\begin{align}
\beta^c(\alpha| \rho\|\sigma) 
\le \beta^c(\alpha| \rho_{(\eps,\delta)}\|\sigma_{(\eps,\delta)})
\label{VB4}.
\end{align}
The quantity $\beta^c(\alpha| \rho_{\theta_1}\|\rho_{\theta_0}) $
is written as 
\begin{align}
\beta^c(\alpha| \rho_{\theta_1}\|\rho_{\theta_0}) 
=&    \max_{\substack{\alpha'\in [0,1]:\\
        \theta_1 \alpha'+(1-\theta_1) \beta^c(\alpha'| \rho\|\sigma)\le \alpha 
        }}\theta_0 \alpha'+(1-\theta_0) \beta^c(\alpha'| \rho\|\sigma)  .
  \end{align}
We choose a real number $\alpha_0$ satisfying
  $\theta_1 \alpha_0+(1-\theta_1) \beta^c(\alpha_0| \rho\|\sigma)= \alpha$.
  Then, 
$\beta^c(\alpha| \rho_{\theta_1}\|\rho_{\theta_0}) 
=\theta_0 \alpha_0+(1-\theta_0) \beta^c(\alpha_0| \rho\|\sigma)
$ because 
$\theta_0 \alpha'+(1-\theta_0) \beta^c(\alpha'| \rho\|\sigma) $
and
$\theta_1 \alpha'+(1-\theta_1) \beta^c(\alpha'| \rho\|\sigma)$
are strictly monotonically increasing for $\alpha'$.
Also, we choose a real number $\alpha_1$ satisfying
  $\theta_1 \alpha_1+(1-\theta_1) \beta^c( \alpha_1|
  \rho_{(\eps,\delta)}\|\sigma_{(\eps,\delta)})= \alpha$.
Hence, the relation \eqref{VB4} implies that
$\alpha_1\le \alpha_0$.
Hence, 
\begin{align*}
&\beta^c(\alpha| \rho_{(\eps,\delta),\theta_1}\|\rho_{(\eps,\delta),\theta_0})
-\beta^c(\alpha| \rho_{\theta_1}\|\rho_{\theta_0}) \\
=&\theta_0 \alpha_1+(1-\theta_0) \beta^c(\alpha_1| \rho_{(\eps,\delta),\theta_1}\|\rho_{(\eps,\delta),\theta_0})
-
(\theta_0 \alpha_0+(1-\theta_0) \beta^c(\alpha_0| \rho\|\sigma)) \\
=&
(\theta_1-\theta_0) (\beta^c(\alpha_1| \rho_{(\eps,\delta),\theta_1}\|\rho_{(\eps,\delta),\theta_0})
-\alpha_1)
-
(\theta_1-\theta_0) (\beta^c(\alpha_0| \rho\|\sigma)
-\alpha_0) \\
=&
(\theta_1-\theta_0) (
\beta^c(\alpha_1| \rho_{(\eps,\delta),\theta_1}\|\rho_{(\eps,\delta),\theta_0})
-\beta^c(\alpha_0| \rho\|\sigma)
+\alpha_0-\alpha_1) \ge 0,
\end{align*}
which implies \eqref{VB2}. This completes the proof of Theorem \ref{THJ}.
\end{proof}

The combination of Theorems \ref{lemma_t_hyp_F_Div} and \ref{THJ} derives the following corollary.

\begin{corollary}\label{Cor12}
        For any $\alpha \in [0,1) \cup (1,+\infty)$, we have
\begin{align}
\max_{(\rho,\sigma) \in \mathfrak{D}_{(\eps,\delta)}} D_{\alpha}( \rho_{\theta_1}\|\rho_{\theta_0}) 
&= D_{\alpha}( \rho_{(\eps,\delta),\theta_1}\|\rho_{(\eps,\delta),\theta_0}),\\
\max_{(\rho,\sigma) \in \mathfrak{D}_{(\eps,\delta)}} D( \rho_{\theta_1}\|\rho_{\theta_0}) 
&= D( \rho_{(\eps,\delta),\theta_1}\|\rho_{(\eps,\delta),\theta_0}),\label{CHJ}\\
\max_{(\rho,\sigma) \in \mathfrak{D}_{(\eps,\delta)}} \check{D}_{\alpha}( \rho_{\theta_1}\|\rho_{\theta_0}) 
&= \check{D}_{\alpha}( \rho_{(\eps,\delta),\theta_1}\|\rho_{(\eps,\delta),\theta_0}),\label{HJ1}
\end{align}
This maximum is uniformly attained by the pair $(\rho_{(\eps,\delta)}, \sigma_{(\eps,\delta)})$.
\end{corollary}

\subsection{Hypothesis testing with $n$ respondents}
Next, we discuss the hypothesis testing 
when the investigator receives $n$ responses.
This scenario focuses on the hypothesis testing divergence $D_H^\alpha( \rho_{\theta_1}^{\otimes n}\|\rho_{\theta_0}^{\otimes n})$.
We study the following quantity
\begin{align}
B_\epsilon(\rho_{\theta_1} \|\rho_{\theta_0})
&:=\lim_{n\to \infty}\frac{1}{n}D_H^\epsilon( \rho_{\theta_1}^{\otimes n}\|\rho_{\theta_0}^{\otimes n}),
\end{align}
for $\epsilon\in (0,1)$.
Since it is known that the above values are given 
by $D(\rho_{\theta_1}\|\rho_{\theta_0})$ as Stein's lemma \cite{Hiai1991,887855}, 
\cite[Chapter 3]{H2017QIT}, thus, \eqref{CHJ} of Corollary \ref{Cor12} yields the following relation.
\begin{align}
\max_{(\rho,\sigma) \in \mathfrak{D}_{(\eps,\delta)}} B_\epsilon( \rho_{\theta_1}\|\rho_{\theta_0}) 
&= D( \rho_{(\eps,\delta),\theta_1}\|\rho_{(\eps,\delta),\theta_0}).
\end{align}
The above relations show that
the asymptotic behavior of the hypothesis testing problem 
is optimized
when respondents encode their information by using 
$\rho_{(\eps,\delta)}$ and $\sigma_{(\eps,\delta)}$.

\subsection{Parameter estimation}
Next, we discuss the estimation of the parameter 
$\theta$ when the investigator has $n$ data.
The investigator's goal is to estimate the parameter $\theta$. 
The quality of this estimation is typically measured by the mean square error (MSE). 
For a single-parameter model, the MSE is lower-bounded by the Cramér-Rao bound, which is the inverse of the Fisher information. In the quantum setting, we use the Symmetric Logarithmic Derivative (SLD) Fisher information. The SLD, denoted $L_\theta$, is defined by the equation:
\begin{align}
\frac{1}{2} (L_\theta \rho_\theta + \rho_\theta L_\theta) = \frac{d \rho_\theta}{d\theta} = \rho-\sigma.
\end{align}
The SLD Fisher information, $J_\theta$, is then given by:
\begin{align}
J_\theta := \tr(L_\theta^2 \rho_\theta).
\end{align}
The MSE of any unbiased estimator for $\theta$ is lower-bounded by $1/J_\theta$. This bound is asymptotically achievable, for instance, via a two-step estimation process.

In the privatized scenario, the investigator is restricted to the pair of quantum states $(\rho,\sigma) \in \mathfrak{D}_{(\eps,\delta)}$.
Consequently, 
to optimize the estimation, it is natural to maximize the SLD Fisher information $J_\theta$ subject to the $(\eps,\delta)$-DP privacy constraint on the states $\rho$ and $\sigma$.
This leads to a constrained optimization problem,
\[
\max_{(\rho,\sigma) \in \mathfrak{D}_{(\eps,\delta)}} J_\theta,
\]
where the maximum is taken over all pairs of quantum states satisfying the privacy constraint. The optimal value quantifies the fundamental tradeoff between privacy and estimation accuracy in quantum settings.

Furthermore, the structure of the optimal $(\eps,\delta)$-DP pair, as characterized in Lemma~\ref{lemma_diff_priv_quant}, allows for explicit computation of the Fisher information and the corresponding Cramér-Rao bound. This provides a precise benchmark for the best possible estimation performance under quantum differential privacy constraints, and generalizes classical results to the quantum regime.

\begin{theorem}\label{theo_fisher_info_bound}
The maximum SLD Fisher information achievable under the $(\eps,\delta)$-DP constraint is given by:
\begin{align}
\max_{(\rho,\sigma) \in \mathfrak{D}_{(\eps,\delta)}} J_\theta = \frac{\delta}{\theta(1-\theta)} + \frac{(1-\delta)(1-e^\eps)^2}{e^\eps + (1-e^\eps)^2 \theta(1-\theta)}.\label{XM3}
\end{align}
This maximum is uniformly attained by the pair $(\rho_{(\eps,\delta)}, \sigma_{(\eps,\delta)})$.
\end{theorem}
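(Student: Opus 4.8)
The plan is to establish the upper bound in~\eqref{XM3} \emph{without} producing a CPTP map from the weakest pair $(\rho_{(\eps,\delta)},\sigma_{(\eps,\delta)})$ (which, as emphasized above, need not exist once $\delta>0$), but instead by taking a local, second‑order version of the divergence inequality already in hand from Corollary~\ref{Cor12}. Concretely, I would (i) identify the SLD Fisher information $J_\theta$ of the affine family $\rho_\theta=\theta\rho+(1-\theta)\sigma$ with a universal constant times the curvature at $s=0$ of $s\mapsto\check{D}_{1/2}(\rho_{\theta+s}\|\rho_\theta)$, (ii) pass to the limit $s\to0^{+}$ in Corollary~\ref{Cor12}, and (iii) evaluate the right‑hand side by a direct classical Fisher‑information computation, using that $\rho_{(\eps,\delta)}$ and $\sigma_{(\eps,\delta)}$ commute.

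For step (i), recall from Fact~\ref{fact_S_div_integral_rep} that $\check{D}_{1/2}(\rho\|\sigma)=-\tfrac12\log|\sqrt{\rho}\sqrt{\sigma}|=-\tfrac14\log F(\rho,\sigma)$, with $F$ the fidelity. It is a standard fact (the Bures metric equals $\tfrac14$ of the SLD Fisher information, equivalently $1-\sqrt{F(\rho_\theta,\rho_{\theta+s})}=\tfrac18 J_\theta s^{2}+o(s^{2})$) that
\[
J_\theta \;=\; 16\,\lim_{s\to0^{+}}\frac{1}{s^{2}}\,\check{D}_{1/2}(\rho_{\theta+s}\|\rho_{\theta}).
\]
Since the family is affine, $\rho_{\theta+s}=\rho_\theta+s(\rho-\sigma)$, and for $\theta\in(0,1)$ the support of $\rho_\theta$ equals $\supp(\rho)+\supp(\sigma)$ and contains $\supp(\rho-\sigma)$, so the model is regular and this limit exists. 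For step (ii), apply Corollary~\ref{Cor12} at $\alpha=\tfrac12$ with $\theta_1=\theta+s>\theta_0=\theta$ to get $\check{D}_{1/2}(\rho_{\theta+s}\|\rho_\theta)\le\check{D}_{1/2}(\rho_{(\eps,\delta),\theta+s}\|\rho_{(\eps,\delta),\theta})$ for every small $s>0$; dividing by $s^{2}$, multiplying by $16$, and letting $s\to0^{+}$ gives $J_\theta\le J_\theta(\rho_{(\eps,\delta)},\sigma_{(\eps,\delta)})$, while equality at $(\rho_{(\eps,\delta)},\sigma_{(\eps,\delta)})$ holds trivially, so the maximum is attained there uniformly in $\theta$.

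For step (iii), observe that $\rho_{(\eps,\delta)}$ and $\sigma_{(\eps,\delta)}$ of~\eqref{eq_rho_simp}--\eqref{eq_sigma_simp} are simultaneously diagonal in $\{\ket{00},\ket{01},\ket{10},\ket{11}\}$, so the SLD of the commuting family $\rho_{(\eps,\delta),\theta}$ is diagonal and $J_\theta(\rho_{(\eps,\delta)},\sigma_{(\eps,\delta)})=\sum_x \dot p_\theta(x)^2/p_\theta(x)$, the classical Fisher information of the mixture of the two eigenvalue vectors. The $\ket{00}$ and $\ket{11}$ coordinates (probabilities $\theta\delta$ and $(1-\theta)\delta$, derivatives $\pm\delta$) contribute $\tfrac{\delta}{\theta}+\tfrac{\delta}{1-\theta}=\tfrac{\delta}{\theta(1-\theta)}$; the $\ket{01}$ and $\ket{10}$ coordinates contribute $\tfrac{(1-\delta)(e^\eps-1)^2}{1+e^\eps}\big[\tfrac{1}{(e^\eps-1)\theta+1}+\tfrac{1}{e^\eps-(e^\eps-1)\theta}\big]$, which simplifies via $(1+a)(e^\eps-a)=e^\eps+(e^\eps-1)^2\theta(1-\theta)$ with $a:=(e^\eps-1)\theta$ to $\tfrac{(1-\delta)(1-e^\eps)^2}{e^\eps+(1-e^\eps)^2\theta(1-\theta)}$. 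Summing the two contributions yields~\eqref{XM3} exactly.

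The main obstacle is step (i): one has to justify carefully that the second‑order expansion of $\check{D}_{1/2}$ along the family reproduces $J_\theta$ with the precise universal constant, and that this limit may be combined with the one‑parameter inequality of Corollary~\ref{Cor12} (this is why the affine structure, $\theta\in(0,1)$, and the one‑sided limit $s\to0^{+}$ matter). Everything else is routine. An alternative route for step (i), bypassing the metric expansion, is to invoke Braunstein--Caves: for a one‑parameter family the SLD Fisher information is attained by some POVM $M^\star$ on $\rho_\theta$, whence $(\tr(M^\star_i\rho),\tr(M^\star_i\sigma))_i$ is a classical $(\eps,\delta)$‑DP pair of distributions, and classical Blackwell dominance by the weakest pair (Proposition~\ref{prop_blackwell}), together with monotonicity of the classical Fisher information under stochastic maps, again gives $J_\theta\le J_\theta(\rho_{(\eps,\delta)},\sigma_{(\eps,\delta)})$.
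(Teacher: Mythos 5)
Your proposal is correct and follows essentially the same route as the paper's proof: express $J_\theta$ as a second-order limit of $\check{D}_{1/2}$ along the affine family (via the fidelity/Bures relation), transfer the inequality through Corollary~\ref{Cor12} at $\alpha=\tfrac12$, and then evaluate the commuting weakest pair by the classical Fisher-information formula, which yields exactly \eqref{XM3}. The only (immaterial) discrepancy is the universal constant in the limit relation — your $16$ versus the paper's stated $4$ — which cancels on both sides of the inequality and does not affect the argument.
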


\begin{proof}
Let $\rho_{(\eps,\delta),\theta} := \theta \rho_{(\eps,\delta)} + (1-\theta)\sigma_{(\eps,\delta)}$. 
First, we denote the SLD Fisher information for 
this specific states $\rho_{(\eps,\delta),\theta}$ by $J_{(\eps,\delta),\theta}$, and 
we will show the relation
\begin{align}
\max_{(\rho,\sigma) \in \mathfrak{D}_{(\eps,\delta)}} J_\theta = 
J_{(\eps,\delta),\theta}.\label{XM1}
\end{align}
It is known that \cite[Theorem 6.3]{H2017QIT}
\begin{align}
J_\theta = \lim_{\epsilon\to 0} 
\frac{8(1- \tr [| \sqrt{\rho_\theta} \sqrt{\rho_{\theta+\epsilon}}|])}{\epsilon^2}.
\end{align}
Since $\check{D}_{\frac{1}{2}}(\rho\|\sigma)=-\frac{1}{2}
\log \tr [|\sqrt{\rho}\sqrt{\sigma}| ]$, the above relation is rewritten as
\begin{align}
J_\theta = \lim_{\epsilon\to 0} 
\frac{4 \check{D}_{\frac{1}{2}}( \rho_\theta\| \rho_{\theta+\epsilon})}{\epsilon^2}. \label{BNA}
\end{align}
Therefore, the relation \eqref{HJ1} with the limit like \eqref{BNA}
implies \eqref{XM1}.

Next, we compute the SLD Fisher information $J_{(\eps,\delta),\theta}$.
The state $\rho_{(\eps,\delta),\theta}$ is diagonal in the standard basis $\{\ket{00}, \ket{01}, \ket{10}, \ket{11}\}$. Its eigenvalues are:
\begin{align}
p_{00} &= \theta\delta, \nn\\
p_{01} &= \theta \frac{(1-\delta)e^\eps}{1+e^\eps} + (1-\theta)\frac{1-\delta}{1+e^\eps} = \frac{1-\delta}{1+e^\eps}(\theta e^\eps + 1-\theta), \nn\\
p_{10} &= \theta \frac{1-\delta}{1+e^\eps} + (1-\theta)\frac{(1-\delta)e^\eps}{1+e^\eps} = \frac{1-\delta}{1+e^\eps}(\theta + (1-\theta)e^\eps), \nn\\
p_{11} &= (1-\theta)\delta.\nn
\end{align}

The derivative $\frac{d\rho_{(\eps,\delta),\theta}}{d\theta} = \rho_{(\eps,\delta)} - \sigma_{(\eps,\delta)}$ is also diagonal, with eigenvalues:
\begin{align}
d_{00} &= \delta, \nn\\
d_{01} &= \frac{(1-\delta)(e^\eps-1)}{1+e^\eps}, \nn\\
d_{10} &= -\frac{(1-\delta)(e^\eps-1)}{1+e^\eps}, \nn\\
d_{11} &= -\delta.\nn
\end{align}

For diagonal states, the SLD Fisher information is given by the sum of the classical Fisher informations for the eigenvalues: $J_\theta = \sum_i \frac{(\frac{d p_i}{d\theta})^2}{p_i}$. In this case, $\frac{d p_i}{d\theta} = d_i$. Thus, we have,
\begin{align}
J_{(\eps,\delta),\theta} &= \frac{d_{00}^2}{p_{00}} + \frac{d_{01}^2}{p_{01}} + \frac{d_{10}^2}{p_{10}} + \frac{d_{11}^2}{p_{11}} \nn\\
&= \frac{\delta^2}{\theta\delta} + \frac{\left(\frac{(1-\delta)(e^\eps-1)}{1+e^\eps}\right)^2}{\frac{1-\delta}{1+e^\eps}(\theta e^\eps + 1-\theta)} + \frac{\left(-\frac{(1-\delta)(e^\eps-1)}{1+e^\eps}\right)^2}{\frac{1-\delta}{1+e^\eps}(\theta + (1-\theta)e^\eps)} + \frac{(-\delta)^2}{(1-\theta)\delta} \nn\\
&= \frac{\delta}{\theta} + \frac{\delta}{1-\theta} + \frac{(1-\delta)(e^\eps-1)^2}{1+e^\eps} \left( \frac{1}{\theta e^\eps + 1-\theta} + \frac{1}{\theta + (1-\theta)e^\eps} \right) \\
&= \frac{\delta}{\theta(1-\theta)} + \frac{(1-\delta)(e^\eps-1)^2}{1+e^\eps} \left( \frac{(\theta + (1-\theta)e^\eps) + (\theta e^\eps + 1-\theta)}{(\theta e^\eps + 1-\theta)(\theta + (1-\theta)e^\eps)} \right) \nn\\
&= \frac{\delta}{\theta(1-\theta)} + \frac{(1-\delta)(e^\eps-1)^2}{1+e^\eps} \left( \frac{1+e^\eps}{\theta^2 e^\eps + \theta(1-\theta) + \theta(1-\theta)(e^\eps)^2 + (1-\theta)^2 e^\eps} \right) \nn\\
&= \frac{\delta}{\theta(1-\theta)} + (1-\delta)(e^\eps-1)^2 \left( \frac{1}{\theta^2 e^\eps + \theta(1-\theta)(1+(e^\eps)^2) + (1-\theta)^2 e^\eps} \right) \nn\\
&= \frac{\delta}{\theta(1-\theta)} + (1-\delta)(e^\eps-1)^2 \left( \frac{1}{e^\eps(\theta^2+(1-\theta)^2) + \theta(1-\theta)(1+(e^\eps)^2)} \right) \nn\\
&= \frac{\delta}{\theta(1-\theta)} + \frac{(1-\delta)(1-e^\eps)^2}{e^\eps + (1-e^\eps)^2 \theta(1-\theta)}.
\label{XM2}
\end{align}
The combination of 
\eqref{XM1} and \eqref{XM2} yields the desired relation 
\eqref{XM3}.
 \end{proof}

\section{Upper-Bounds on Relative Entropy between the outputs of $(\eps,\delta)$-LDP classical channels via the integral representation}\label{sec_LDP_channel}
\subsection{Formulation of 
$(\eps,\delta)$-LDP classical channels}
In this section, we show some other applications of the dominance of quantum informativeness as discussed in Section \ref{sec_diff_priv_quant}. A primary question which is often studied in the context of $(\eps,\delta)$-DP mechanisms is to determine upper-bounds on the divergence with respect to the output induced by the mechanisms. To do this, we can easily invoke Theorem \ref{theo_QLDP_approx_F_div_bound} and easily get an upper-bound on the divergence between the states induced at the output of the mechanism in terms of $D_f(\rho_{(\eps,\delta)}\|\sigma_{(\eps,\delta)})$, where $D_f$ is some $f$-divergence as defined in Definition \ref{def_f_div_integral_rep}. However, note that $\supp(\rho_{(\eps,\delta)}) \nsubseteq \supp(\sigma_{(\eps,\delta)})$ and therefore there will be many $f$-divergences for which $D_f(\rho_{(\eps,\delta)}\| \sigma_{(\eps,\delta)})$ is not well defined even for the case when the divergence $D_f(\cdot\|\cdot)$ between the output distributions induced by the $(\eps,\delta)$-DP mechanism is well defined. Hence, naively using Theorem \ref{theo_QLDP_approx_F_div_bound} will not lead to any meaningful upper-bound on $D_f(\cdot\|\cdot)$ between the output distributions induced by an $(\eps,\delta)$-DP mechanism. 

In this section, we will obtain a meaningful bound on the relative entropy $D(\cdot\|\cdot)$ between the output distributions induced by an $(\eps,\delta)$-LDP channel (defined below) for the case when at least the support of one of the output distributions is a subset of the support of the other distribution. We will accomplish this with the help of the integral representation of the relative entropy mentioned in Fact \ref{fact_classical_KL_div_integral_rep}. This integral representation uses the contraction coefficient of $(\eps,\delta)$-LDP channels for the hockey stick divergence. 

Towards this, we first formulate the channel-based setup of $(\eps,\delta)$-LDP in both classical and quantum settings. 
In the classical case, a channel from the system $\cX$ to the system $\cY$
is given as a transition matrix $P_{Y|X}$.
Assume that a respondent generates private data in $\cX$ and converts it to 
a distribution on $\cY$ via the channel $P_{Y|X}$
and an investigator can access only the system $\cY$.
In this case,
a $(\eps,\delta)$-LDP channel is formulated 
as a generalization of $(\eps,\delta)$-LDP in the following way.

\begin{definition}\label{LDP_channel_class2}
    A classical channel $P_{Y|X}$
    is defined to be $(\eps,\delta)$-LDP (locally differentially private) for some fixed $\eps \geq 0$ and $\delta \in [0,1]$,
    if any pair of $x\neq x' \in \cX$ satisfies 
        \begin{equation}
        P_{Y|X}(S|x) \leq e^\eps P_{Y|X}(S|x')  + \delta, \label{LDP_channel_class_eq2}
    \end{equation}
for every subset $S \subseteq \cY$.
Further, for $\delta = 0$, we denote $\cK$ to be a pure $\eps$-LDP or just $\eps$-LDP channel.
\end{definition}

Now, we write the channel by using the map $\cK$ from 
a distribution on $\cX$ to a distribution on $\cY$ 
as $\cK(P)(y):= \sum_{x \in \cX} P_{Y|X}(y|x)P(x)$.
This definition is rewritten as follows.

\begin{lemma}\label{LDP_channel_class}
    A classical channel $\cK $ is $(\eps,\delta)$-LDP,
if and only if 
    any pair of distributions $P,Q\in \cP(\cX)$ satisfies 
    the relation
    \begin{equation}
        \cK(P)(S) \leq e^\eps \cK(Q)(S) + \delta, \label{LDP_channel_class_eq}
    \end{equation}
for every subset $S \subseteq \cY$.
\end{lemma}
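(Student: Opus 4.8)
The plan is to prove the two directions separately; both are elementary, and no machinery from the earlier sections is actually needed for this particular lemma.

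First I would handle the easy direction: if \eqref{LDP_channel_class_eq} holds for all $P,Q \in \cP(\cX)$, then $\cK$ is $(\eps,\delta)$-LDP in the sense of Definition \ref{LDP_channel_class2}. This is just a specialization. Fix distinct $x \neq x' \in \cX$ and take $P$ to be the distribution concentrated at $x$ and $Q$ the distribution concentrated at $x'$, so that $\cK(P)(S) = P_{Y|X}(S\mid x)$ and $\cK(Q)(S) = P_{Y|X}(S\mid x')$ for every $S \subseteq \cY$. Then \eqref{LDP_channel_class_eq} reduces exactly to \eqref{LDP_channel_class_eq2}.

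For the converse, suppose $\cK$ is $(\eps,\delta)$-LDP. The key preliminary observation is that, because $\eps \geq 0$ and $\delta \geq 0$, the inequality $P_{Y|X}(S\mid x) \leq e^\eps P_{Y|X}(S\mid x') + \delta$ holds not only for $x \neq x'$ but also for $x = x'$, since then $P_{Y|X}(S\mid x) \leq e^\eps P_{Y|X}(S\mid x)$ trivially. Hence it holds for every ordered pair $(x,x') \in \cX \times \cX$ and every $S \subseteq \cY$. Now fix $x \in \cX$ and a subset $S \subseteq \cY$; multiplying this inequality by $Q(x') \geq 0$, summing over $x' \in \cX$, and using $\sum_{x'} Q(x') = 1$ together with $\cK(Q)(S) = \sum_{x'} Q(x') P_{Y|X}(S\mid x')$, we obtain $P_{Y|X}(S\mid x) \leq e^\eps \cK(Q)(S) + \delta$. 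Finally, multiplying this by $P(x) \geq 0$, summing over $x \in \cX$, and using $\sum_x P(x) = 1$ and $\cK(P)(S) = \sum_x P(x) P_{Y|X}(S\mid x)$ yields \eqref{LDP_channel_class_eq}.

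There is no genuine obstacle here: the entire content is the observation that for $\eps \geq 0$ the symbol-level guarantee extends to the diagonal $x = x'$, after which the distribution-level bound follows from two successive convex averages. All sums are finite, so there are no convergence or interchange-of-summation issues, and the argument transfers verbatim to the quantum channel setting used later in this section.
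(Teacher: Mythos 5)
Your proof is correct and follows essentially the same route as the paper: delta distributions for the easy direction, and two successive convex averages (over $x'$ with weights $Q$, then over $x$ with weights $P$) for the converse, with your explicit remark about the trivial diagonal case $x=x'$ being a small rigor improvement over the paper's "for any $x'$" phrasing.
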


\begin{proof}
When the condition \eqref{LDP_channel_class_eq} holds, 
the classical channel $\cK $ is $(\eps,\delta)$-LDP
by considering the case when $P$ and $Q$ are delta distributions.

Assume that 
the classical channel $\cK $ is $(\eps,\delta)$-LDP.
Given $x \in \cX$, \eqref{LDP_channel_class_eq2} implies 
$
P_{Y|X}(S|x) \leq e^\eps P_{Y|X}(S|x')  + \delta
$ for any $x'\in \cX$.
Thus, 
$P_{Y|X}(S|x) =
\sum_{x'\in \cX}Q(x') P_{Y|X}(S|x) \leq 
\sum_{x'\in \cX}Q(x')(e^\eps P_{Y|X}(S|x')  + \delta)
= e^\eps \cK(Q)(S)  + \delta$.
Then, we have
$\cK(P)(S)
=\sum_{x\in \cX}P(x) P_{Y|X}(S|x) 
\leq 
= \sum_{x\in \cX}P(x)
(e^\eps \cK(Q)(S)  + \delta)
=e^\eps \cK(Q)(S)  + \delta$, which shows \eqref{LDP_channel_class_eq}.
\end{proof}

We note here that the above definition of $(\eps,\delta)$-LDP channel is relaxed version of the standard definition of $(\eps,\delta)$-LDP channel \cite{Asoodeh24}, where the privacy condition is required to hold for every pair of input symbols $x,x' \in \cX$. 
However, in the upcoming discussion, we will see that the above definition is sufficient to obtain meaningful upper-bounds on the relative entropy (when it is well defined) between the output distributions induced by $(\eps,\delta)$-LDP channels.
Now, as a quantum generalization of \eqref{LDP_channel_class_eq}, 
we have the following definition.

\begin{definition}[{\cite{AK25}}]\label{QLDP_def}
    A CP-TP map $\cN : \cH_A \to \cH_B$ is defined to be quantum $(\eps,\delta)$-LDP (locally differentially private) for some fixed $\eps \geq 0$ and $\delta \in [0,1]$,
    if for all pairs $\rho,\sigma \in \cD(\cH_A)$ and every POVM measurement $0 \preceq \Lambda \preceq \bbI$, the following holds,
    \begin{equation}
            \tr[\Lambda \cN(\rho)] \leq e^{\eps} \tr[\Lambda \cN(\sigma)]+ \delta.\label{QLDP_def_eq}
    \end{equation}  
    Further, for $\delta = 0$, we denote $\cN$ to be a pure quantum $\eps$-LDP or just quantum $\eps$-LDP CP-TP map.
\end{definition}

\begin{remark}\label{remark_relax_QLDP}
Observe that Definition \ref{QLDP_def} appears to be very strict in the sense that for a CP-TP map to be $(\eps,\delta)$-DP, it should behave differentially private for every pair of quantum states. In \cite{Nuradha25}, the authors give an example of one such map in terms of a measurement channel composed by a depolarizing channel. 
\end{remark}

In the subsections below, we define the contraction coefficient of classical and quantum $(\eps,\delta)$-LDP channels with respect to general divergences and obtain almost matching upper and lower bounds on the contraction coefficient of the hockey stick divergence for  $(\eps,\delta)$-LDP mechanism.

\subsection{Contraction Coefficient for Divergences under Private Classical/Quantum Learning Algorithms}

For any classical divergence $\mathbb{D}$, from Definition \ref{def_data_proc_div_class}, we know that for any pair of probability distributions $P_1,P_2 \in \cP(\cX)$ (where $\cX$ is some finite set) and a classical channel $\cK : \cX \to \cY$ (where $\cY$ is another finite set), $ \bbD(\cK(P_1)\| \cK(P_2)) \leq \bbD(P_1\|P_2)$, where for each $i=1,2$, $\cK(P_i)$ is the marginal output distribution of $\cK$ corresponding to $P_i$.
However, this inequality is not always strict. 
Thus, we are interested in how much the channel $\cK$ shrinks the KL divergence for the privacy of the respondent. 
To clarify this, we study the largest constant 
$\eta^{(c), \cK}_{\bbD}$ satisfying the condition
$\bbD(\cK(P_1)\| \cK(P_2)) \leq \eta^{(c), \cK}_{\bbD} \bbD(P_1\|P_2)$ for all pairs of distributions  $P_1,P_2 \in \cP(\cX)$. 
The quantity $\eta^{(c), \cK}_{\bbD}$ is called the contraction coefficient of the channel $\cK$ with respect to the divergence $\bbD$.

In particular, if we consider the set of all classical $(\eps,\delta)$-LDP channels (see Definition \ref{LDP_channel_class}) for some $\eps\geq 0$ and $ \delta \in [0,1)$, denoted by $\mathfrak{Q}_{\eps,\delta}$, then it is interesting to study the worst contraction coefficient of any channel $\cK$ with respect to the classical hockey stick divergence $E_{\gamma}(\cdot\|\cdot)$ (see Definition \ref{def_class_hockey_stick}). This quantity is defined as follows,
\begin{equation}
    \eta^{(c),\eps,\delta}_{E_{\gamma}} := 
    \sup_{\cK \in \mathfrak{Q}_{\eps,\delta}} 
    \eta^{(c),\cK}_{E_{\gamma}} \triangleq  \sup_{\substack{\cK \in \mathfrak{Q}_{\eps,\delta}\\ P_1,P_2 \in \cP(\cX):\\ E_{\gamma}(P_1\|P_2) \neq 0}}
    \frac{E_{\gamma}(\cK(P_1)\| \cK(P_2))}{E_{\gamma}(P_1\|P_2)},\label{contrac_class_hockey_stick_def_eq}
\end{equation}
where $\gamma \geq 1$ is some fixed constant.

Similarly, in the quantum setting, for any divergence $\varmathbb{D}$, we want to study the worst constant $\eta^{\cN}_{F}$ such that $\varmathbb{D}(\cN(\rho),\cN(\sigma))$ $ \leq \eta^{\cN}_{F} \varmathbb{D}(\rho,\sigma)$ for all pairs  $\rho,\sigma \in \cD(\cH_A)$ such that $\varmathbb{D}(\rho,\sigma) \neq 0$. The quantity $\eta^{\cN}_{F}$ is called the contraction coefficient of the CP-TP map $\cN$ with respect to the divergence $\varmathbb{D}$.
In particular, for any CP-TP map 
$\cN : \cD(\cH_A) \to \cD(\cH_B)$, it is interesting to study the contraction coefficient of $\cN$ with respect to the quantum hockey stick divergence $E_{\gamma}(\cdot\|\cdot)$ (see Definition \ref{f_sym_hyp_func}) for some fixed $\gamma \geq 1$. This quantity is defined as follows,
\begin{equation}
    \eta^{\cN}_{E_{\gamma}} := \sup_{\substack{\rho,\sigma \in \cD(\cH_A):\\ E_{\gamma} (\rho\|\sigma) \neq 0}}\frac{E_{\gamma}(\cN(\rho)\|\cN(\sigma))}{E_{\gamma}(\rho,\sigma)}.\label{contrac_hockey_stick_def_eq}
\end{equation}

Hirche et al. in \cite{HRF23} studied the following bound on $\eta^{\cN}_{E_{\gamma}}$ for any CP-TP some fixed $\gamma \geq 1$.

\begin{proposition}[{\cite[Lemma \RNum{2}.4]{HRF23}}]\label{prop_contrac_hockey_stick}
    For any $\gamma \geq 1$ and a CP-TP map 
    $\cN : \cD(\cH_A) \to \cD(\cH_B)$, the contraction coefficient $\eta^{\cN}_{E_{\gamma}}$ with respect to the quantum hockey stick divergence $E_{\gamma}(\cdot\|\cdot)$ (see Definition \ref{f_sym_hyp_func}) satisfies the following,
    \begin{equation}
        1 - \gamma \left(1 - \eta^{\cN}_{E_1}\right) \leq \eta^{\cN}_{E_{\gamma}} \leq  \eta^{\cN}_{E_1},\label{prop_contrac_hockey_stick_eq}
    \end{equation}
    where $\eta^{\cN}_{E_1}$ is the contraction coefficient of quantum hockey stick divergence of order 1, i.e., trace distance with respect to $\cN$.
\end{proposition}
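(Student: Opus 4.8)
The plan is to establish the two bounds separately, in each case reducing everything to the trace–distance contraction coefficient $\eta^{\cN}_{E_1}$. Two elementary tools will carry most of the weight: the variational formula $\tr[X_+]=\max_{0\preceq P\preceq \bbI}\tr[PX]$ for the positive part of a Hermitian $X$ (so that $E_\gamma(\rho\|\sigma)=\tr[(\rho-\gamma\sigma)_+]$ by Fact~\ref{Fact_hockey_equiv_form}), together with its immediate consequence that $X\preceq Y$ implies $\tr[X_+]\le\tr[Y_+]$. I will also use a normalization observation: for any two distinct states $\rho,\sigma$, writing $\rho-\sigma=t(\rho_0-\rho_1)$ with $t=E_1(\rho\|\sigma)$ and $\rho_0,\rho_1$ the renormalized positive and negative parts (orthogonal states with $E_1(\rho_0\|\rho_1)=1$), linearity of $\cN$ gives $E_1(\cN(\rho)\|\cN(\sigma))=t\,E_1(\cN(\rho_0)\|\cN(\rho_1))$; hence $\eta^{\cN}_{E_1}=\sup_{\rho_0\perp\rho_1}E_1(\cN(\rho_0)\|\cN(\rho_1))$, the supremum over orthogonal state pairs. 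Finally, for an orthogonal pair one has $E_\gamma(\rho_0\|\rho_1)=\tr[(\rho_0-\gamma\rho_1)_+]=1$ for every $\gamma\ge 1$, since $\rho_0$ and $\gamma\rho_1$ have orthogonal supports.

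For the upper bound $\eta^{\cN}_{E_\gamma}\le\eta^{\cN}_{E_1}$ I would fix $\rho,\sigma$ with $e:=E_\gamma(\rho\|\sigma)>0$ and set $A_\pm:=(\rho-\gamma\sigma)_\pm\succeq 0$; then $\tr A_+=e$ and $\tr A_-=\tr A_+-\tr[\rho-\gamma\sigma]=e-(1-\gamma)=e+\gamma-1\ge e>0$. Thus $\rho':=A_+/e$ and $\sigma':=A_-/(e+\gamma-1)$ are genuine orthogonal states, and by linearity $\cN(\rho)-\gamma\cN(\sigma)=\cN(A_+-A_-)=e\,\cN(\rho')-(e+\gamma-1)\cN(\sigma')$. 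Since $e+\gamma-1\ge e$ and $\cN(\sigma')\succeq 0$, we get $\cN(\rho)-\gamma\cN(\sigma)\preceq e\big(\cN(\rho')-\cN(\sigma')\big)$, so by monotonicity of the positive part
\[
E_\gamma(\cN(\rho)\|\cN(\sigma))\le \tr\!\big[\big(e(\cN(\rho')-\cN(\sigma'))\big)_+\big]=e\,E_1(\cN(\rho')\|\cN(\sigma'))\le e\,\eta^{\cN}_{E_1}=\eta^{\cN}_{E_1}\,E_\gamma(\rho\|\sigma),
\]
the last inequality being the definition of $\eta^{\cN}_{E_1}$ applied to the orthogonal pair $(\rho',\sigma')$. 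Taking the supremum over $(\rho,\sigma)$ gives the claim.

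For the lower bound $\eta^{\cN}_{E_\gamma}\ge 1-\gamma(1-\eta^{\cN}_{E_1})$ I would test $E_\gamma$ on orthogonal state pairs $\rho_0\perp\rho_1$, for which $E_\gamma(\rho_0\|\rho_1)=1$. Let $\Pi$ be the projector onto the positive eigenspace of $\cN(\rho_0)-\cN(\rho_1)$ and $e_1:=E_1(\cN(\rho_0)\|\cN(\rho_1))=\tr[\Pi(\cN(\rho_0)-\cN(\rho_1))]$. Feeding $\Pi$ into the variational formula for $E_\gamma$ gives
\[
E_\gamma(\cN(\rho_0)\|\cN(\rho_1))\ \ge\ \tr[\Pi(\cN(\rho_0)-\gamma\cN(\rho_1))]\ =\ e_1-(\gamma-1)\tr[\Pi\cN(\rho_1)],
\]
and since $\tr[\Pi\cN(\rho_1)]=\tr[\Pi\cN(\rho_0)]-e_1\le 1-e_1$, this yields $E_\gamma(\cN(\rho_0)\|\cN(\rho_1))\ge e_1-(\gamma-1)(1-e_1)=1-\gamma(1-e_1)$. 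Taking the supremum over orthogonal pairs and using $\sup_{\rho_0\perp\rho_1}e_1=\eta^{\cN}_{E_1}$ from the normalization observation produces $\eta^{\cN}_{E_\gamma}\ge 1-\gamma(1-\eta^{\cN}_{E_1})$. Both bounds collapse to equality at $\gamma=1$, which is a useful consistency check.

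The step I expect to need the most care is the reduction in the upper bound: the tempting move is to relate $(\rho-\gamma\sigma)_+$ back to $\rho$ by an operator inequality such as $(\rho-\gamma\sigma)_+\preceq\rho$, but this is false in general (already for $\rho=\ketbra{+}$, $\sigma=\ketbra{0}$). The correct approach is to treat $A_+$ and $A_-$ purely as subnormalized states and push everything through the weak but valid monotonicity $X\preceq Y\Rightarrow\tr[X_+]\le\tr[Y_+]$; the rest is bookkeeping. Minor points to dispatch are that $e+\gamma-1>0$ whenever $e>0$ and $\gamma\ge 1$ (so $\sigma'$ is always well defined) and the degenerate case $\dim\cH_A=1$, where no pair with $E_\gamma\neq 0$ exists and the statement is vacuous.
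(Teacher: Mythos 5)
Your proposal is correct. Note that the paper itself offers no proof of Proposition \ref{prop_contrac_hockey_stick}; it is imported verbatim from \cite[Lemma II.4]{HRF23}, so there is no internal argument to compare against. Your reconstruction is sound and self-contained: the reduction of $\eta^{\cN}_{E_1}$ to a supremum over orthogonal pairs via the Jordan decomposition $\rho-\sigma = t(\rho_0-\rho_1)$ is valid, and the two places where the argument could break are handled correctly. In the upper bound, the operator inequality $\cN(\rho)-\gamma\cN(\sigma) = e\,\cN(\rho')-(e+\gamma-1)\cN(\sigma') \preceq e\bigl(\cN(\rho')-\cN(\sigma')\bigr)$ holds precisely because $(\gamma-1)\cN(\sigma')\succeq 0$, and the monotonicity $X\preceq Y \Rightarrow \tr[X_+]\le\tr[Y_+]$ (an immediate consequence of the variational formula for $\tr[\cdot_+]$) then gives $E_\gamma(\cN(\rho)\|\cN(\sigma))\le e\,E_1(\cN(\rho')\|\cN(\sigma'))\le \eta^{\cN}_{E_1}E_\gamma(\rho\|\sigma)$; you are right that the naive inequality $(\rho-\gamma\sigma)_+\preceq\rho$ is false and must be avoided, and your bookkeeping $\tr A_-=e+\gamma-1\ge e>0$ guarantees $\sigma'$ is well defined. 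In the lower bound, testing $E_\gamma$ with the optimal projector $\Pi$ for the trace-distance pair and using $\tr[\Pi\cN(\rho_1)]\le 1-e_1$ yields $E_\gamma(\cN(\rho_0)\|\cN(\rho_1))\ge 1-\gamma(1-e_1)$ with $E_\gamma(\rho_0\|\rho_1)=1$, and taking suprema is legitimate since $e_1\mapsto 1-\gamma(1-e_1)$ is increasing. This is essentially the same mechanism as in the cited reference (orthogonal decomposition plus data processing for the trace distance), so your proof can stand in as an elementary derivation of the cited lemma.
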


From the context of Private CP-TP maps, the contraction coefficient of quantum hockey stick divergence under $(\eps,\delta)$-LDP quantum mechanisms is defined as follows,

\begin{definition}[\cite{Nuradha25}]\label{def_contr_coeff_F}
    For any $\gamma \geq 1$, the contraction coefficient of  $(\eps,\delta)$-LDP quantum CP-TP maps(see Definition \ref{QLDP_def}) with respect to $E_{\gamma}$ is defined as follows,
    \begin{equation}
        \eta^{\eps,\delta}_{E_{\gamma}} := \sup_{\cN \in \mathfrak{P}_{\eps,\delta}} \eta^{\cN}_{E_{\gamma}} \triangleq  \sup_{\substack{\cN \in \mathfrak{P}_{\eps,\delta}\\ \rho,\sigma \in \cD(\cH_A):\\ E_{\gamma}(\rho\|\sigma) \neq 0}}\frac{E_{\gamma}(\cN(\rho)\|\cN(\sigma))}{E_{\gamma}(\rho,\sigma)},\label{contrac_F_def_eq}
    \end{equation}
    where $\mathfrak{P}_{\eps,\delta}$ is the set of all quantum $(\eps,\delta)$-LDP CP-TP maps (see Definition \ref{QLDP_def}) for some $\eps\geq 0$ and $ \delta \in [0,1)$.
\end{definition}

Further, in \cite{Nuradha25}, the authors studied the contraction coefficient for the trace distance $\eta^{\eps,\delta}_{E_1}$ under quantum $(\eps,\delta)$-LDP CP-TP maps where $\eps \geq 0$ and $\delta \in [0,1]$. 
This is given in the proposition below.

\begin{proposition}[{\cite[Theorem 5]{Nuradha25}}]\label{prop_contrac_trac_dis}
For any $\eps \geq 0$ and $\delta \in [0,1)$, the contraction coefficient $\eta^{\eps,\delta}_{E_1}$ with respect to the trace distance satisfies
\begin{equation}
    \eta^{\eps,\delta}_{E_1} = \frac{(e^\eps - 1 + 2\delta)}{e^\eps + 1}.\label{prop_contrac_trac_dis_eq}
\end{equation}
\end{proposition}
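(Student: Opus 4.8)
The plan is to lift the channel-level quantity $\eta^{\eps,\delta}_{E_1}$ to an optimization over pairs of states in $\mathfrak{D}_{(\eps,\delta)}$ and then invoke the characteristic-region analysis of Sections~\ref{sec_char_region} and~\ref{sec_diff_priv_quant}. The first step is to prove
\begin{equation*}
\eta^{\eps,\delta}_{E_1} \;=\; \sup_{(\mu,\nu)\in\mathfrak{D}_{(\eps,\delta)}} E_1(\mu\|\nu).
\end{equation*}
For the inequality ``$\le$'', fix a quantum $(\eps,\delta)$-LDP map $\cN\in\mathfrak{P}_{\eps,\delta}$ and states $\rho\neq\sigma$; put $t:=E_1(\rho\|\sigma)=\frac{1}{2}\norm{\rho-\sigma}{1}>0$ and $\omega_\pm:=(\rho-\sigma)_{\pm}/t$, which are density operators. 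Linearity of $\cN$ gives $\cN(\rho)-\cN(\sigma)=t\bigl(\cN(\omega_+)-\cN(\omega_-)\bigr)$, so $E_1(\cN(\rho)\|\cN(\sigma))=t\,E_1(\cN(\omega_+)\|\cN(\omega_-))$; moreover, applying the defining inequality \eqref{QLDP_def_eq} to the pairs $(\omega_+,\omega_-)$ and $(\omega_-,\omega_+)$ shows $(\cN(\omega_+),\cN(\omega_-))\in\mathfrak{D}_{(\eps,\delta)}$, whence the contraction ratio $E_1(\cN(\rho)\|\cN(\sigma))/E_1(\rho\|\sigma)=E_1(\cN(\omega_+)\|\cN(\omega_-))$ is at most the right-hand side, and taking the supremum over $\cN$ and $\rho\neq\sigma$ yields the bound.

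For the reverse inequality I would exhibit, for every $(\mu,\nu)\in\mathfrak{D}_{(\eps,\delta)}$, a single $(\eps,\delta)$-LDP channel attaining the ratio $E_1(\mu\|\nu)$: take the qubit measure-and-prepare channel $\cN(\xi):=\bra{0}\xi\ket{0}\,\mu+\bra{1}\xi\ket{1}\,\nu$, which is manifestly CP-TP. For arbitrary inputs $\xi_1,\xi_2$ with $p_i:=\bra{0}\xi_i\ket{0}\in[0,1]$, the quantity $\tr[\Lambda\cN(\xi_1)]-e^{\eps}\tr[\Lambda\cN(\xi_2)]$ is affine in $(p_1,p_2)\in[0,1]^2$, hence maximized at a vertex of the square; each vertex value is either nonpositive (using $\eps\ge 0$) or one of $\tr[\Lambda\mu]-e^{\eps}\tr[\Lambda\nu]$ and $\tr[\Lambda\nu]-e^{\eps}\tr[\Lambda\mu]$, both $\le\delta$ because $(\mu,\nu)\in\mathfrak{D}_{(\eps,\delta)}$ (cf.\ \eqref{def_DP_state_eq}). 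Thus $\cN\in\mathfrak{P}_{\eps,\delta}$, and evaluating at $\rho=\ketbra{0}$, $\sigma=\ketbra{1}$ (so $E_1(\rho\|\sigma)=1$) produces ratio $E_1(\mu\|\nu)$; taking the supremum over $(\mu,\nu)$ completes the reduction.

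The second step identifies the right-hand side with the extremal pair. By \eqref{corr_QLDP_approx_hockey_eq} of Theorem~\ref{theo_QLDP_approx_F_div_bound} specialized to $\gamma=1$, every $(\mu,\nu)\in\mathfrak{D}_{(\eps,\delta)}$ satisfies $E_1(\mu\|\nu)\le E_1(\rho_{(\eps,\delta)}\|\sigma_{(\eps,\delta)})$, and the reverse is immediate since $(\rho_{(\eps,\delta)},\sigma_{(\eps,\delta)})\in\mathfrak{D}_{(\eps,\delta)}$. The last step is the explicit computation: using the diagonal forms \eqref{eq_rho_simp} and \eqref{eq_sigma_simp} together with Fact~\ref{Fact_hockey_equiv_form},
\begin{equation*}
E_1(\rho_{(\eps,\delta)}\|\sigma_{(\eps,\delta)})=\tr\bigl|\rho_{(\eps,\delta)}-\sigma_{(\eps,\delta)}\bigr|_{+}=\delta+(1-\delta)\frac{e^{\eps}-1}{e^{\eps}+1}=\frac{e^{\eps}-1+2\delta}{e^{\eps}+1},
\end{equation*}
which is the claimed value of $\eta^{\eps,\delta}_{E_1}$.

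The main obstacle is the ``$\ge$'' direction of the reduction: Definition~\ref{QLDP_def} requires the privacy inequality to hold for \emph{all} input pairs, so the candidate channel must both separate $\ketbra{0}$ from $\ketbra{1}$ enough to realize the extremal output pair and remain globally $(\eps,\delta)$-LDP. The affinity-in-the-mixing-parameters argument above settles this; equivalently, one may check $E_{e^{\eps}}(\cN(\xi_1)\|\cN(\xi_2))\le\delta$ for all inputs using Fact~\ref{fact_DP_hockey_rel} and convexity of $E_{e^{\eps}}(\cdot\|\cdot)$ in each argument. Everything else is a bookkeeping consequence of linearity of $\cN$ and the closed-form evaluation of $(\rho_{(\eps,\delta)},\sigma_{(\eps,\delta)})$ already recorded in Section~\ref{sec_char_region}.
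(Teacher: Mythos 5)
Your proof is correct, but note that the paper does not actually prove this statement: Proposition~\ref{prop_contrac_trac_dis} is imported verbatim from \cite[Theorem 5]{Nuradha25}, so there is no in-paper proof to match against. Your argument is a legitimate self-contained derivation, and it meshes well with the paper's own machinery. The upper bound via the normalization trick ($\omega_\pm=(\rho-\sigma)_\pm/t$, linearity, and the observation that $(\cN(\omega_+),\cN(\omega_-))\in\mathfrak{D}_{(\eps,\delta)}$ because Definition~\ref{QLDP_def} applies to both ordered pairs) is sound; this is essentially what the paper accomplishes for general $\gamma$ in the proof of Lemma~\ref{lemma_contrac_hockey_stick} by citing \cite[Theorem II.2]{HRF23} to reduce to orthogonal pure inputs, whereas your factorization argument works directly at $\gamma=1$ without that citation (it would not extend to $\gamma>1$, since $E_\gamma$ does not factor out of the difference the same way). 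The reduction $\sup_{(\mu,\nu)\in\mathfrak{D}_{(\eps,\delta)}}E_1(\mu\|\nu)=E_1(\rho_{(\eps,\delta)}\|\sigma_{(\eps,\delta)})$ via \eqref{corr_QLDP_approx_hockey_eq} and the closed-form value $\frac{e^\eps-1+2\delta}{e^\eps+1}$ agree with Claim~\ref{claim_hockey_stick_bound} at $\gamma=1$, which you could have cited instead of recomputing. The genuinely additional ingredient you supply is the achievability step: the measure-and-prepare channel $\cN(\xi)=\bra{0}\xi\ket{0}\mu+\bra{1}\xi\ket{1}\nu$, whose $(\eps,\delta)$-LDP property you verify correctly by affinity in $(p_1,p_2)$ and checking the four vertices; evaluated on $\ketbra{0},\ketbra{1}$ it turns the upper bound into the claimed equality. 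This is the piece the paper leaves to \cite{Nuradha25}, and your version of it is complete and correct.
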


In the following lemma, we now generalize the above result for the contraction coefficient of quantum hockey stick divergence $E_{\gamma}(\cdot\|\cdot)$ for any $\gamma \geq 1$ under $(\eps,\delta)$-LDP CP-TP maps for any $\eps \geq 0$ and $\delta \in [0,1]$.

\begin{lemma}\label{lemma_contrac_hockey_stick}
    For any $\gamma \geq 1, \eps\geq 0$ and $\delta \in [0,1)$, the contraction coefficient $\eta_{E_{\gamma}}^{\eps,\delta}$ with respect to the quantum hockey stick divergence $E_{\gamma} (\cdot\|\cdot)$ (see Definition \ref{f_sym_hyp_func}) satisfies the following,
    \begin{equation}
        \eta^{\eps,\delta}_{E_1} + \frac{(2-\delta)(1-\gamma)}{e^\eps+1} \leq \eta_{E_{\gamma}}^{\eps,\delta} \leq \begin{cases}
            \eta^{\eps,\delta}_{E_1} + \frac{(1-\delta)(1-\gamma)}{e^\eps+1}, & ~\mbox{if}~ \gamma \in [1,e^\eps],\\
            \delta, & ~\mbox{if}~ \gamma > e^\eps.
        \end{cases} \label{lemma_contrac_hockey_stick_eq}
    \end{equation}
\end{lemma}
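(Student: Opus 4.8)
\textbf{Proof plan for Lemma~\ref{lemma_contrac_hockey_stick}.}
The plan is to treat the upper and lower bounds separately, since they come from genuinely different mechanisms. For the upper bound I would exploit the fact that, for any quantum $(\eps,\delta)$-LDP map $\cN$ and any pair $\rho,\sigma$, the output pair $(\cN(\rho),\cN(\sigma))$ lies in $\mathfrak{D}_{(\eps,\delta)}$ by Definition~\ref{QLDP_def}. Hence by Lemma~\ref{lemma_DP_t_hyp_rel} and the hockey-stick/hypothesis-testing translation (Fact~\ref{fact_d_hyp_func} together with the geometry of $\cR(\eps,\delta)$), one has the pointwise bound $E_{\gamma}(\cN(\rho)\|\cN(\sigma))\le \delta$ whenever $\gamma>e^{\eps}$, giving $\eta^{\eps,\delta}_{E_\gamma}\le\delta$ directly (note $E_\gamma(\rho\|\sigma)\le 1$ always, and one checks the supremum is actually attained). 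For $\gamma\in[1,e^\eps]$ the cleaner route is Proposition~\ref{prop_contrac_hockey_stick}: it already gives $\eta^{\cN}_{E_\gamma}\le 1-\gamma(1-\eta^{\cN}_{E_1})$ only as a lower bound, so instead I would adapt its proof idea — write $E_\gamma = E_1 - (\gamma-1)\,(\text{something nonnegative controlled below})$, or more precisely use the integral/linear structure $\tr[\Lambda(\cN\rho-\gamma\cN\sigma)] = \tr[\Lambda(\cN\rho-\cN\sigma)] - (\gamma-1)\tr[\Lambda\cN\sigma]$ and optimize. Taking the worst case over $\mathfrak{P}_{\eps,\delta}$ and inserting $\eta^{\eps,\delta}_{E_1}=\frac{e^\eps-1+2\delta}{e^\eps+1}$ from Proposition~\ref{prop_contrac_trac_dis} should produce the claimed $\eta^{\eps,\delta}_{E_1}+\frac{(1-\delta)(1-\gamma)}{e^\eps+1}$; the normalization $\frac{1}{e^\eps+1}$ is exactly the value of $\tr[\Lambda\cN\sigma]$ at the extremal configuration achieved by the weakest pair $(\rho_{(\eps,\delta)},\sigma_{(\eps,\delta)})$.

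For the lower bound the strategy is explicit construction: exhibit a single $(\eps,\delta)$-LDP channel $\cN$ and an input pair $(\rho,\sigma)$ for which the ratio $E_\gamma(\cN(\rho)\|\cN(\sigma))/E_\gamma(\rho\|\sigma)$ meets the stated quantity. The natural candidate is the channel whose outputs are the weakest pair $(\rho_{(\eps,\delta)},\sigma_{(\eps,\delta)})$ of \eqref{eq_rho_simp}–\eqref{eq_sigma_simp}: take $\rho,\sigma$ to be perfectly distinguishable pure states (so $E_\gamma(\rho\|\sigma)=1$ for all $\gamma\ge 1$), and define $\cN$ to send $\rho\mapsto\rho_{(\eps,\delta)}$, $\sigma\mapsto\sigma_{(\eps,\delta)}$ (this is realizable as a measure-and-prepare channel, which is automatically $(\eps,\delta)$-LDP because its outputs lie in $\mathfrak{D}_{(\eps,\delta)}$ and every pair of inputs is mapped into that $(\eps,\delta)$-DP pair — one must check the LDP condition for \emph{all} input pairs, which follows since the output is a fixed $(\eps,\delta)$-DP pair up to relabeling). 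Then $\eta^{\eps,\delta}_{E_\gamma}\ge E_\gamma(\rho_{(\eps,\delta)}\|\sigma_{(\eps,\delta)})$, and a direct eigenvalue computation of $\tr|\rho_{(\eps,\delta)}-\gamma\sigma_{(\eps,\delta)}|_+$ via Fact~\ref{Fact_hockey_equiv_form} — using the four eigenvalues $\delta,\frac{(1-\delta)e^\eps}{1+e^\eps},\frac{1-\delta}{1+e^\eps},0$ against $0,\frac{(1-\delta)}{1+e^\eps},\frac{(1-\delta)e^\eps}{1+e^\eps},\delta$ — yields $E_\gamma(\rho_{(\eps,\delta)}\|\sigma_{(\eps,\delta)}) = \delta + \frac{(1-\delta)e^\eps}{1+e^\eps} - \gamma\frac{1-\delta}{1+e^\eps}$ for $\gamma\le e^\eps$, which rearranges to $\eta^{\eps,\delta}_{E_1}+\frac{(2-\delta)(1-\gamma)}{e^\eps+1}$ after substituting $\eta^{\eps,\delta}_{E_1}=\frac{e^\eps-1+2\delta}{e^\eps+1}$.

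The main obstacle I anticipate is the upper bound in the regime $\gamma\in[1,e^\eps]$: Proposition~\ref{prop_contrac_hockey_stick} only gives the inequality in the wrong direction for an upper bound, so one genuinely has to redo the optimization over POVMs $\Lambda$ and over channels $\cN\in\mathfrak{P}_{\eps,\delta}$, tracking both $\tr[\Lambda(\cN\rho-\cN\sigma)]$ and $\tr[\Lambda\cN\sigma]$ simultaneously under the $(\eps,\delta)$-DP constraint. The cleanest way is probably to argue that the supremum over $\mathfrak{P}_{\eps,\delta}$ of $E_\gamma(\cN(\rho)\|\cN(\sigma))$ (normalized by $E_\gamma(\rho\|\sigma)$, which one reduces to the case $=1$ by a limiting/extremal argument) equals $\sup_{(\tau_1,\tau_2)\in\mathfrak{D}_{(\eps,\delta)}}E_\gamma(\tau_1\|\tau_2)$, and that the latter supremum is attained at $(\rho_{(\eps,\delta)},\sigma_{(\eps,\delta)})$ by the dominance result \eqref{corr_QLDP_approx_hockey_eq} of Theorem~\ref{theo_QLDP_approx_F_div_bound}. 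That reduction is the crux: once it is in place, both halves of \eqref{lemma_contrac_hockey_stick_eq} reduce to the single eigenvalue computation above, and the small gap between the constants $(2-\delta)$ and $(1-\delta)$ in the two bounds reflects only the slack in reducing a general input pair to a perfectly distinguishable one versus the actual achievable input geometry.
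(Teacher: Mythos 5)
Your upper-bound plan, in the form you settle on in your final paragraph, is exactly the paper's route: reduce the supremum over channels and inputs to the supremum of $E_\gamma(\tau_1\|\tau_2)$ over output pairs $(\tau_1,\tau_2)\in\mathfrak{D}_{(\eps,\delta)}$, invoke the dominance \eqref{corr_QLDP_approx_hockey_eq} of Theorem~\ref{theo_QLDP_approx_F_div_bound} to pass to the weakest pair, and finish with the eigenvalue computation (the paper's Claim~\ref{claim_hockey_stick_bound}). The reduction you call ``the crux'' and leave as a ``limiting/extremal argument'' is precisely what the paper imports from \cite[Theorem II.2]{HRF23}: the contraction coefficient for $E_\gamma$ is attained on orthogonal pure-state inputs, for which the denominator equals $1$; with that citation your plan closes. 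Where you genuinely differ is the lower bound: the paper does not construct anything, but instead combines the generic bound $\eta^{\cN}_{E_\gamma}\ge 1-\gamma(1-\eta^{\cN}_{E_1})$ (Proposition~\ref{prop_contrac_hockey_stick}) with the exact value $\eta^{\eps,\delta}_{E_1}=\frac{e^\eps-1+2\delta}{e^\eps+1}$ (Proposition~\ref{prop_contrac_trac_dis}), which is what produces the weaker constant $(2-\delta)$ and the gap $\frac{\gamma-1}{e^\eps+1}$ noted in the paper's remark. Your measure-and-prepare construction (orthogonal pure inputs mapped to $\rho_{(\eps,\delta)},\sigma_{(\eps,\delta)}$) is valid and in fact stronger: it gives $\eta^{\eps,\delta}_{E_\gamma}\ge E_\gamma(\rho_{(\eps,\delta)}\|\sigma_{(\eps,\delta)})$, which matches the paper's upper bound in both regimes and hence would show the bounds are tight, not merely prove the stated lower bound.

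Three caveats. First, your ``rearranges to'' claim is an arithmetic slip: $E_\gamma(\rho_{(\eps,\delta)}\|\sigma_{(\eps,\delta)})=\eta^{\eps,\delta}_{E_1}+\frac{(1-\delta)(1-\gamma)}{e^\eps+1}$, which exceeds the stated lower bound by exactly $\frac{\gamma-1}{e^\eps+1}\ge 0$; the inequality you need still follows, but the identity you wrote is false. Second, your justification that the constructed channel is $(\eps,\delta)$-LDP (``a fixed DP pair up to relabeling'') is too loose for Definition~\ref{QLDP_def}, which quantifies over \emph{all} input pairs: arbitrary inputs are mapped to convex mixtures $t\,\rho_{(\eps,\delta)}+(1-t)\,\sigma_{(\eps,\delta)}$, and one must check the DP inequality for all such mixtures; since both sides are affine in the mixing weights, it suffices to verify the four extreme pairs, each of which satisfies it, so the claim is true but needs this short argument. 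Third, your opening ``direct'' argument for $\gamma>e^\eps$ — that $E_\gamma$ of the output pair is at most $\delta$ and $E_\gamma(\rho\|\sigma)\le 1$ ``always'' — does not bound the ratio, since the denominator can be arbitrarily small; this case too needs the orthogonal-pure-state reduction (denominator equal to $1$), exactly as in the $\gamma\in[1,e^\eps]$ case, after which Claim~\ref{claim_hockey_stick_bound} gives the value $\delta$.
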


\begin{proof}
\hfill\\
\quad\textbf{(1) Lower-bound:} Consider the following series of inequalities 
\begin{align*}
    \eta_{E_{\gamma}}^{\eps,\delta} &\overset{(a)}\geq 1-\gamma(1-\eta^{\eps,\delta}_{E_1})\\
    &\overset{(b)}=1 -2\gamma\frac{(1-\delta)}{e^\eps+1}\\
    & = \frac{e^\eps+1-2\gamma + 2\gamma\delta}{e^\eps+1}\\
    &\overset{(c)}\geq \frac{e^\eps+\delta -\gamma(1-\delta)}{e^\eps+1}+\frac{(1-\gamma)}{e^\eps+1}\\
    & = \eta^{\eps,\delta}_{E_1}+\frac{(1-\gamma)(2-\delta)}{e^\eps+1},
\end{align*}
where $(a)$ follows from Proposition \ref{prop_contrac_hockey_stick}, $(b)$ follows from Proposition \ref{prop_contrac_trac_dis}, $(c)$ follows because $\gamma \geq 1.$

\textbf{(2) Upper-bound:}
\begin{enumerate}[label=(\roman*)]

\item For the case when  $\gamma \in [1,e^\eps]$. Consider the following series of inequalities,
    {\allowdisplaybreaks\begin{align}
        \eta_{E_{\gamma}}^{\eps,\delta} 
        &\overset{(a)}{=} \sup_{\substack{\cN \in \mathfrak{P}_{\eps,\delta},\\ \ket{\psi}\perp\ket{\phi}}} E_{\gamma}(\cN(\ketbra{\psi})\|\cN(\ketbra{\phi}))\nn\\
        &\overset{(b)}{\leq}  E_{\gamma}(\rho_{(\eps,\delta)}\|\sigma_{(\eps,\delta)})\nn\\
    &\overset{(c)}=\frac{e^\eps -\gamma + \delta(\gamma+1)}{e^\eps+1},\label{proof_lemma_contrac_hockey_stick_eq1}\\
   &= \frac{e^\eps - 1 + 2\delta}{e^\eps + 1} + \frac{(1-\delta)(1-\gamma)}{e^\eps + 1}\nn\\
   &= \eta^{\eps,\delta}_{E_1} + \frac{(1-\delta)(1-\gamma)}{e^\eps+ 1},\label{lemma_contrac_hockey_stick_eq1}
        \end{align}}
where in $(a)$, $\ket{\psi},\ket{\phi} \in \cH_A$ are two orthogonal quantum states and the inequality follows from \cite[Theorem \RNum{2}.2]{HRF23}, 
$(b)$ follows from Theorem \ref{theo_QLDP_approx_F_div_bound} and $(c)$ follows from Claim \ref{claim_hockey_stick_bound} below.

\begin{claim}\label{claim_hockey_stick_bound}
    For any $\gamma \geq 1, \eps\geq 0$ and $\delta \in [0,1)$, the quantum hockey stick divergence $E_{\gamma} (\rho_{(\eps,\delta)}\|\sigma_{(\eps,\delta)})$ satisfies the following,
    \begin{equation}
        E_{\gamma}(\rho_{(\eps,\delta)}\|\sigma_{(\eps,\delta)}) = \begin{cases}
            \frac{e^\eps -\gamma + \delta(\gamma+1)}{e^\eps+1}, & ~\mbox{if}~ \gamma \in [1,e^\eps],\\
            \delta, & ~\mbox{if}~ \gamma > e^\eps.
        \end{cases} \label{claim_hockey_stick_bound_eq}
    \end{equation}
    where $(\rho_{(\eps,\delta)},\sigma_{(\eps,\delta)})$ is the weakest (most informative) $(\eps,\delta)$-DP pair of quantum states as mentioned in \cref{eq_rho_simp,eq_rho'_simp}.
\end{claim}
\begin{proof}
    See Appendix \ref{proof_claim_hockey_stick_bound} for the proof.
\end{proof}

\vspace{5pt}
 
\item For the case when  $\gamma > e^\eps$. The RHS of \eqref{proof_lemma_contrac_hockey_stick_eq1}, will be replaced by $\delta$ using Claim \ref{claim_hockey_stick_bound}. This completes the proof of Lemma \ref{lemma_contrac_hockey_stick}.
\end{enumerate}

\end{proof}
\begin{remark}
  The difference between the upper and lower bound in Lemma \ref{lemma_contrac_hockey_stick} is $\frac{\gamma-1}{e^\eps+1} \leq \tanh(\frac{\eps}{2}).$ For small values of $\eps, \tanh(\frac{\eps}{2}) \leq \mathcal{O}(\eps).$ Thus, making these bounds almost tight in the small $\eps$ regime. Also, it trivially follows that for $\gamma=1$ both the upper and lower bound coincide and are equal to $\eta^{\eps,\delta}_{E_1}$. 

Further, the upper-bound obtained in Lemma \ref{lemma_contrac_hockey_stick} is tighter in comparison to the upper-bound obtained in Proposition \ref{prop_contrac_hockey_stick} by a subtractive factor of $\frac{(1-\delta)(\gamma-1)}{e^\eps+1}.$  
\end{remark}

In a classical scenario, it is trivial to observe that the privatized contraction coefficient $\eta^{(c), \eps,\delta}_{E_\gamma}$ with respect to classical hockey stick divergence has the same upper and lower bounds as $\eta_{E_{\gamma}}^{\eps,\delta}$ mentioned in Lemma \ref{lemma_contrac_hockey_stick} i.e. we have the following,

\begin{corollary}\label{corr_contrac_classical_hockey_stick}
    For any $\eps\geq 0$ and $\delta \in [0,1)$, the contraction coefficient $\eta_{E_{\gamma}}^{(c), \eps,\delta}$ with respect to the classical hockey stick divergence (see Definition \ref{def_class_hockey_stick}) satisfies the following,
    \begin{equation}
       \eta_{E_{\gamma}}^{(c),\eps,\delta} \leq \begin{cases}
            \frac{e^\eps -\gamma + \delta(\gamma+1)}{e^\eps+1}, & ~\mbox{if}~ \gamma \in [1,e^\eps],\\
            \delta, & ~\mbox{if}~ \gamma > e^\eps.
        \end{cases} \label{corr_contrac_classical_hockey_eq}
    \end{equation}
\end{corollary}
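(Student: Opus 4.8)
The plan is to deduce Corollary~\ref{corr_contrac_classical_hockey_stick} from its quantum counterpart Lemma~\ref{lemma_contrac_hockey_stick} by embedding every classical $(\eps,\delta)$-LDP channel into a quantum $(\eps,\delta)$-LDP CP-TP map, while keeping the hockey-stick divergence untouched. Fix $\cK\in\mathfrak{Q}_{\eps,\delta}$ with input alphabet $\cX$ and output alphabet $\cY$, and define the CP-TP map $\widetilde{\cK}:\cD(\bbC^{\abs{\cX}})\to\cD(\bbC^{\abs{\cY}})$ by $\widetilde{\cK}(\rho):=\sum_{y\in\cY}\Big(\sum_{x\in\cX}P_{Y|X}(y|x)\,\langle x|\rho|x\rangle\Big)\ketbra{y}$; that is, $\widetilde{\cK}$ reads out the diagonal of its input, applies $\cK$, and re-embeds the result as a diagonal state. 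By construction $\widetilde{\cK}(\rho)$ is the diagonal embedding of $\cK(P_\rho)$, where $P_\rho$ is the diagonal distribution of $\rho$.

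First I would check that $\widetilde{\cK}\in\mathfrak{P}_{\eps,\delta}$ in the sense of Definition~\ref{QLDP_def}. By Lemma~\ref{LDP_channel_class}, $\cK$ being $(\eps,\delta)$-LDP means $\cK(P)(S)\le e^\eps\cK(Q)(S)+\delta$ for \emph{all} $P,Q\in\cP(\cX)$ and all $S\subseteq\cY$, i.e. $E_{e^\eps}(\cK(P)\|\cK(Q))\le\delta$ for all $P,Q$. Since the quantum hockey-stick divergence of order $\gamma$ between two diagonal states equals the classical hockey-stick divergence of their diagonals — immediate from Fact~\ref{Fact_hockey_equiv_form}, as $\text{diag}(P)-\gamma\,\text{diag}(Q)$ is already diagonal — this gives $E_{e^\eps}(\widetilde{\cK}(\rho)\|\widetilde{\cK}(\sigma))=E_{e^\eps}(\cK(P_\rho)\|\cK(P_\sigma))\le\delta$ for all $\rho,\sigma\in\cD(\bbC^{\abs{\cX}})$, and Fact~\ref{fact_DP_hockey_rel} then yields $\tr[\Lambda\widetilde{\cK}(\rho)]\le e^\eps\tr[\Lambda\widetilde{\cK}(\sigma)]+\delta$ for every $0\preceq\Lambda\preceq\bbI$, i.e. quantum $(\eps,\delta)$-LDP. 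Using the same diagonal identity on both numerator and denominator of the definition \eqref{contrac_class_hockey_stick_def_eq}, every classical pair $(P_1,P_2)$ contributes to $\eta^{(c),\cK}_{E_\gamma}$ the same ratio that the corresponding diagonal pair contributes to $\eta^{\widetilde{\cK}}_{E_\gamma}$; hence $\eta^{(c),\cK}_{E_\gamma}\le\eta^{\widetilde{\cK}}_{E_\gamma}\le\eta^{\eps,\delta}_{E_\gamma}$. Taking the supremum over $\cK\in\mathfrak{Q}_{\eps,\delta}$ gives $\eta^{(c),\eps,\delta}_{E_\gamma}\le\eta^{\eps,\delta}_{E_\gamma}$, and the upper bound of Lemma~\ref{lemma_contrac_hockey_stick} — whose proof identifies $\eta^{\eps,\delta}_{E_\gamma}$ with $E_\gamma(\rho_{(\eps,\delta)}\|\sigma_{(\eps,\delta)})$, evaluated in Claim~\ref{claim_hockey_stick_bound} — produces exactly the piecewise bound in \eqref{corr_contrac_classical_hockey_eq}.

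Two comments on the execution. Since $\rho_{(\eps,\delta)}$ and $\sigma_{(\eps,\delta)}$ are themselves diagonal, Claim~\ref{claim_hockey_stick_bound} is already a purely classical computation, so the right-hand side of \eqref{corr_contrac_classical_hockey_eq} is intrinsically classical; one could alternatively give a self-contained classical proof, mirroring the proof of Lemma~\ref{lemma_contrac_hockey_stick}\,(2): invoke the classical analog of \cite[Theorem \RNum{2}.2]{HRF23} to reduce $\eta^{(c),\cK}_{E_\gamma}$ to a supremum over point-mass inputs $\delta_x\ne\delta_{x'}$, note that each output pair $(\cK(\delta_x),\cK(\delta_{x'}))$ is $(\eps,\delta)$-DP and hence (by the classical versions of Lemmas~\ref{lemma_DP_t_hyp_rel} and \ref{lemma_diff_priv_quant}, i.e. Blackwell domination) is dominated by $(\rho_{(\eps,\delta)},\sigma_{(\eps,\delta)})$, which bounds $E_\gamma(\cK(\delta_x)\|\cK(\delta_{x'}))$ by $E_\gamma(\rho_{(\eps,\delta)}\|\sigma_{(\eps,\delta)})$. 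The only genuinely non-routine point in either route is verifying that the lifted map $\widetilde{\cK}$ satisfies quantum $(\eps,\delta)$-LDP for \emph{all}, including non-diagonal, input states; this is exactly where Lemma~\ref{LDP_channel_class} is needed, since it promotes the symbol-wise privacy condition to arbitrary input distributions, and the output of $\widetilde{\cK}$ depends only on the measured diagonal of the input. The remaining steps are bookkeeping.
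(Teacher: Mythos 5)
Your proposal is correct and follows essentially the same route as the paper: the paper obtains Corollary~\ref{corr_contrac_classical_hockey_stick} by simply observing that the classical contraction coefficient inherits the upper bound of the quantum Lemma~\ref{lemma_contrac_hockey_stick}, which is exactly the reduction you carry out. Your diagonal-embedding argument (with Lemma~\ref{LDP_channel_class} used to verify quantum $(\eps,\delta)$-LDP of the lifted map $\widetilde{\cK}$, and the equality of quantum and classical hockey-stick divergences on diagonal states) just makes explicit what the paper dismisses as ``trivial to observe,'' and your alternative direct classical argument mirroring the lemma's proof is equally valid.
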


In the quantum setting, Nuradha et al in \cite[Proposition 3]{Nuradha25} obtained an upper-bound on the quantum relative entropy under $\eps$-LDP CP-TP maps via an integral representation of quantum relative entropy (Fact \ref{fact_KL_div_integral_rep}) in terms of quantum hockey stick divergence. 
Moreover, the upper-bound obtained in \cite[Proposition 3]{Nuradha25} is tighter than the upper-bound obtained in $(i)$ of Corollary \ref{corr_QLDP_Divs_bound}.

However, for $\delta > 0$, it is not clear if one can obtain a similar upper-bound on quantum relative entropy under $(\eps,\delta)$-LDP CP-TP maps via an integral representation of quantum relative entropy in terms of quantum hockey stick divergence. 
This is because, unlike the $\eps$-LDP channels, for $(\eps,\delta)$-LDP channels (with $\delta > 0$), the contraction coefficient of quantum hockey stick divergence never becomes $0$ for any $\gamma \geq 1$ (see \eqref{lemma_contrac_hockey_stick_eq}).
This makes the integration mentioned in Fact \ref{fact_KL_div_integral_rep} un-integrable.

In the subsection below, we resolve this issue in classical setting, by coming up with a technique which we call truncation. This technique allows us to distill pure DP from non-pure DP.

\subsection{Upper-bound on the relative entropy of $(\eps,\delta)$-LDP classical channels via the integral representation}
In this subsection,  we obtain a tight upper-bound on the relative entropy of the output distributions of any $(\eps,\delta)$-LDP classical channel (Markov kernel) via the integral representation of relative entropy in terms of hockey-stick divergence. To obtain this, we define a two-sided truncation of a $(\eps,\delta)$-DP pair of distributions $(P,Q)$ in the definition below.

\begin{definition}[Truncated Pair of Distributions]\label{projpair}
   Consider a pair of distribution $(P,Q) \in \cP(\cX)$ (where $\cX$ is any arbitrary finite set) satisfy $(\eps,\delta)$-DP. Then, a pair of distributions $(\tilde{P},\tilde{Q})$ is called the truncated pair with respect to $(P,Q)$ if it has the following form,
   \begin{align}
        \tilde{P}(x) := \frac{P'(x)}{\sum_{x' \in \cX} P'(x)},\label{projpair_eq1} \\
        \tilde{Q}(x) := \frac{Q'(x)}{\sum_{x' \in \cX} Q'(x)},\label{projpair_eq2} 
    \end{align}
    where, $P'(x) := \min\left\{P(x),e^\eps Q(x)\right\}$ and $Q'(x) := \min\left\{Q(x),e^\eps P(x)\right\}$.
\end{definition}

In the following lemma, we show that the truncated pair $(\tilde{P},\tilde{Q})$ corresponding to any $(\eps,\delta)$-DP pair of probability distributions $(P,Q)$, satisfies the following properties.

\begin{lemma}\label{lemma_trunc_prop}
    For any pair $(P,Q)$ of $(\eps,\delta)$-DP distributions, its truncated pair $(\tilde{P},\tilde{Q})$ satisfies the following,
\begin{enumerate}[label=(\roman*)]
    \item $\|\tilde{P}-P\|_1 \leq 2\delta$ and $\|\tilde{Q}-Q\|_1 \leq 2\delta.$
    \item $(\tilde{P},\tilde{Q})$ satisfies pure $\left(\eps+\log\frac{1}{(1-\delta)}\right)$-DP.
    \item if $\supp(P) \subseteq \supp(Q)$, then $\abs{D(P\|Q) -D(\tilde{P}\|\tilde{Q})} \leq 2 \delta \left( \eps + \log \frac{1}{1-\delta} +  \frac{2}{m} \right)$.
\end{enumerate}
 where $m = \min_{x \in \supp(\tilde{P})}\left\{\min\{\tilde{P}(x), P(x)\},\min\{\tilde{Q}(x), Q(x)\}\right\}.$
\end{lemma}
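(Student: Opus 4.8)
The plan is to work directly with the two componentwise-dominated sub-measures $P'(x)=\min\{P(x),e^\eps Q(x)\}$ and $Q'(x)=\min\{Q(x),e^\eps P(x)\}$ and their normalizing masses $c_P:=\sum_x P'(x)$, $c_Q:=\sum_x Q'(x)$, and to read off all three parts from two elementary observations. First, $P(x)-P'(x)=[P(x)-e^\eps Q(x)]_+$, so $c_P=1-E_{e^\eps}(P\|Q)$; since $(P,Q)$ is $(\eps,\delta)$-DP, taking $S=\{x:P(x)>e^\eps Q(x)\}$ in the DP condition gives $E_{e^\eps}(P\|Q)\le\delta$, hence $c_P\in[1-\delta,1]$ and, symmetrically, $c_Q\in[1-\delta,1]$. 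Second, a short case check shows that the truncation collapses the likelihood ratio into $[e^{-\eps},e^\eps]$, i.e. $P'(x)/Q'(x)$ equals $P(x)/Q(x)$ clipped to $[e^{-\eps},e^\eps]$. For part (i) I would split $\|\tilde P-P\|_1\le\|\tilde P-P'\|_1+\|P'-P\|_1$: the first term is $\sum_x P'(x)\,|1/c_P-1|=|1-c_P|=E_{e^\eps}(P\|Q)\le\delta$, and the second is $\sum_x[P(x)-e^\eps Q(x)]_+=E_{e^\eps}(P\|Q)\le\delta$, giving $\|\tilde P-P\|_1\le2\delta$ (and the same for $\tilde Q$). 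For part (ii), the clipping observation gives $P'(x)\le e^\eps Q'(x)$ and $Q'(x)\le e^\eps P'(x)$ pointwise, so $\tilde P(x)/\tilde Q(x)=(P'(x)/Q'(x))(c_Q/c_P)\le e^\eps/(1-\delta)$ and symmetrically, i.e. pure $(\eps+\log\tfrac1{1-\delta})$-DP.

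For part (iii), the opening step is to observe that $\supp(P)\subseteq\supp(Q)$ forces $\supp(\tilde P)=\supp(\tilde Q)=\supp(P)$, so $D(\tilde P\|\tilde Q)$ and $D(\tilde P\|Q)$ are finite, and on $\supp(P)$ all four of $P,\tilde P,Q,\tilde Q$ lie in $[m,1]$. I would then telescope
\[
D(P\|Q)-D(\tilde P\|\tilde Q)=\bigl[D(P\|Q)-D(\tilde P\|Q)\bigr]+\bigl[D(\tilde P\|Q)-D(\tilde P\|\tilde Q)\bigr].
\]
The first bracket I would write as $\sum_x[g(P(x))-g(\tilde P(x))]-\sum_x[P(x)-\tilde P(x)]\log Q(x)$ with $g(t)=t\log t$; on $\supp(P)$ one has $|g'(t)|=|1+\log t|\le1+\log\tfrac1m\le\tfrac1m$ and $|\log Q(x)|\le\log\tfrac1m\le\tfrac1m$, so the mean-value theorem together with $\|P-\tilde P\|_1\le2\delta$ from part (i) bounds this bracket by $\tfrac{4\delta}{m}$. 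The second bracket equals
\[
D(\tilde P\|Q)-D(\tilde P\|\tilde Q)=-\log c_Q+\sum_{x:\,Q(x)>e^\eps P(x)}\tilde P(x)\log\frac{e^\eps P(x)}{Q(x)};
\]
here I would bound $|\log c_Q|\le\log\tfrac1{1-\delta}$, apply $\log t\le t-1$ together with $\sum_{x:\,Q(x)>e^\eps P(x)}(Q(x)-e^\eps P(x))=E_{e^\eps}(Q\|P)\le\delta$, and use that on this set $\tilde P(x)$ carries the bounded-ratio estimate from part (ii), so that the bracket is controlled by a quantity of the form $2\delta(\eps+\log\tfrac1{1-\delta})$. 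Summing the two brackets gives the stated bound $2\delta\bigl(\eps+\log\tfrac1{1-\delta}+\tfrac2m\bigr)$.

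The main obstacle is the bookkeeping in part (iii): one must pick the telescoping and the elementary inequalities (mean-value estimate for $t\log t$, $\log t\le t-1$, the hockey-stick identities $\sum[P-e^\eps Q]_+\le\delta$, and the clipped likelihood ratios from part (ii)) so that the logarithmic error terms assemble into exactly the three contributions $2\delta\eps$, $2\delta\log\tfrac1{1-\delta}$, and $\tfrac{4\delta}{m}$, rather than the slightly cruder hybrids a plain triangle inequality produces; in particular the normalization corrections $\log\tfrac1{c_P},\log\tfrac1{c_Q}$ have to be absorbed against the hockey-stick deficits. Everything else — parts (i) and (ii), and the finiteness and support facts that open part (iii) — is routine once the two elementary observations above are in place.
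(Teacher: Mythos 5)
Parts (i) and (ii) of your proposal are correct and coincide with the paper's own argument (triangle inequality through the unnormalized $P'$, and the clipped-likelihood-ratio case check together with $c_P\ge 1-\delta$, $c_Q\le 1$).

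The gap is in part (iii), and it is in exactly the place you flagged as "bookkeeping". Your telescoping through $D(\tilde P\|Q)$ leaves you with the bracket $D(\tilde P\|Q)-D(\tilde P\|\tilde Q)=-\log c_Q+\sum_{x:\,Q(x)>e^\eps P(x)}\tilde P(x)\log\frac{e^\eps P(x)}{Q(x)}$, which you claim is controlled by $2\delta\bigl(\eps+\log\frac{1}{1-\delta}\bigr)$. This is false, not just for your estimates but for the quantity itself: the bare normalization $-\log c_Q$ can be as large as $\log\frac{1}{1-\delta}$, which is of order $\delta$, while the target $2\delta\bigl(\eps+\log\frac{1}{1-\delta}\bigr)$ is of order $\delta\eps+\delta^2$ and is therefore smaller when $\eps$ and $\delta$ are small; the hoped-for cancellation against the negative clipped sum need not occur, because the clipped sum is weighted by $\tilde P$ and can be negligible even when the hockey-stick mass $E_{e^\eps}(Q\|P)$ equals $\delta$. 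Concretely, take $\eps=0$, $\delta=0.1$, $P=(0.01,\,0.99)$, $Q=(0.11,\,0.89)$: the pair is $(0,0.1)$-DP, $\tilde P=\tilde Q=(1/90,\,89/90)$, and the bracket equals $D(\tilde P\|Q)\approx 0.079$, whereas $2\delta\bigl(\eps+\log\frac{1}{1-\delta}\bigr)\approx 0.021$. With the steps you list ($|\log c_Q|\le\log\frac{1}{1-\delta}$, $\log t\le t-1$, $E_{e^\eps}(Q\|P)\le\delta$, $\tilde P(x)=P(x)/c_P$ on the truncated set) what you actually obtain is the weaker total bound $\frac{4\delta}{m}+\log\frac{1}{1-\delta}+\frac{\delta}{(1-\delta)e^\eps}$, not the lemma as stated; and since the second bracket is genuinely $\Theta(\delta)$ in the worst case, no separate bounding of the two brackets can deliver the stated constant—you would have to use a combined analysis that lets the $\frac{4\delta}{m}$ budget absorb it.

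The paper avoids this by a different split: it writes $D(\tilde P\|\tilde Q)-D(P\|Q)=\sum_x\bigl(\tilde P(x)-P(x)\bigr)\log\frac{\tilde P(x)}{\tilde Q(x)}+\sum_x P(x)\bigl(\log\frac{\tilde P(x)}{\tilde Q(x)}-\log\frac{P(x)}{Q(x)}\bigr)$. In the first sum the large factor $\max_x\bigl|\log\frac{\tilde P(x)}{\tilde Q(x)}\bigr|\le\eps+\log\frac{1}{1-\delta}$ from part (ii) is always multiplied by $\|P-\tilde P\|_1\le 2\delta$, and in the second sum the normalizations appear only inside $\log\frac{\tilde P(x)}{P(x)}$ and $\log\frac{Q(x)}{\tilde Q(x)}$, which are handled by the mean value theorem against $1/m$ and absorbed into the $\frac{4\delta}{m}$ term. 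If you rework your part (iii) with this decomposition, your parts (i) and (ii) supply exactly the ingredients needed.
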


\begin{proof}
    See Appendix \ref{proof_lemma_trunc_prop} for the proof.
\end{proof}

Using Lemma \ref{lemma_trunc_prop} and the integral representation of relative entropy in terms of hockey-stick divergence (Fact \ref{fact_KL_div_integral_rep}), we now obtain an upper-bound on the relative entropy of the output distributions of any $(\eps,\delta)$-LDP classical channel (Markov kernel) in Theorem \ref{theorem_classical_kl_bound} below.

\begin{theorem}\label{theorem_classical_kl_bound}
   Let $\cK : \cX \to \cY$ be a $(\eps,\delta)$-LDP classical channel (Markov kernel). Further, for any pair of probability distributions $P_X$ and $Q_X$ over $\cX$, let $\cK(P_X)$ and $\cK(Q_X)$ be their respective output distribution with respect to $\cK$ such that $\supp(\cK(P_X))\subseteq \supp(\cK(Q_X))$. Then, 
   \begin{align*}
 D(\cK(P_X)\| \cK(Q_X))
    &\leq \frac{1}{2}\norm{P_X -Q_X}{1} \left(\eps \tanh\left(\frac{\eps}{2}\right)+ \delta\left(\frac{2\eps}{e^\eps+1} + \frac{e^\eps-1}{e^\eps} + \log\frac{1}{1-\delta} + \frac{\delta}{e^\eps}\right)\right)\nn\\
    &\quad + \delta \left(\frac{e^\eps}{1-\delta} + 2\log\frac{e^\eps}{1-\delta}  - \frac{1-\delta}{e^\eps} + 2 \left( \eps + \log \frac{1}{(1-\delta)} +  \frac{2}{m} \right)\right),
    \end{align*}
    where $m = \min_{y \in \supp( \cK(P_X) )}
    \left\{\min\{\Tilde{P}_Y(y), \cK(P_X)(y)\},
    \min\{\Tilde{Q}_Y(y), \cK(Q_X)(y)\}\right\}$ 
    and $(\Tilde{P}_Y,\Tilde{Q}_Y)$ is 
the truncated  pair with respect to $(\cK(P_X), \cK(Q_X))$.
\end{theorem}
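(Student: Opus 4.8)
The plan is to reduce everything to the output distributions and then distil a pure-DP pair by truncation. Write $P_Y := \cK(P_X)$ and $Q_Y := \cK(Q_X)$. Since $\cK$ is $(\eps,\delta)$-LDP, Lemma~\ref{LDP_channel_class} shows that $(P_Y,Q_Y)$ is an $(\eps,\delta)$-DP pair, and the hypothesis gives $\supp(P_Y)\subseteq\supp(Q_Y)$, so $D(P_Y\|Q_Y)$ is well-defined. Let $(\tilde{P}_Y,\tilde{Q}_Y)$ be the truncated pair with respect to $(P_Y,Q_Y)$ (Definition~\ref{projpair}). The three ingredients are: first, replace $D(P_Y\|Q_Y)$ by $D(\tilde{P}_Y\|\tilde{Q}_Y)$ at a cost controlled by Lemma~\ref{lemma_trunc_prop}; second, expand $D(\tilde{P}_Y\|\tilde{Q}_Y)$ using the integral representation of Fact~\ref{fact_classical_KL_div_integral_rep}, exploiting that the truncated pair is pure $\eps'$-DP with $\eps':=\eps+\log\frac{1}{1-\delta}$ (Lemma~\ref{lemma_trunc_prop}(ii)), so the hockey-stick integrand vanishes for $\gamma>e^{\eps'}$ and the integral runs over the \emph{finite} window $[1,e^{\eps'}]$; third, bound $E_\gamma(\tilde{P}_Y\|\tilde{Q}_Y)$ and $E_\gamma(\tilde{Q}_Y\|\tilde{P}_Y)$ in terms of $\tfrac12\norm{P_X-Q_X}{1}$ and $\delta$, and integrate.

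For the first ingredient, I would apply part~(iii) of Lemma~\ref{lemma_trunc_prop} (applicable since $\supp(P_Y)\subseteq\supp(Q_Y)$) to get $D(P_Y\|Q_Y)\le D(\tilde{P}_Y\|\tilde{Q}_Y)+2\delta\bigl(\eps+\log\frac{1}{1-\delta}+\tfrac{2}{m}\bigr)$, with $m$ exactly the quantity appearing in the statement. For the second, the integral representation reads $D(\tilde{P}_Y\|\tilde{Q}_Y)=\int_1^{\infty}\bigl(\tfrac{1}{\gamma}E_\gamma(\tilde{P}_Y\|\tilde{Q}_Y)+\tfrac{1}{\gamma^2}E_\gamma(\tilde{Q}_Y\|\tilde{P}_Y)\bigr)\,d\gamma$; pure $\eps'$-DP forces $\tilde{P}_Y\le e^{\eps'}\tilde{Q}_Y$ and $\tilde{Q}_Y\le e^{\eps'}\tilde{P}_Y$ pointwise, so both integrands vanish for $\gamma\ge e^{\eps'}$ and the range collapses to $[1,e^{\eps'}]$.

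For the third ingredient, I would first use Fact~\ref{fact_hockey_approx} together with $\norm{\tilde{P}_Y-P_Y}{1}\le2\delta$ and $\norm{\tilde{Q}_Y-Q_Y}{1}\le2\delta$ (Lemma~\ref{lemma_trunc_prop}(i)) to obtain $E_\gamma(\tilde{P}_Y\|\tilde{Q}_Y)\le E_\gamma(\cK(P_X)\|\cK(Q_X))+\delta(1+\gamma)$ and the analogous bound for the reverse term with $P_X,Q_X$ swapped. Then, for $\gamma\in[1,e^\eps]$ I would invoke the contraction bound of Corollary~\ref{corr_contrac_classical_hockey_stick}, $E_\gamma(\cK(P_X)\|\cK(Q_X))\le\eta^{(c),\eps,\delta}_{E_\gamma}\,E_\gamma(P_X\|Q_X)$ with $\eta^{(c),\eps,\delta}_{E_\gamma}\le\frac{e^\eps-\gamma+\delta(\gamma+1)}{e^\eps+1}$, and use monotonicity of $E_\gamma$ in $\gamma$ to bound $E_\gamma(P_X\|Q_X)\le E_1(P_X\|Q_X)=\tfrac12\norm{P_X-Q_X}{1}$; for $\gamma\in(e^\eps,e^{\eps'}]$ I would instead use $E_\gamma(\cK(P_X)\|\cK(Q_X))\le E_{e^\eps}(\cK(P_X)\|\cK(Q_X))\le\delta$, which holds because $(\cK(P_X),\cK(Q_X))$ is $(\eps,\delta)$-DP. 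The reverse terms are handled identically, using symmetry of the $(\eps,\delta)$-DP condition. Substituting these pointwise bounds into the integral and carrying out the elementary integrals $\int\tfrac{1}{\gamma}\,d\gamma=\log\gamma$, $\int\tfrac{1}{\gamma^2}\,d\gamma=-\tfrac{1}{\gamma}$ over $[1,e^\eps]$ and $(e^\eps,e^{\eps'}]$, together with $e^{\eps'}=e^\eps/(1-\delta)$ and $e^{-\eps'}=(1-\delta)/e^\eps$, I expect the stated inequality: the $\tfrac{e^\eps-\gamma}{e^\eps+1}$ piece integrates to exactly $\eps\tanh(\tfrac{\eps}{2})$, the $\delta$-in-$\eta$ piece to $\delta\bigl(\tfrac{2\eps}{e^\eps+1}+\tfrac{e^\eps-1}{e^\eps}\bigr)$, and the tail $\int_{e^\eps}^{e^{\eps'}}\delta(\tfrac{1}{\gamma}+\tfrac{1}{\gamma^2})\,d\gamma$ to $\delta\bigl(\log\frac{1}{1-\delta}+\tfrac{\delta}{e^\eps}\bigr)$ — these four terms multiply $\tfrac12\norm{P_X-Q_X}{1}$ — while the additive $\delta(1+\gamma)$ terms integrate to $\delta\bigl(\tfrac{e^\eps}{1-\delta}+2\log\frac{e^\eps}{1-\delta}-\tfrac{1-\delta}{e^\eps}\bigr)$, and adding the first-ingredient correction $2\delta(\eps+\log\frac{1}{1-\delta}+\tfrac{2}{m})$ gives the remaining constant.

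The main obstacle is conceptual rather than computational: for $\delta>0$ the contraction coefficient $\eta^{(c),\eps,\delta}_{E_\gamma}$ is bounded below by $\delta$ for every $\gamma\ge1$ (Lemma~\ref{lemma_contrac_hockey_stick}), so the naive integral representation of $D(\cK(P_X)\|\cK(Q_X))$ itself need not converge; the truncation step is precisely what produces a pure $\eps'$-DP pair whose hockey-stick divergence is ``supported'' on the bounded window $[1,e^{\eps'}]$, making the integral representation usable. Once that is in place, the only remaining care is bookkeeping the numerous $\delta$-dependent terms through the change of variables, which is routine.
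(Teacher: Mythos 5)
Your proposal follows essentially the same route as the paper's proof: truncate $(\cK(P_X),\cK(Q_X))$ via Definition~\ref{projpair}, pay the continuity cost from Lemma~\ref{lemma_trunc_prop}(iii), expand $D(\tilde{P}_Y\|\tilde{Q}_Y)$ with Fact~\ref{fact_classical_KL_div_integral_rep} over the finite window $[1,e^{\eps}/(1-\delta)]$ guaranteed by Lemma~\ref{lemma_trunc_prop}(ii), pass back to the untruncated outputs with Fact~\ref{fact_hockey_approx} and Lemma~\ref{lemma_trunc_prop}(i), and apply Corollary~\ref{corr_contrac_classical_hockey_stick} on $[1,e^\eps]$ together with $E_\gamma(P_X\|Q_X)\le E_1(P_X\|Q_X)=\tfrac12\norm{P_X-Q_X}{1}$; this is exactly the paper's argument. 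The one detail to fix is the tail interval $(e^\eps, e^\eps/(1-\delta)]$: bounding $E_\gamma(\cK(P_X)\|\cK(Q_X))\le E_{e^\eps}(\cK(P_X)\|\cK(Q_X))\le\delta$ directly from the DP property is valid but loses the factor $\tfrac12\norm{P_X-Q_X}{1}$, so your final tally (which multiplies the tail term $\delta(\log\tfrac{1}{1-\delta}+\tfrac{\delta}{e^\eps})$ by that norm, as in the theorem) does not follow from the step you describe; the paper instead applies the contraction bound $\eta^{(c),\eps,\delta}_{E_\gamma}\le\delta$ from Corollary~\ref{corr_contrac_classical_hockey_stick} on that range, which retains the factor — a trivial repair since you already invoke that corollary.
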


\begin{proof}
See Appendix \ref{proof_theorem_classical_kl_bound} for the proof.
\end{proof}

Alternatively, in the theorem below, we obtain a different upper-bound on 
$D(\cK(P_X) \| \cK(Q_X))$ by directly applying the continuity bound for relative entropy from $(ii)$ of Lemma \ref{lemma_trunc_prop} and substituting $\eps \leftarrow \left(\eps + \log\frac{1}{1-\delta}\right)$ in Corollary \ref{corr_QLDP_Divs_bound}.

\begin{theorem}
\label{Genh}
     Let $\cK : \cX \to \cY$ be a $(\eps,\delta)$-LDP classical channel (Markov kernel). Further, for any pair of probability distributions $P_X$ and $Q_X$ over $\cX$, let $\cK(P_X)$ and $\cK(Q_X)$ be their respective output distribution with respect to $\cK$ such that $\supp( \cK(P_X) )\subseteq \supp( \cK(Q_X))$. Then, 
\begin{align*}
    D(\cK(P_X) \| \cK(Q_X)) &\leq \eps'\tanh\left(\frac{\eps'}{2}\right)+ 2 \delta \left( \eps' +  \frac{2}{m} \right),
\end{align*}
where  $\eps' = \left(\eps+\log\frac{1}{1-\delta}\right)$ and $m = \min_{y \in \supp( \cK(P_X))}\left\{\min\{\Tilde{P}_Y(y), \cK(P_X) (y)\},\min\{\Tilde{Q}_Y(y), \cK(Q_X)(y)\}\right\}$ 
and $(\Tilde{P}_Y,\Tilde{Q}_Y)$ is 
the truncated  pair with respect to $(\cK(P_X), \cK(Q_X))$.
\end{theorem}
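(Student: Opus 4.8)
The plan is to combine two ingredients already established earlier in the paper: the support-consistent truncation of Definition~\ref{projpair} together with its properties in Lemma~\ref{lemma_trunc_prop}, and the pure-DP relative-entropy bound from Corollary~\ref{corr_QLDP_Divs_bound}(i). First I would apply the channel $\cK$ and work entirely with the output pair $(\cK(P_X),\cK(Q_X))$, which by Lemma~\ref{LDP_channel_class} is an $(\eps,\delta)$-DP pair of distributions over $\cY$ (with $\supp(\cK(P_X))\subseteq\supp(\cK(Q_X))$ by hypothesis). Let $(\tilde P_Y,\tilde Q_Y)$ be its truncated pair as in Definition~\ref{projpair}. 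By Lemma~\ref{lemma_trunc_prop}(ii), $(\tilde P_Y,\tilde Q_Y)$ satisfies pure $\eps'$-DP with $\eps' = \eps + \log\frac{1}{1-\delta}$; promoting a distribution pair to a commuting qubit pair, Corollary~\ref{corr_QLDP_Divs_bound}(i) (with the substitution $\eps \leftarrow \eps'$) then gives
\[
D(\tilde P_Y \| \tilde Q_Y) \leq \eps' \tanh\!\left(\tfrac{\eps'}{2}\right).
\]

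Next I would transfer this bound back to the original output pair using the continuity-type estimate in Lemma~\ref{lemma_trunc_prop}(iii), applied to the output distributions: since $\supp(\cK(P_X))\subseteq\supp(\cK(Q_X))$,
\[
\bigl| D(\cK(P_X)\|\cK(Q_X)) - D(\tilde P_Y\|\tilde Q_Y) \bigr|
\leq 2\delta\left(\eps + \log\tfrac{1}{1-\delta} + \tfrac{2}{m}\right)
= 2\delta\left(\eps' + \tfrac{2}{m}\right),
\]
where $m$ is exactly the minimum defined in the statement of Theorem~\ref{Genh}. Adding the two displays yields
\[
D(\cK(P_X)\|\cK(Q_X)) \leq \eps'\tanh\!\left(\tfrac{\eps'}{2}\right) + 2\delta\left(\eps' + \tfrac{2}{m}\right),
\]
which is precisely the claimed bound. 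This completes the proof.

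The only subtlety — and the step I expect to require the most care — is making sure the hypotheses of Lemma~\ref{lemma_trunc_prop}(iii) genuinely apply at the level of the \emph{output} pair rather than the input pair: one needs that $(\cK(P_X),\cK(Q_X))$ is itself $(\eps,\delta)$-DP (which follows from Lemma~\ref{LDP_channel_class}, since $\cK$ is an $(\eps,\delta)$-LDP channel and hence contracts the DP constraint to the output marginals) and that the support inclusion $\supp(\cK(P_X))\subseteq\supp(\cK(Q_X))$ holds, which is assumed. Everything else is a direct substitution $\eps\mapsto\eps'$ into results already proved in the excerpt, with no further computation needed beyond bookkeeping the constant $m$ so that it matches the definition in the theorem statement.
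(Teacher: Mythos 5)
Your proposal is correct and follows essentially the same route the paper intends: the paper itself derives Theorem \ref{Genh} by noting that the truncated output pair is pure $\eps'$-DP (Lemma \ref{lemma_trunc_prop}(ii)), substituting $\eps \leftarrow \eps'$ in Corollary \ref{corr_QLDP_Divs_bound}(i), and transferring back via the continuity estimate of Lemma \ref{lemma_trunc_prop}(iii). Your extra care about applying the DP condition and support inclusion at the output level (via Lemma \ref{LDP_channel_class}) is exactly the right bookkeeping and matches the paper's argument.
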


\begin{remark}
    It is important to note that the bound obtained in Theorem \ref{theorem_classical_kl_bound} is tighter than the bound in Theorem \ref{Genh}. This is because the bound in Theorem \ref{Genh} is obtained by first approximating the $(\eps,\delta)$-DP pair with an $\eps'$-DP pair, where $\eps' = \eps + \log\frac{1}{1-\delta}$, and then applying the known bounds for pure DP. This approximation introduces looseness, particularly in the leading term, which becomes $\mathcal{O}((\eps+\delta)\tanh(\eps+\delta))$. In contrast, Theorem \ref{theorem_classical_kl_bound} uses a more direct approach via the integral representation of the KL divergence, resulting in a leading term of $\mathcal{O}(\eps\tanh(\eps))\norm{P_X-Q_X}{1}$. For small $\eps$, the bound in Theorem \ref{theorem_classical_kl_bound} is therefore significantly tighter.
\end{remark}

\section{Conclusion}

In this work, motivated by Blackwell's order of informativeness \cite{Blackwell51,BG54} between two statistical experiments, we define an order of informativeness in the quantum setting, which is based on the hypothesis testing divergence dominance. In the classical case, Blackwell's order implies that there always exists a Markov kernel that maps the more informative statistical experiment to the lesser informative statistical experiment. However, in the quantum case, a CP-TP map may not always exist \cite{Matsumoto2014} from a more informative pair of quantum states to the lesser informative pair of quantum states. Despite the lack of existence of such CP-TP maps, we still upper-bound the $f$-divergence of lesser informative pair of quantum states by that of a more informative pair of quantum states. This approach allows us to fully characterize $(\eps,\delta)$-QDP mechanisms.

 We use this ordering to study the problem of quantum privatized hypothesis testing and quantum privatized parameter estimation, where the trade-off between privacy and statistical utility is characterized via the quantum hypothesis testing divergence and the quantum Fisher information, respectively. We derived explicit expressions for the maximal Fisher information achievable under quantum $(\eps,\delta)$-DP constraints, thereby quantifying the fundamental limits of parameter estimation in the presence of quantum privacy mechanisms. 

Further, we derive near-optimal bounds on the contraction coefficient of $(\eps,\delta)$-DP CP-TP maps with respect to the hockey stick divergence. This allows us to prove bounds on the relative entropy between the output pair induced by any $(\eps,\delta)$-DP classical channels. 

\section*{Acknowledgments}

We are grateful to Henrik Wilming for pointing out an error in the earlier version of this manuscript. His comments have greatly helped us to make the technical contributions of this manuscript more rigorous. We thank Francesco Buscemi for pointing out several useful references and also thank Mark Wilde for his comments on the earlier version.

The work of NAW was supported by MTR/2022/000814. The work of AD was supported by TCS Research Scholar Program (Cycle 19). The work of MH was supported in part by the General R\&D Projects of 1+1+1 CUHK-CUHK(SZ)-GDST Joint Collaboration Fund (Grant No. GRDP2025-022), the Guangdong Provincial Quantum Science Strategic Initiative (Grant No. GDZX2505003), the Shenzhen International Quantum Academy (Grant No. SIQA2025KFKT07), and the National Natural Science Foundation of China under Grant 62171212.

\bibliographystyle{IEEEtran}
\bibliography{master}

\appendix

\subsection{Proof of Lemma \ref{fact_equi_ah_sd}}\label{proof_fact_equi_ah_sd}
As the condition \eqref{fact_equi_ah_sd_eq}, we are given that 
for all $\alpha \in [0,1]$,
\begin{equation}
D^{\alpha}_{H}{(\rho_1\|\rho_2)} \geq D^{\alpha}_{H}{(\sigma_1\|\sigma_2)}. \label{eq:assumption}
\end{equation}
From Definition \ref{d_hyp_func} of 
the quantum hypothesis testing divergence,
the condition \eqref{eq:assumption} is rewritten as
\begin{align}
\max_{\substack{0 \preceq \Lambda \preceq \bbI:\\
\tr[\Lambda \rho_1] \leq \alpha}} -\ln \tr[(\bbI - \Lambda)\rho_2] 
\geq \max_{\substack{0 \preceq \Delta \preceq \bbI:\\
\tr[\Delta \sigma_1] \leq \alpha}} -\ln \tr[(\bbI - \Delta)\sigma_2]. \label{eq:divergence_comparison}
\end{align}
for all $\alpha \in [0,1]$.
Exponentiating both sides and reversing the inequality direction (since $-\ln$ is a decreasing function), we get:
\begin{align}
\min_{\substack{0 \preceq \Lambda \preceq \bbI:\\
\tr[\Lambda \rho_1] \leq \alpha}} \tr[(\bbI - \Lambda)\rho_2] 
\leq \min_{\substack{0 \preceq \Delta \preceq \bbI:\\
\tr[\Delta \sigma_1] \leq \alpha}} \tr[(\bbI - \Delta)\sigma_2]. \label{eq:min_comparison}
\end{align}
Now, let us define $\beta := 1 - \alpha$. Then the constraint $\tr[\Lambda \rho_1] \leq \alpha$ becomes $\tr[(\bbI - \Lambda)\rho_1] \geq \beta$. Similarly for $\sigma_1$. Thus, we can rewrite the inequality as:
\begin{align}
\max_{\substack{0 \preceq \Lambda \preceq \bbI:\\
\tr[(\bbI - \Lambda)\rho_1] \geq \beta}} \tr[\Lambda \rho_2] 
\geq \max_{\substack{0 \preceq \Delta \preceq \bbI:\\
\tr[(\bbI - \Delta)\sigma_1] \geq \beta}} \tr[\Delta \sigma_2]. \label{eq:max_comparison}
\end{align}
This inequality holds for all $\beta \in [0,1]$.

Now fix any $\gamma \geq 0$, and consider the following optimization problem:
\begin{align}
\max_{0 \preceq \Delta \preceq \bbI} \tr[\Delta \sigma_1] + \gamma \tr[(\bbI - \Delta)\sigma_2].\label{BK2}
\end{align}
Let $\Delta^{\star}_{\gamma}$ be the optimal solution to the problem \eqref{BK2}. Define:
\[
\beta^{\star}_{\gamma} := \tr[\Delta^{\star}_{\gamma} \sigma_1], \quad
\delta^{\star}_{\gamma} := \tr[(\bbI - \Delta^{\star}_{\gamma}) \sigma_2].
\]
Then the optimal value is given as 
\begin{align}
\max_{0 \preceq \Delta \preceq \bbI} \tr[\Delta \sigma_1] + \gamma \tr[(\bbI - \Delta)\sigma_2]
=\beta^{\star}_{\gamma} + \gamma \delta^{\star}_{\gamma}.\label{BK3}
\end{align}
From inequality \eqref{eq:max_comparison}, we have:
\begin{align}
\max_{\substack{0 \preceq \Lambda \preceq \bbI:\\
\tr[\Lambda \rho_1] \geq \beta^{\star}_{\gamma}}} \tr[(\bbI - \Lambda)\rho_2] 
\geq \max_{\substack{0 \preceq \Delta \preceq \bbI:\\
\tr[\Delta \sigma_1] = \beta^{\star}_{\gamma}}} \tr[(\bbI - \Delta)\sigma_2]
\geq \delta^{\star}_{\gamma}. \label{eq:delta_bound}
\end{align}

Let $\Lambda^{\star}$ be the operator achieving the maximum on the left-hand side of \eqref{eq:delta_bound}. Then:
\begin{align}
\max_{0 \preceq \Lambda \preceq \bbI} \tr[\Lambda \rho_1] + \gamma \tr[(\bbI - \Lambda)\rho_2]
&\geq \tr[\Lambda^{\star} \rho_1] + \gamma \tr[(\bbI - \Lambda^{\star})\rho_2] \nn \\
&\overset{(a)}{\geq} \beta^{\star}_{\gamma} + \gamma \delta^{\star}_{\gamma} \nn\\
&\overset{(b)}{=} \max_{0 \preceq \Delta \preceq \bbI} \tr[\Delta \sigma_1] + \gamma \tr[(\bbI - \Delta)\sigma_2],\label{eq:final_bound}
\end{align}
where $(a)$ follows from \eqref{eq:delta_bound},
and $(b)$ follows from \eqref{BK3}.

Rewriting the left-hand and right-hand sides, we obtain:
\begin{align}
\max_{0 \preceq \Lambda \preceq \bbI} \tr[\Lambda (\rho_1 - \gamma \rho_2)]
\geq \max_{0 \preceq \Delta \preceq \bbI} \tr[\Delta (\sigma_1 - \gamma \sigma_2)],
\end{align}
which is equivalent to \eqref{BN1}.
Therefore, we find that \eqref{fact_equi_ah_sd_eq} implies \eqref{BN1} for every $\gamma \geq 0$. 
In addition, 
the combination of \eqref{BN1} and 
\eqref{def_P_div_integral_rep_eq} of Definition \ref{def_quant_renyi_hockey}
implies \eqref{BN2}.
Also, 
the combination of \eqref{BN1} and 
\eqref{fact_KL_div_integral_rep_eq} of Fact \ref{fact_KL_div_integral_rep}
implies \eqref{BN2B}.
This completes the proof of Lemma \ref{fact_equi_ah_sd}.\hfill\QED

\subsection{Proof of Lemma \ref{lemma_DP_t_hyp_rel}}\label{proof_lemma_DP_t_hyp_rel}
For a fixed $\alpha \in [0,1]$, when any $ 0 \preceq \Lambda \preceq \bbI$ satisfies the relation $\tr[\Lambda \rho] \leq \alpha$, then,
\begin{align}
        \tr[(\bbI - \Lambda)\sigma] &\overset{(a)}{\geq} e^{-\eps}\left(\tr[(\bbI - \Lambda) \rho] - \delta\right)\nn\\
        &\overset{(b)}{\geq} e^{-\eps}\left(1 -\alpha - \delta\right),\label{lemma_DP_t_hyp_rel_eq3}
        \end{align}
where $(a)$ follows from the first equation in \eqref{def_DP_state_eq} and $(b)$ follows from the fact that $\tr[\Lambda \rho] \leq \alpha$. 
Further, the second equation in \eqref{def_DP_state_eq} yields
another evaluation of $\tr[(\bbI - \Lambda)\sigma]$ as follows
\begin{align}
        \tr[(\bbI - \Lambda)\sigma] &= \tr[\sigma] - \tr[\Lambda \sigma]\nn\\
        &\geq 1 - \delta - e^{\eps}\left(\tr[\Lambda \rho]\right)\nn\\
        &\geq 1- \delta - e^\eps\alpha. \label{lemma_DP_t_hyp_rel_eq2}
    \end{align}
Further, Definition \ref{Q-def_min_type2_err} implies the relation $ \beta(\alpha |\rho\|\sigma)\geq 0 $ for any $\alpha \in [0,1]$. 
The combination of \eqref{lemma_DP_t_hyp_rel_eq3} and \eqref{lemma_DP_t_hyp_rel_eq2} yields
the relation $\beta(\alpha |\rho\|\sigma) \geq \max\{1- \delta - e^\eps\alpha,e^{-\eps}\left(1 -\alpha - \delta\right),0\}$. Further, using Fact \ref{fact_d_hyp_func}, we have,
    \begin{equation*}
        D^{\alpha}_{H}(\rho\|\sigma) \leq -\log \max\{1- \delta - e^\eps\alpha,e^{-\eps}\left(1 -\alpha - \delta\right),0\}.
    \end{equation*}
    
    This completes the proof of Lemma \ref{lemma_DP_t_hyp_rel}.\hfill\QED

 \subsection{Proof of Lemma \ref{lemma_diff_priv_quant}}\label{proof_lemma_diff_priv_quant}
    For every $\alpha \in [0,1]$ and the states $\rho_{(\eps,\delta)}$ and $\sigma_{(\eps,\delta)}$, from Fact \ref{T_hyp_func_equi}, 
$D^{\alpha}_{H}(\rho_{(\eps,\delta)}\|\sigma_{(\eps,\delta)})$ 
and $\beta(\alpha|\rho_{(\eps,\delta)}\|\sigma_{(\eps,\delta)})$
are written as
\begin{align}
        D^{\alpha}_{H}(\rho_{(\eps,\delta)}\|\sigma_{(\eps,\delta)}) 
        &= -\log \beta(\alpha|\rho_{(\eps,\delta)}\|\sigma_{(\eps,\delta)})\label{lemma_diff_priv_quant_eq9} \\
\beta(\alpha|\rho_{(\eps,\delta)}\|\sigma_{(\eps,\delta)}) &=  \min_{\substack{0\preceq \Lambda\preceq \bbI :\\ \tr\left[\Lambda \rho_{(\eps,\delta)}\right] = \alpha}}\tr\left[(\bbI-\Lambda )\sigma_{(\eps,\delta)}\right]\label{lemma_diff_priv_quant_eq6}
\end{align}

Equation \eqref{lemma_diff_priv_quant_eq9} yields that proving Lemma \ref{lemma_diff_priv_quant} is equivalent to showing $\beta(\alpha|\rho_{(\eps,\delta)}\|\sigma_{(\eps,\delta)}) = f_{(\eps,\delta)}$. Further, from the definition of $f_{(\eps,\delta)}$ (see \eqref{fepdel}), we have the following,

\begin{equation}
    f_{(\eps,\delta)}(\alpha) = \begin{cases}
        1 - \delta - e^{\eps}\alpha, & \mbox{ if}~ \alpha \in [0,\frac{1-\delta}{e^\eps + 1}], \\
        e^{-\eps}(1 -\delta-\alpha), & \mbox{ if}~ \alpha \in [\frac{1-\delta}{e^\eps + 1},1-\delta],\\
        0 , & \mbox{otherwise}.
    \end{cases}\label{fepdel_c}
\end{equation}

In the following three steps, we will show that for all the three cases mentioned in  \eqref{fepdel_c}, we have $\beta(\alpha|\rho_{(\eps,\delta)}\|\sigma_{(\eps,\delta)}) = f_{(\eps,\delta)}(\alpha)$.

\begin{steps}

    \item For the case when $\alpha \in [0,\left.\frac{1-\delta}{e^\eps + 1}\right],$ consider a POVM $\{\Lambda,\bbI - \Lambda\}$ 
where 
\begin{equation}
   \Lambda = b \ketbra{10} + \ketbra{11},\label{POVM_1} 
\end{equation}
and $b \in [0,1]$ is such that $\alpha = \frac{b (1-\delta)}{e^\eps + 1}$.

\quad Then, 
\begin{align}
    &~~~\tr\left[\Lambda \rho_{(\eps,\delta)}\right]\nn\\ 
      &= \tr\left[\left( b \ketbra{10} + 1\ketbra{11} \right)\left( \delta \ketbra{00} + \frac{(1-\delta)e^\eps}{1+e^\eps}\ketbra{01} + \frac{1-\delta}{1+e^\eps}\ketbra{10}\right)\right]\nn\\
    &= \frac{b (1-\delta)}{e^\eps + 1}\nn\\
    &= \alpha, \nn
\end{align}

\quad Further, we have,
\begin{align}
    &~~\tr\left[(\bbI-\Lambda )\sigma_{(\eps,\delta)}\right]\nn\\
    &= 1 - \tr\left[\left(b \ketbra{10} + 1\ketbra{11} \right) \left(\frac{1-\delta}{1+e^\eps}\ketbra{01} + \frac{(1-\delta)e^\eps}{1+e^\eps}\ketbra{10} + \delta\ketbra{11}\right)\right]\nn\\
    &= 1 - \delta - e^\eps \frac{b (1-\delta)}{e^\eps + 1}\nn\\
    &= 1 -\delta -e^\eps\alpha\nn \\
    & = f_{(\eps,\delta)}(\alpha). \nn
\end{align}

\quad Since $(\rho_{(\eps, \delta)},\sigma_{(\eps, \delta)})\in  \mathfrak{D}_{(\eps,\delta)}$, \eqref{lemma_DP_t_hyp_rel_eq1} of Lemma \ref{lemma_DP_t_hyp_rel} implies that $\beta(\alpha|\rho_{(\eps, \delta)}\|\sigma_{(\eps, \delta)}) \geq f_{(\eps,\delta)}$. Therefore, the POVM obtained in \eqref{POVM_1} is optimal. Hence, for $\alpha \in [0,\left.\frac{1-\delta}{e^\eps + 1}\right],$ we have $\beta(\alpha|\rho_{(\eps,\delta)}\|\sigma_{(\eps,\delta)}) = f_{(\eps,\delta)}(\alpha).$

\item For the case when $\alpha \in \left[\frac{1-\delta}{e^\eps + 1}\right., 1 -\delta]$, consider a POVM $\{\Lambda,\bbI - \Lambda\}$ 
where 
\begin{equation}
 \mathbb{I} - \Lambda = \ketbra{00} + c \ketbra{01} ,\label{POVM_2} 
\end{equation}
and $c \in [0,1]$ is such that $\alpha = (1-\delta)\left(1 - \frac{e^\eps  c}{ 1 + e^\eps}\right)$.

\quad Then, 
\begin{align}
   &~~~\tr\left[\Lambda\rho_{(\eps,\delta)}\right] \nn\\
   &=1-\tr\left[(\mathbb{I}-\Lambda)\rho_{(\eps,\delta)}\right]\nn\\
    &= 1-\tr\left[\left(\ketbra{00} + c \ketbra{01}\right) \left(\delta \ketbra{00} + \frac{(1-\delta)e^\eps}{1+e^\eps}\ketbra{01} + \frac{1-\delta}{1+e^\eps}\ketbra{10}\right)\right]\nn\\
    &= 1-\delta - (1-\delta)\frac{e^\eps  c}{1+e^\eps }\nn\\
    &= (1-\delta)\left(1 - \frac{e^\eps  c}{1+e^\eps }\right)\nn\\
    &= \alpha.\nn
\end{align}

\quad Further, we have,
\begin{align}
    &~~\tr\left[(\mathbb{I}-\Lambda)\sigma_{(\eps,\delta)}\right]\nn\\
    &= \tr\left[\left(\ketbra{00} + c \ketbra{01}\right) \left(\frac{1-\delta}{1+e^\eps}\ketbra{01} + \frac{(1-\delta)e^\eps}{1+e^\eps}\ketbra{10} + \delta\ketbra{11}\right)\right]\nn\\
    &= c\frac{1-\delta}{1+e^\eps} \nn\\
    &= e^{-\eps}(1 -\delta)\left(\frac{e^\eps c}{1 + e^\eps}\right)\nn\\
    &= e^{-\eps} (1 -\delta - \alpha)\nn\\
    & =f_{(\eps,\delta)}(\alpha).\nn
\end{align}

\quad Since $(\rho_{(\eps, \delta)},\sigma_{(\eps, \delta)})\in  \mathfrak{D}_{(\eps,\delta)}$, \eqref{lemma_DP_t_hyp_rel_eq1} of Lemma \ref{lemma_DP_t_hyp_rel} implies that $\beta(\alpha|\rho_{(\eps, \delta)}\|\sigma_{(\eps, \delta)}) \geq f_{(\eps,\delta)}$. Therefore, the POVM obtained in \eqref{POVM_2} is optimal. Hence, for $\alpha \in [\left.\frac{1-\delta}{e^\eps + 1}, 1-\delta\right],$ we have $\beta(\alpha|\rho_{(\eps,\delta)}\|\sigma_{(\eps,\delta)}) = f_{(\eps,\delta)}(\alpha).$

    \item For the case when $\alpha \geq 1-\delta$, consider a POVM $\{\Lambda_{\alpha},\bbI-\Lambda_{\alpha}\}$, where $\Lambda_{\alpha} := \frac{\alpha+\delta-1}{\delta} \ketbra{00} + \ketbra{01} + \ketbra{10} + \ketbra{11}$. Then, 
    \begin{align}
        &~~~\tr\left[\Lambda_{a} \rho_{(\eps,\delta)}\right]\nn\\
        &= \tr\left[\left(\frac{\alpha+\delta-1}{\delta} \ketbra{00} + \ketbra{01} + \ketbra{10} + \ketbra{11}\right) \left(\delta \ketbra{00} + \frac{(1-\delta)e^\eps}{1+e^\eps}\ketbra{01}\right.\right.\nn\\
        &~~~\left.\left. + \frac{1-\delta}{1+e^\eps}\ketbra{10}\right)\right]\nn\\
        &= \alpha+\delta-1 + (1-\delta)\nn\\
        &= \alpha.\label{lemma_diff_priv_quant_eq10}
    \end{align}
    \quad Further, we have,
    \begin{align}
        &~~\tr\left[(\bbI-\Lambda_{\alpha} )\sigma_{(\eps,\delta)}\right]\nn\\
        & = 1 - \tr\left[\left(\frac{\alpha+\delta-1}{\delta}\ketbra{00} + \ketbra{01} + \ketbra{10} + \ketbra{11}\right) \left(\frac{1-\delta}{1+e^\eps}\ketbra{01} + \frac{(1-\delta)e^\eps}{1+e^\eps}\ketbra{10} \right.\right.\nn\\
    &~~~\left.\left. + \delta\ketbra{11}\right)\right]\nn\\
    &= 1 - 1\nn\\
    &= 0.\label{lemma_diff_priv_quant_eq11}
    \end{align}
    \quad  Therefore, for $\alpha \geq 1-\delta$, \eqref{lemma_diff_priv_quant_eq6} yields that 
$\beta(\alpha|\rho_{(\eps,\delta)}\|\sigma_{(\eps,\delta)}) = 0$. 
\end{steps}

This completes the proof of Lemma \ref{lemma_diff_priv_quant}.\hfill\QED

\subsection{Proof of Claim \ref{claim_hockey_stick_bound}}\label{proof_claim_hockey_stick_bound}
From Definition \ref{f_sym_hyp_func} and Fact \ref{Fact_hockey_equiv_form}, for any $\gamma \geq 1$, we have
\begin{align}
    E_{\gamma}(\rho_{(\eps,\delta)}\|\sigma_{(\eps,\delta)}) &= \tr\left[\left(\rho_{(\eps,\delta)} - \gamma \sigma_{(\eps,\delta)}\right)_{+}\right].
\end{align}

The states $\rho_{(\eps,\delta)}$ and $\sigma_{(\eps,\delta)}$ are diagonal in the standard basis $\{\ket{00}, \ket{01}, \ket{10}, \ket{11}\}$. The operator $\rho_{(\eps,\delta)} - \gamma \sigma_{(\eps,\delta)}$ is also diagonal, with the following diagonal entries,
\begin{itemize}
    \item For $\ketbra{00}$: $\delta$,
    \item For $\ketbra{01}$: $\frac{(1-\delta)e^\eps}{1+e^\eps} - \gamma \frac{1-\delta}{1+e^\eps} = \frac{1-\delta}{1+e^\eps}(e^\eps - \gamma)$,
    \item For $\ketbra{10}$: $\frac{1-\delta}{1+e^\eps} - \gamma \frac{(1-\delta)e^\eps}{1+e^\eps} = \frac{1-\delta}{1+e^\eps}(1 - \gamma e^\eps)$,
    \item For $\ketbra{11}$: $-\gamma\delta$.
\end{itemize}

The trace of the positive part is the sum of the positive eigenvalues. Since $\gamma \geq 1$ and $\delta \in [0,1)$, $\delta \geq 0$ and $-\gamma\delta \leq 0$. Also, since $\eps \geq 0$, $e^\eps \geq 1$, so $1 - \gamma e^\eps \leq 1 - e^\eps \leq 0$.
Thus, only the first two terms can be positive.
\begin{equation}
    E_{\gamma}(\rho_{(\eps,\delta)}\|\sigma_{(\eps,\delta)}) = [\delta]_{+} + \left[\frac{1-\delta}{1+e^\eps}(e^\eps - \gamma)\right]_{+}.
\end{equation}

We consider two cases for $\gamma$:
\begin{enumerate}
    \item \textbf{Case 1: $1 \leq \gamma \leq e^\eps$}. In this case, $e^\eps - \gamma \geq 0$.
    \begin{align*}
        E_{\gamma}(\rho_{(\eps,\delta)}\|\sigma_{(\eps,\delta)}) &= \delta + \frac{1-\delta}{1+e^\eps}(e^\eps - \gamma) \\
        &= \frac{\delta(1+e^\eps) + (1-\delta)(e^\eps - \gamma)}{1+e^\eps} \\
        &= \frac{\delta + \delta e^\eps + e^\eps - \gamma - \delta e^\eps + \delta\gamma}{1+e^\eps} \\
        &= \frac{e^\eps - \gamma + \delta(1+\gamma)}{1+e^\eps}.
    \end{align*}
    \item \textbf{Case 2: $\gamma > e^\eps$}. In this case, $e^\eps - \gamma < 0$.
    \begin{align*}
        E_{\gamma}(\rho_{(\eps,\delta)}\|\sigma_{(\eps,\delta)}) &= \delta + 0 = \delta.
    \end{align*}
\end{enumerate}

This completes the proof of Claim \ref{claim_hockey_stick_bound}. \hfill\QED

\subsection{Proof of Lemma \ref{lemma_trunc_prop}}\label{proof_lemma_trunc_prop}
We prove each part in turn.

\textbf{ Proof of (i):} The definition of $(\eps,\delta)$-DP distributions yields that $\delta := \max\{E_{e^\eps}(P\|Q), E_{e^\eps}(Q\|P)\}$. Now proof of $(1)$ follows directly definition of $\tilde{P}$ and $\tilde{Q}$ mentioned in \cref{projpair_eq1,projpair_eq2}. Further, observe that for each $x \in \cX$, $P'(x) = P(x) - \abs{P(x)-e^\eps Q(x)}_{+}$ and $Q'(x) = Q(x) - \abs{Q(x)-e^\eps P(x)}_{+}$. Thus, we can upper-bound the $L_1$ distance between $P$ and $P'$ as follows,

\begin{align*}
    &~ \norm{P - P'}{1} 
    = \sum_{x \in \cX} \abs{P(x) - P'(x)}\\
    &= \sum_{x \in \cX} \abs{P(x) - \left(P(x) - \abs{P(x)-e^\eps Q(x)}_{+}\right)}\\
    &= \sum_{x \in \cX} \abs{P(x)-e^\eps Q(x)}_{+}
    = E_{e^\eps}(P\|Q) \leq \delta.
\end{align*}

Similarly, we can show that $\norm{Q - Q'}{1} \leq \delta$. Further, we denote $p := \sum_{x \in \cX}P'(x)$ and $q := \sum_{x \in \cX}Q'(x)$ and it follows that $p = 1 - E_{e^\eps}(P\|Q)$ and $q = 1 - E_{e^\eps}(Q\|P)$. Thus, we can upper-bound the $L_1$ distance between $P'$ and $\Tilde{P}$ as follows,

\begin{align*}
    &~ \norm{P' - \Tilde{P}}{1} 
    = \sum_{x \in \cX} \abs{P'(x) - \Tilde{P}(x)}\\
    &= \sum_{x \in \cX} \abs{P'(x) - \frac{P'(x)}{p}}
    = \sum_{x \in \cX} P'(x)\abs{1 - \frac{1}{p}}\\
    &\overset{(a)}{= p\abs{1 - \frac{1}{p}}}
    \overset{(b)}{=} (1-p)
     = E_{e^\eps}(P\|Q) \leq \delta,
\end{align*}
where $(a)$ follows from the definition of $p$ and $(b)$ follows from the fact that $p \leq 1$ as $p = 1 - E_{e^\eps}(P\|Q)$ and $E_{e^\eps}(P\|Q) \geq 0$. Similarly, we can show that $\norm{Q' - \Tilde{Q}}{1} \leq \delta$. Thus, using triangle inequality we have $\norm{P - \Tilde{P}}{1} \leq 2\delta$ and $\norm{Q - \Tilde{Q}}{1} \leq 2\delta$.
This completes the proof of $(i)$ of Lemma \ref{lemma_trunc_prop}.

\textbf{Proof of (ii):} For any $x \in \cX$, consider the following,
\begin{align}
    \frac{\tilde{P}(x)}{\tilde{Q}(x)} &\overset{(a)}{=} \frac{\min\{P(x),e^\eps Q(x)\}( 1 - E_{e^\eps}(Q\|P))}{\min\{Q(x),e^\eps P(x)\}  (1 - E_{e^\eps}(P\|Q))}\nn\\
    &\overset{(b)}{\leq} e^\eps \frac{( 1 - E_{e^\eps}(Q\|P))}{  (1 - E_{e^\eps}(P\|Q))}\nn\\
    &\overset{(c)}{\leq} e^\eps \frac{1}{1-\delta}\label{eq_max_div_bound},
\end{align}
where $(a)$ follows from the definition of $\Tilde{P}$ and $\tilde{Q}$ mentioned in \cref{projpair_eq1,projpair_eq2}, the validity of inequality $(b)$ can be verified from the following case studies,

\begin{itemize}
    \item Case 1: If $P(x) \leq e^\eps Q(x)$ and $Q(x) \leq e^\eps P(x)$, then $\frac{\min\{P(x),e^\eps Q(x)\}}{\min\{Q(x),e^\eps P(x)\}} = \frac{P(x)}{Q(x)} \leq e^\eps $.
    \item Case 2: If $P(x) \leq e^\eps Q(x)$ and $Q(x) > e^\eps P(x)$, then $\frac{\min\{P(x),e^\eps Q(x)\}}{\min\{Q(x),e^\eps P(x)\}} = \frac{P(x)}{e^\eps P(x)} = \frac{1}{e^\eps} \leq e^\eps$, as $\eps \geq 0$.
    \item Case 3: If $P(x) > e^\eps Q(x)$ and $Q(x) \leq e^\eps P(x)$, then $\frac{\min\{P(x),e^\eps Q(x)\}}{\min\{Q(x),e^\eps P(x)\}} = \frac{e^\eps Q(x)}{Q(x)} = e^\eps$.
    \item Case 4: If $P(x) > e^\eps Q(x)$ and $Q(x) > e^\eps P(x)$, then $ P(x) > e^\eps Q(x) > e^{2\eps} P(x)$, which implies $e^{2\eps} < 1$. But this is a contradiction since $\eps \geq 0$. Thus, this case is not possible.
\end{itemize}
 Further, inequality $(c)$ follows from the fact that $1 - E_{e^\eps}(P\|Q) \geq 1-\delta$ and $1 - E_{e^\eps}(Q\|P) \leq 1$. Thus, we have shown that the distributions $\Tilde{P}$ and $\Tilde{Q}$ satisfy the following,
\begin{align*}
    D_{\max}(\Tilde{P}\|\Tilde{Q}) = \log \max_{x \in \cX} \frac{\tilde{P}(x)}{\tilde{Q}(x)}
     \overset{(a)}{\leq} \eps + \log\frac{1}{1-\delta},\\
\end{align*}
where $(a)$ follows from \eqref{eq_max_div_bound}. Similarly, we can show that for any $x \in \cX$, $\frac{\tilde{Q}(x)}{\tilde{P}(x)} \leq e^\eps \frac{1}{1-\delta}$ and thus, we can show that $D_{\max}(\Tilde{Q}\|\Tilde{P}) \leq \eps + \log\frac{1}{1-\delta}$. This completes the proof of $(ii)$ of Lemma \ref{lemma_trunc_prop}..

{\textbf{Proof of (iii):}}
Before proceeding to the proof, we first observe that the definition of $\tilde{P}$ and $\tilde{Q}$ yields that $\supp(\tilde{P}) = \supp(\tilde{Q}) = \supp(P)$. We now consider the following,

{\allowdisplaybreaks\begin{align}
\left|D(\tilde{P}\|\tilde{Q}) - D(P\|Q)\right| & = \left|\sum_{x\in \supp(P)}\left(\tilde{P}(x)-P(x)\right)\log\frac{\tilde{P}(x)}{\tilde{Q}(x)} +\sum_{x\in\supp(P)}P(x)\left(\log \frac{\tilde{P}(x)}{\tilde{Q}(x)} -\log \frac{P(x)}{Q(x)}\right)\right|\nonumber\\
&\leq \sum_{x\in \supp(P)}\left|\left(\tilde{P}(x)-P(x)\right)\log\frac{\tilde{P}(x)}{\tilde{Q}(x)}\right| +\left|\sum_{x\in\supp(P)}P(x)\left(\log \frac{\tilde{P}(x)}{\tilde{Q}(x)} -\log \frac{P(x)}{Q(x)}\right)\right|\nonumber\\
&\overset{(a)}{\leq} \norm{P - \tilde{P}}{1}\max_{x \in \supp(\tilde{P})}\abs{\log\frac{\tilde{P}(x)}{\tilde{Q}(x)}} +\sum_{x\in\supp(P)}P(x)\left|\log \frac{\tilde{P}(x)}{\tilde{Q}(x)} -\log \frac{P(x)}{Q(x)}\right|\nonumber\\
\label{u1}
& \overset{(b)}{\leq} 2\delta \left(\eps+\log\frac{1}{\delta}\right) +  \sum_{x\in\mathcal{X}}P(x)\left| \log \frac{\tilde{P}(x)}{P(x)}\right|+\sum_{x\in\mathcal{X}}P(x) \left|\log \frac{Q(x)}{\tilde{Q}(x)}\right|,
\end{align}}
where $(a)$ follows from \eqref{Holder_fact_classical_p1_qinf} of Fact \ref{holder_classic} and $(b)$ follows from $(i)$ and $(ii)$ of Lemma \ref{lemma_trunc_prop}. We now analyze the second term in the RHS of \eqref{u1}.

From the mean value theorem it yields that for every $x \in \supp(\tilde{P}(x)),$

\begin{align}
\label{u4}
|\log(\tilde{P}(x)) - \log (P(x))| & \leq \frac{|\tilde{P}(x)-P(x)|}{m},\\
\label{u5}
|\log(\tilde{Q}(x)) - \log (Q(x))| & \leq \frac{|\tilde{Q}(x)-Q(x)|}{m}.
\end{align}
Thus, the second term in the RHS of \eqref{u1} is upper-bounded by $\frac{4\delta}{m}.$ This completes the proof of $(iii)$ of Lemma \ref{lemma_trunc_prop}.\hfill\QED

\subsection{Proof of Theorem \ref{theorem_classical_kl_bound}}\label{proof_theorem_classical_kl_bound}
For the pair $(\Tilde{P}_Y, \Tilde{Q}_Y)$ of truncated distributions with respect to $({\cK}(P_X), {\cK}(Q_X))$, the following holds from $(iii)$ of Lemma \ref{lemma_trunc_prop},
\begin{align}
    &~ D({\cK}(P_X)\|{\cK}(Q_X)) \leq D(\Tilde{P}_Y\|\Tilde{Q}_Y) + 2 \delta \left( \eps + \log \frac{1}{\delta} +  \frac{2}{m} \right),\label{classical_kl_p_eq1}\end{align}
where $m = \min_{y \in \supp({\cK}(P_X))}\left\{\min\{\Tilde{P}_Y(y), {\cK}(P_X)(y)\},\min\{\Tilde{Q}_Y(y), {\cK}(Q_X)(y)\}\right\}.$ 

It thus left us to upper-bound $D(\Tilde{P}_Y\|\Tilde{Q}_Y)$ as follows,

\begin{align}
    &~D(\Tilde{P}_Y\|\Tilde{Q}_Y)\nn\\
    &\overset{(a)}{=} \int_{1}^{\infty} \frac{1}{\gamma} E_{\gamma}(\Tilde{P}_Y\|\Tilde{Q}_Y) d\gamma + \int_{1}^{\infty} \frac{1}{\gamma^2}E_{\gamma}(\Tilde{Q}_Y\|\Tilde{P}_Y) d\gamma\nn\\
    &\overset{(b)}{=} \int_{1}^{e^{D_{\max}(\Tilde{P}_Y\|\Tilde{Q}_Y)}} \frac{1}{\gamma} E_{\gamma}(\Tilde{P}_Y\|\Tilde{Q}_Y) d\gamma + \int_{1}^{e^{D_{\max}(\Tilde{Q}_Y\|\Tilde{P}_Y)}} \frac{1}{\gamma^2}E_{\gamma}(\Tilde{Q}_Y\|\Tilde{P}_Y) d\gamma\nn\\
    &\overset{(c)}{\leq}  \int_{1}^{\frac{e^\eps}{1-\delta}} \frac{1}{\gamma} E_{\gamma}(\Tilde{P}_Y\|\Tilde{Q}_Y) d\gamma + \int_{1}^{\frac{e^\eps}{1-\delta}} \frac{1}{\gamma^2}E_{\gamma}(\Tilde{Q}_Y\|\Tilde{P}_Y) d\gamma\nn\\
    &\overset{(d)}{\leq}  \int_{1}^{\frac{e^\eps}{1-\delta}} \frac{1}{\gamma} \left(E_{\gamma}({\cK}(P_X)\|{\cK}(Q_X)) + \frac{1}{2} \norm{\Tilde{P}_Y - {\cK}(P_X)}{1} + \frac{\gamma}{2}\norm{\Tilde{Q}_Y - {\cK}(Q_X)}{1}\right) d\gamma\nn\\
    &\quad + \int_{1}^{\frac{e^\eps}{1-\delta}} \frac{1}{\gamma^2}\left(E_{\gamma}({\cK}(Q_X)\|{\cK}(P_X)) + \frac{1}{2} \norm{\Tilde{Q}_Y - {\cK}(Q_X)}{1} + \frac{\gamma}{2}\norm{\Tilde{P}_Y - {\cK}(P_X)}{1}\right) d\gamma\nn\\
    &\overset{(e)}{\leq} \int_{1}^{\frac{e^\eps}{1-\delta}} \frac{1}{\gamma} \left(E_{\gamma}({\cK}(P_X)\|{\cK}(Q_X)) + \delta(\gamma + 1)\right) d\gamma + \int_{1}^{\frac{e^\eps}{1-\delta}} \frac{1}{\gamma^2}\left(E_{\gamma}({\cK}(Q_X)\|{\cK}(P_X)) + \delta(\gamma + 1)\right) d\gamma\nn\\
    &= \int_{1}^{\frac{e^\eps}{1-\delta}} \frac{1}{\gamma}E_{\gamma}({\cK}(P_X)\|{\cK}(Q_X)) d\gamma + \int_{1}^{\frac{e^\eps}{1-\delta}} \frac{1}{\gamma^2}E_{\gamma}({\cK}(Q_X)\|{\cK}(P_X))  d\gamma + \delta \int_{1}^{\frac{e^\eps}{1-\delta}} \left(1 + \frac{2}{\gamma} +\frac{1}{\gamma^2}\right)d\gamma\nn\\
    &= \int_{1}^{\frac{e^\eps}{1-\delta}} \left(\frac{1}{\gamma}E_{\gamma}({\cK}(P_X)\|{\cK}(Q_X)) + \frac{1}{\gamma^2}E_{\gamma}({\cK}(Q_X)\|{\cK}(P_X))\right)  d\gamma + \delta \left(\frac{e^\eps}{1-\delta} + 2\log\frac{e^\eps}{1-\delta}  - \frac{1-\delta}{e^\eps} \right)\nn\\
    &= \int_{1}^{e^\eps} \left(\frac{1}{\gamma}E_{\gamma}({\cK}(P_X)\|{\cK}(Q_X)) + \frac{1}{\gamma^2}E_{\gamma}({\cK}(Q_X)\|{\cK}(P_X))\right)  d\gamma \nn\\
    &\quad+ \int_{e^\eps}^{\frac{e^\eps}{1-\delta}} \left(\frac{1}{\gamma}E_{\gamma}({\cK}(P_X)\|{\cK}(Q_X)) + \frac{1}{\gamma^2}E_{\gamma}({\cK}(Q_X)\|{\cK}(P_X))\right)  d\gamma\nn\\
    &\quad + \delta \left(\frac{e^\eps}{1-\delta} + 2\log\frac{e^\eps}{1-\delta}  - \frac{1-\delta}{e^\eps} \right)\nn\\
    &\overset{(f)}{\leq} \frac{1}{2}\norm{P_X -Q_X}{1} \left(\int_{1}^{e^\eps}  \frac{e^\eps + \delta + \gamma(\delta - 1)}{e^\eps + 1} \left(\frac{1}{\gamma} + \frac{1}{\gamma^2}\right) d\gamma + \int_{e^\eps}^{\frac{e^\eps}{1-\delta}}  \delta\left(\frac{1}{\gamma} + \frac{1}{\gamma^2}\right) d\gamma\right)\nn\\
    &\quad + \delta \left(\frac{e^\eps}{1-\delta} + 2\log\frac{e^\eps}{1-\delta}  - \frac{1-\delta}{e^\eps} \right)\nn\\
    &= \frac{1}{2}\norm{P_X -Q_X}{1} \left(\eps \tanh\left(\frac{\eps}{2}\right)+ \delta\left(\frac{2\eps}{e^\eps+1} + \frac{e^\eps-1}{e^\eps}\right) + \delta\left(\log\frac{1}{1-\delta} + \frac{\delta}{e^\eps}\right)\right)\nn\\
    &\quad + \delta \left(\frac{e^\eps}{1-\delta} + 2\log\frac{e^\eps}{1-\delta}  - \frac{1-\delta}{e^\eps} \right)\nn\\
    &= \frac{1}{2}\norm{P_X -Q_X}{1} \left(\eps \tanh\left(\frac{\eps}{2}\right)+ \delta\left(\frac{2\eps}{e^\eps+1} + \frac{e^\eps-1}{e^\eps} + \log\frac{1}{1-\delta} + \frac{\delta}{e^\eps}\right)\right)\nn\\
    &~~~+ \delta \left(\frac{e^\eps}{1-\delta} + 2\log\frac{e^\eps}{1-\delta}  - \frac{1-\delta}{e^\eps} \right),\label{lemma_DP_KL_classical_eq2}
\end{align}
where $(a)$ follows from the integral representation of relative entropy (see \cite[Eq. $428$]{Sason_2016}), $(b)$ follows from the fact that for all $\gamma \geq e^{D_{\max}(P\|Q)}$ $P \leq e^\gamma Q$ and thus $E_{\gamma}(P\|Q) = 0$, $(c)$ follows since from $(ii)$ of Lemma \ref{lemma_trunc_prop} we have $D_{\max}(\Tilde{P}_Y\|\Tilde{Q}_Y) \leq \eps + \log(1/(1-\delta))$ and $D_{\max}(\Tilde{Q}_Y\|\Tilde{P}_Y) \leq \eps + \log(1/(1-\delta))$, $(d)$ follows from Fact \ref{fact_hockey_approx}, $(e)$ follows from the definition of truncated distributions (see Definition \ref{projpair}) and $(f)$ follows from Corollary \ref{corr_contrac_classical_hockey_stick} since the channel $\cK$ is $(\eps,\delta)$-DP. 

Therefore, combining \eqref{classical_kl_p_eq1} and \eqref{lemma_DP_KL_classical_eq2}, we have the following,
\begin{align}    
&~ D({\cK}(P_X)\|{\cK}(Q_X)) \nn\\
    &\leq \frac{1}{2}\norm{P_X -Q_X}{1} \left(\eps \tanh\left(\frac{\eps}{2}\right)+ \delta\left(\frac{2\eps}{e^\eps+1} + \frac{e^\eps-1}{e^\eps} + \log\frac{1}{1-\delta} + \frac{\delta}{e^\eps}\right)\right)\nn\\
    &\quad + \delta \left(\frac{e^\eps}{1-\delta} + 2\log\frac{e^\eps}{1-\delta}  - \frac{1-\delta}{e^\eps} + 2 \left( \eps + \log \frac{1}{\delta} +  \frac{2}{m} \right)\right).\label{lemma_DP_KL_classical_eq3}
\end{align}

This completes the proof of Lemma \ref{theorem_classical_kl_bound}.\hfill\QED

\end{document}